\documentclass[11pt]{amsart}
\usepackage{amsmath,amsthm,amssymb}
\usepackage{relsize}
\usepackage[noadjust]{cite}
\usepackage{color}
\usepackage{graphicx}
\usepackage[dvipsnames]{xcolor}
\usepackage{multirow}
\usepackage{booktabs}
\usepackage{tikz}
\usetikzlibrary{decorations.pathreplacing}
\usepackage{pgfplots}
\pgfplotsset{compat=1.18}

\usepackage[colorlinks, urlcolor=blue, linkcolor=Purple, citecolor=red]{hyperref}
\usepackage[capitalize]{cleveref}

\usepackage{setspace}
\usepackage{enumitem}
\setitemize{itemsep=-1pt}
\setenumerate{itemsep=-1pt}

\usepackage[margin=2.4cm]{geometry}

\usepackage{stmaryrd}
\usepackage{longtable}

\usepackage{array}
\newcolumntype{x}[1]{>{\centering\arraybackslash\hspace{0pt}}p{#1}}

\theoremstyle{definition}
\newtheorem{theorem}{Theorem}[section]
\newtheorem{definition}[theorem]{{{Definition}}}
\newtheorem{example}[theorem]{{{Example}}}

\newtheorem{notation}[theorem]{{{Notation}}}
\newtheorem{remark}[theorem]{{{Remark}}}

\newtheorem{corollary}[theorem]{{{Corollary}}}%[theorem]
\newtheorem{proposition}[theorem]{{{Proposition}}}
\newtheorem{lemma}[theorem]{{{Lemma}}}

\newcommand{\F}{\mathbb{F}}

\newcommand{\C}{\mathcal{C}}
\newcommand{\mC}{\mathcal{C}}

\newcommand{\M}{\mathcal{M}}

\newcommand{\sH}{\supp}

\newcommand{\wt}{\textnormal{wt}}
\newcommand{\Fq}{\F_q}

\newcommand{\colsp}{\textnormal{colsp}}

\newcommand{\Fqh}{\mathbb{F}_{q^h}}
\newcommand{\Fqhn}{\mathbb{F}_{q^h}^n}

\DeclareMathOperator{\supp}{supp}

\DeclareMathOperator{\PG}{PG}

\DeclareMathOperator{\rowsp}{rowsp}

\DeclareMathOperator{\dd}{d}

\newcommand{\cC}{\mathcal{C}}

\title{Minimal additive codes and additive strong blocking sets}

\usepackage[foot]{amsaddr}
\author{Gianira N. Alfarano}
\author{Marine Le Meur}
\address{Universit\'e de Rennes, IRMAR, Campus de Beaulieu, F-35042 Rennes Cedex, France.}
\email{gianira-nicoletta.alfarano@univ-rennes.fr, marine.le-meur@univ-rennes.fr}

\begin{document}

\begin{abstract}
  Additive codes over $\F_{q^h}$ generalize linear codes by relaxing linearity over the alphabet while retaining linearity over the subfield $\Fq$. In this paper, we introduce minimal additive codes and we initiate their study from a geometric perspective. We define the concept of additive strong blocking sets, a class of $h$-projective systems whose union forms a strong blocking set. We establish a one-to-one correspondence between equivalence classes of nondegenerate minimal additive codes and equivalence classes of additive strong blocking sets. We also compare this framework with the theory of outer strong blocking sets, showing that the latter arises as a special case. Finally, we provide constructions and existence results for minimal additive codes, and derive upper, lower, and asymptotic bounds on their minimum length.
\end{abstract}

\maketitle

\noindent\textbf{Keywords:} additive codes; minimal additive codes; strong blocking sets; projective systems.

\noindent\textbf{MSC (2020):} 94B27, 94B05, 51E20, 51E21

\tableofcontents

\medskip

\section{Introduction}
A linear code $\mC\subseteq \F_q^n$ is a $k$-dimensional subspace of the $n$-dimensional vector space over the finite field $\Fq$, endowed with the Hamming distance. The
minimum distance $d$ of $\mC$ is the least distance between two distinct codewords. It is well-known that a nondegenerate linear code can be described
geometrically by a \emph{projective system}, that is, a multiset of $n$ points in the projective space $\PG(k-1,q)$, such that each hyperplane contains at most $n-d$ of them. This correspondence is one of the classical links between coding theory and
finite geometry: metric properties of the code are translated into incidence properties of points and hyperplanes; see for instance \cite{tsfasman2007algebraic}. 

Additive codes provide a natural generalization of linear codes. An additive code of length $n$ over~$\Fqh$ is an $\Fq$-subspace of $\Fqh^n$.
Thus, an additive code need not be linear over the alphabet $\Fqh$, but it still has an $\Fq$-linear structure. This weaker linearity is reflected in the
geometric viewpoint. Instead of points in $\PG(k-1,q)$, additive codes correspond
to collections of subspaces of $\PG(k-1,q)$ of dimension at most $h-1$ and the minimum distance is determined by the maximum number of these subspaces contained in a hyperplane. This
geometric approach has recently led to several results on the parameters of additive codes and to the discovery of new additive codes whose parameters outperform those of linear codes; see e.g.
\cite{kurz2024additive,d2026generalized,adriaensen2023additive,bartoli2025long}. In these works, particular attention has been given to \emph{faithful} additive codes, where the dimension of each associated space is exactly $h-1$.
Additive codes also arise naturally in the construction of quantum codes and this further motivated their investigation; see
\cite{dastbasteh2024new,dastbasteh2025polynomial,grassl2021algebraic}.

Minimal linear codes are another class of codes where the interaction between coding theory and finite geometry is particularly strong. They were originally motivated by applications to secret sharing schemes \cite{massey1993minimal},
and have since been studied for their combinatorial and geometric structure. In \cite{alfarano2022geometric} and \cite{tang2021full}, it was independently
proved that nondegenerate minimal linear codes are in one-to-one correspondence with projective systems whose points form a strong blocking set. This characterization has become a central tool in the study of minimal codes, leading to constructions, lower bounds, and asymptotic results; see for
example
\cite{alfarano2022three,heger2021short,bartoli2023small,alon2024strong, bishnoi2024blocking}.

In \cite{alfarano2024outer},  the  notions  of  \emph{outer  strong  blocking  sets}  and  \emph{outer  minimal  codes} have been introduced.  These  are  sets of points whose  field  reduction  is  a  strong  blocking  set  and  linear codes whose concatenation with minimal linear codes is minimal. We generalize these notions by weakening the field of linearity. 

In this paper, we first prove a total-weight formula
for additive codes in terms of the vectorial dimensions of the associated subspaces; see Theorem~\ref{thm:tot_weight}. This leads to a refined Plotkin-type bound which makes use of these dimensions (Theorem~\ref{thm:plotkin_ti}), and to an asymptotic version (Theorem~\ref{thm:asymptotic_plotkin_ti}). These results show that non-faithful additive codes may behave differently from codes obtained from a fixed alphabet of size $q^h$.
We also apply the total-weight formula to one-weight additive codes and obtain examples of codes which are not equivalent to any $\Fqh$-linear code; see Theorem~\ref{thm:simplex_add}.

We introduce the more general concepts of \emph{minimal additive codes} and \emph{additive strong blocking sets}.
We prove that equivalence classes of nondegenerate minimal additive codes are in one-to-one
correspondence with equivalence classes of additive strong blocking sets; see Theorem~\ref{thm:minimal_additive_sbs}. This
extends the classical correspondence between minimal linear codes and strong blocking sets, and it generalizes the theory of outer minimal codes and outer strong blocking sets; see Section~\ref{sec:Linear_case}. In this sense, additive strong blocking sets provide a common geometric framework for both minimal linear codes and
outer minimal codes.

After establishing the correspondence, we study the parameters of minimal additive codes. We introduce the parameter $m(k,q,h)$, the smallest length of a nondegenerate minimal additive code of
$\Fq$-dimension $k$ over $\Fqh$. We prove an existence result (Theorem~\ref{thm:existence}) and give geometric constructions from additive strong blocking sets (Proposition~\ref{prop:upper-subspaces}). Inspired by the theory of strong blocking sets, we also derive
lower bounds using the Jamison--Brouwer--Schrijver (\cite{jamison1977covering, brouwer1978blocking}) theorem for affine blocking sets, including a
lower bound on $m(k,q,h)$  in Theorem~\ref{thm:lower-geometric} and a lower bound on the minimum distance in Theorem~\ref{thm:distance-lower-bound}. Finally, combining the minimality constraint with the Plotkin-type bound, we obtain asymptotic lower bounds for $m(k,q,h)$ in Theorem~\ref{thm:asympt_rate} and show that the bound in Theorem~\ref{thm:lower-geometric} is not asymptotically tight for $h\geq 2$, leaving space for improvements.

The paper is organized as follows. In Section \ref{sec:background}, we provide the necessary background to the rest of the paper. In Section~\ref{sec:bounds_on_additive} we derive a total-weight bound and a Plotkin-type bound on additive codes and apply them to one-weight codes. In Section \ref{sec:min_codes}, we introduce the main concepts of minimal additive codes and additive strong blocking sets. We initiate their study, providing a one-to-one correspondence between them.  In Section \ref{sec:constructions_bounds}, we provide a first existence result and several bounds on the parameters of these codes. We conclude in Section \ref{sec:conclusion} with some future research directions.

\medskip

\section*{Acknowledgments}
This research is supported by the grants ANR-21-CE39-0009-BARRACUDA and ANR-24-CPJ1-0075-01. The authors are thankful to Martino Borello, Delphine Boucher and Alessandro Neri for fruitful discussions.

\medskip

\section{Background}\label{sec:background}

In this section, we recall some basic notions on linear and additive codes, and we fix the notation for the rest of the paper. We first review the classical correspondence between linear codes and projective systems, and then we recall its extension to additive codes. For more details on additive codes and their geometric interpretation, we refer to \cite{ball2020additive, kurz2024additive, ball2025additive, bartoli2025long}.

\begin{notation}\label{not:initial}
Let $q$ be a prime power, $\F_q$ be the finite field of order $q$ and $\F_{q^h}$ be the extension of $\F_q$ of degree $h$. The space $\F_{q^h}^n$ is the $n$-dimensional vector space over $\F_{q^h}$.  All vectors are denoted as row vectors, unless otherwise stated. We write $[n]:=\{1,\ldots,n\}$. Let $\mathcal{B}= \{\alpha_1,\ldots,\alpha_h\}$ be a basis of $\F_{q^h}$ over $\mathbb{F}_q$ and denote $\boldsymbol{\alpha}=(\alpha_1,\ldots, \alpha_h)^\top$. We denote by $\mathcal{M}_{r,s}(\F_q)$ the space of $r\times s$ matrices with entries in $\Fq$. 
For every $0\leq\ell\leq n$, the number of $\ell$-dimensional subspaces of $\F_q^n$ is denoted by the Gaussian binomial coefficient
$$\binom{n}{\ell}_q=\frac{(q^n-1)(q^{n-1}-1)\cdots(q^{n-\ell+1}-1)}{(q-1)(q^2-1)\cdots(q^\ell-1)}.$$
Given a vector space $V\subseteq\F_q^k$, we denote by $\PG(V)$ the corresponding projective space. In particular, we denote by $\PG(k-1,q)$ the projective space with underlying vector space $\F_q^k$.
\end{notation}

Let $\F$ be any finite field. Given a vector $v\in\F^n$, its \textbf{(Hamming) support} is the set $\supp(v):=\{i\in[n]:v_i\neq 0\}$, and its \textbf{Hamming weight} is $\wt(v)=|\supp(v)|$. The Hamming weight induces the \textbf{Hamming distance} on $\F^n$, which is defined as $\dd(u,v)=\wt(u-v)$, for every $u,v\in\F^n$. In the rest of the paper we will consider vectors in $\F_q^n$ and in $\F_{q^h}^n$. In both cases, the notions of support and distance do not change. 

\subsection{Linear codes and projective systems}

An $[n,k]_q$ \textbf{linear code} $\mC$ is a $k$-dimensional $\F_q$-linear subspace of $\F_q^n$. The \textbf{minimum (Hamming) distance} of $\mC$ is defined as
$$\dd(\mC) := \min\{\dd(u,v) : u,v\in\mC, \; u\ne v\} = \min \{\wt(c) : c\in\mC\setminus\{0\}\}.$$
If $d=\dd(\mC)$ we say that $\mC$ is an $[n,k,d]_q$ linear code.
A \textbf{generator matrix} of $\mC$ is a matrix $G\in\mathcal{M}_{k,n}(\F_q)$ whose rows form an $\F_q$-basis of $\mC$. We say that $\mC$ is \textbf{nondegenerate} if there is no coordinate $i\in[n]$ such that $c_i=0$ for all $c\in\mC$.

Let $\mC$ be a nondegenerate $[n,k,d]_q$ linear code and let $G\in\mathcal{M}_{k,n}(\F_q)$ be a generator matrix for $\mC$. Writing 
$$G=\left(
\begin{array}{c|c|c}
g_1 & \cdots & g_n
\end{array}
\right),$$
where $g_i\in\F_q^k$ is the $i$-th column of $G$, each column defines a point $P_i$ of $\PG(k-1,q)$, since $\mC$ is nondegenerate. The \textbf{projective system associated with $\mC$} is the multiset of points $\mathcal{P}_G(\mC)=\{P_1,\ldots,P_n\}\subseteq \PG(k-1,q)$. Up to projective equivalence, this multiset does not depend on the choice of $G$, and we simply denote it by $\mathcal{P}(\mC)$.
It is well-known that  the minimum Hamming distance of $\mC$ can be recovered from $\mathcal{P}(\mC)$. Indeed, if $c=aG$ is a nonzero codeword, with $a\in\F_q^k\setminus\{0\}$, then the zero coordinates of $c$ correspond exactly to the points $P_i$ contained in the hyperplane $\langle a\rangle_{\Fq}^\perp$. Hence
$$\wt(c)=n-|\{i\in[n]:P_i\in \langle a\rangle_{\Fq}^\perp\}|,$$
and therefore
$$d=n-\max_{\substack{H\subseteq \PG(k-1,q),\\
H\textnormal{ hyperplane}}}
|\{i\in[n]:P_i\in H\}|.$$
Vice versa, given a set of points $\mathcal{S}\subseteq \PG(k-1,q)$, we construct the generator matrix $G_\mathcal{S}$ of a linear code  with columns given by the coordinates of the points in $\mathcal{S}$.
Thus, equivalence classes of nondegenerate $[n,k,d]_q$ linear codes are in one-to-one correspondence with equivalence classes of multisets of $n$ points in $\PG(k-1,q)$, not contained in a hyperplane, such that every hyperplane contains at most $n-d$ of them and some hyperplane contains exactly $n-d$ of them; see~\cite{tsfasman2007algebraic} for more details.

\subsection{Additive codes and $h$-projective systems} We explain here the geometric viewpoint on additive codes, that generalizes the concept of projective system. We first start with some basic definitions.

An $[n, k/h]_q^h$ \textbf{additive code} $\cC$ is a $k$-dimensional $\F_q$-linear subspace of~$\F_{q^h}^n$.  This standard notation highlights the formal ratio $k/h$, which may or may not be an integer, and it is equal to $\log_{q^h}|\C|$. 
Since the code is additive, we can define the \textbf{minimum (Hamming) distance} of $\cC$ as usual, namely
$$\dd(\cC) := \min \{\dd(u,v) : u,v \in\cC, \; u\ne v\} =\min \{\wt(c) : c\in\mC\setminus\{0\}\} .$$
When $d=\dd(\mC)$ is known, we say that $\cC$ is an $[n,k/h,d]_q^h$ additive code.
A matrix $G\in\mathcal{M}_{k,n}(\F_{q^h})$ is a \textbf{generator matrix} for $\mathcal{C} \subseteq \Fqhn$ if its rows are $\F_q$-linearly independent and span~$\mC$ over $\F_q$.  We say that $\mC$ is \textbf{nondegenerate} if there is no coordinate $i\in[n]$ such that $c_i=0$ for all $c\in\mC$.

\begin{definition} Two $[n, k/h,d]_q^h$ additive codes $\mC,\mathcal{D}$ are \textbf{equivalent} if one can be obtained from the other by permuting the coordinate positions, or by applying an $\Fq$-linear automorphism of $\Fqh$ in each coordinate position.
\end{definition}

Recently, additive codes have been studied in connection to finite geometry in  \cite{ball2025griesmer}. Since then, many other works used this approach to investigate several interesting constructions and bounds on the parameters; see e.g. \cite{ball2020additive, kurz2024additive, bartoli2025long, adriaensen2023additive, d2026generalized}.
Let $\mC$ be a $[n,k/h,d]_q^h$ code and let $G\in \M_{k,n}(\F_{q^h})$ be a generator matrix for $\mC$. We write  $$G=\left(
\begin{array}{c|c|c|c}
\textcolor{red}{g_1} & \textcolor{ForestGreen}{g_2} & \cdots & \textcolor{blue}{g_n}
\end{array}
\right),$$ 
where $g_i=(g_{i,1}, \ldots, g_{i,k})^\top\in\F_{q^h}^k$ is the $i$-th column of $G$.
Each $g_i$ is a vector with entries in $\F_{q^h}$, hence we can expand it over $\F_q$, with respect to the $\F_q$-basis~$\mathcal{B}$, i.e. $g_{i,j}=\sum_{t=1}^h b_{i,j}^t\alpha_t$, where $b_{i,j}^{t}\in\Fq$ for every $i,j,t$. 
This yields a matrix $\widetilde{G}\in \M_{k,nh}(\mathbb{F}_q)$ with entries in $\mathbb{F}_q$, whose
columns are grouped into~$n$ blocks of size $h$:

\begin{equation}\label{eq:Gtilde}
\widetilde{G} =
\begin{pmatrix}
\textcolor{red}{b^{1}_{1,1}} & \cdots & \textcolor{red}{b^{h}_{1,1}} &
\textcolor{ForestGreen}{b^{1}_{1,2}} & \cdots & \textcolor{ForestGreen}{b^{h}_{1,2}} & \cdots &
\textcolor{blue}{b^{1}_{1,n}} & \cdots & \textcolor{blue}{b^{h}_{1,n}} \\
\textcolor{red}{b^{1}_{2,1}} & \cdots & \textcolor{red}{b^{h}_{2,1}} &
\textcolor{ForestGreen}{b^{1}_{2,2}} & \cdots & \textcolor{ForestGreen}{b^{h}_{2,2}} & \cdots &
\textcolor{blue}{b^{1}_{2,n}} & \cdots & \textcolor{blue}{b^{h}_{2,n}} \\
\vdots & & \vdots & \vdots & & \vdots & & \vdots & & \vdots \\
\textcolor{red}{b^{1}_{k,1}} & \cdots & \textcolor{red}{b^{h}_{k,1}} &
\textcolor{ForestGreen}{b^{1}_{k,2}} & \cdots & \textcolor{ForestGreen}{b^{h}_{k,2}} & \cdots &
\textcolor{blue}{b^{1}_{k,n}} & \cdots & \textcolor{blue}{b^{h}_{k,n}}
\end{pmatrix}
\end{equation}

If $\mC$ is nondegenerate,  the columns of each block of $\widetilde{G}$ span a non-zero $\F_q$-subspace of $\F_q^k$ of dimension at most $h$.
We can regard each of these spaces as a projective subspace of $\PG(k-1,q)$ of dimension at most $h-1$. Let $\chi_G(\mC)=\{\pi_1,\ldots,\pi_n\}\subseteq \PG(k-1,q)$ be the multiset of such projective subspaces. In \cite{ball2020additive, ball2025additive}, the link between $\mC$ and $\chi_G(\mC)$ has been investigated, and, in particular, it is shown that the choice of $G$ does not change the multiset $\chi_G(\mC)$,
so we can simply use the notation $\chi(\mC)$, without specifying $G$. We recall the following formal definition from \cite{ball2025griesmer}.

\begin{definition}
     Let $\chi=\{\pi_1,...,\pi_n\}$ be a multiset of subspaces of $\PG(k-1,q)$. We say that $\chi$ is a $h$-$[n,k,d]_q$  \textbf{projective system} if $\dim(\pi_i)\leq h-1$ for every $i\in[n]$ and the spaces $\pi_i$ are not contained in the same hyperplane. Moreover, we have that
     $$n-d=\max_{\substack{H\subseteq \PG(k-1,q),\\
     H\textnormal{ hyperplane }}}|\{i\in [n]: \pi_i \subseteq H\}|,$$
     i.e. every hyperplane contains at most $n-d$ elements of $\chi$ and there exists a hyperplane that contains exactly $n-d$ elements of $\chi$. When the parameters are clear or not relevant we simply refer to $\chi$ as \textbf{$h$-projective system}. 
\end{definition}

\begin{definition}
    We say that two $h$-[$n,k,d]_q$ projective systems $\chi$ and $\psi$ are \textbf{equivalent} if there exists a $\varphi \in \text{PGL}_k(\Fq)$ such that $\varphi(\chi) = \psi$.
\end{definition} 

The following is the generalization to the additive case of the well-known correspondence between linear codes and projective systems; see~\cite{tsfasman2007algebraic}. 

\begin{theorem}[{\cite[Theorem 5]{ball2025griesmer}}]
    There is a one-to-one correspondence between equivalence classes of nondegenerate $[n,k/h,d]_q^h$ codes and equivalence classes of $h$-$[n,k,d]_q$ projective systems.
\end{theorem}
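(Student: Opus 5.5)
The plan is to build maps in both directions, check that each is compatible with the two equivalence relations, and check that the two maps are mutually inverse on classes.

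\textbf{From codes to systems.} Start from a nondegenerate $[n,k/h,d]_q^h$ code $\mC$ with generator matrix $G$. Expand $G$ with respect to $\mathcal{B}$ to get $\widetilde{G}$ as in \eqref{eq:Gtilde}, and let $\pi_i$ be the projective space of the column span of the $i$-th block. Nondegeneracy means each block is nonzero, so $\pi_i$ is a genuine subspace of dimension at most $h-1$.

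The key computation is the weight of a codeword. Take $c=aG$ with $a\in\F_q^k\setminus\{0\}$. Since $a$ has entries in $\F_q$, we have $c_i=\sum_t (a\cdot b_{i,\cdot}^t)\alpha_t$. Because $\mathcal{B}$ is a basis, $c_i=0$ if and only if $a$ is orthogonal to every column of the $i$-th block, that is, if and only if $\pi_i\subseteq\langle a\rangle^\perp$. Hence
$$\wt(c)=n-|\{i:\pi_i\subseteq \langle a\rangle^\perp\}|.$$
Two facts follow from this formula:
\begin{itemize}
\item Since the rows of $G$ are $\F_q$-independent, $c\neq 0$ for $a\neq 0$. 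So no hyperplane contains all the $\pi_i$.
\item Minimising $\wt(c)$ over $a$ gives exactly the condition that $n-d$ is the maximum number of $\pi_i$ contained in a hyperplane.
\end{itemize}
Together these show that $\chi_G(\mC)$ is an $h$-$[n,k,d]_q$ projective system.

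\textbf{Well-definedness on classes.} There are three things to check.
\begin{itemize}
\item Changing $G$ to $AG$ with $A\in \mathrm{GL}_k(\F_q)$ replaces $\widetilde{G}$ by $A\widetilde{G}$, so every $\pi_i$ is moved by the same projectivity.
\item Permuting coordinates permutes the multiset, which leaves it unchanged.
\item Applying an $\F_q$-linear automorphism $\sigma_i$ of $\F_{q^h}$ in coordinate $i$ right-multiplies the $i$-th block by an invertible $h\times h$ matrix over $\F_q$. This leaves its column span unchanged.
\end{itemize}
The same arguments show that changing the basis $\mathcal{B}$ does not matter.

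\textbf{From systems to codes.} Given an $h$-$[n,k,d]_q$ system $\{\pi_1,\dots,\pi_n\}$, choose for each $i$ a spanning set of $h$ vectors of the underlying subspace of $\pi_i$, repeating vectors or using zeros if its vector dimension is less than $h$. Put these as the $i$-th block of a $k\times nh$ matrix. Then collapse each block using $\boldsymbol{\alpha}$, so column $i$ becomes $\sum_t v_{i,t}\alpha_t\in\F_{q^h}^k$.

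The weight formula above, read in reverse, shows three things: the rows are $\F_q$-independent (no hyperplane contains every $\pi_i$), every coordinate is non-identically zero, and the minimum distance is $d$. Different spanning choices for the same $\pi_i$ differ by right multiplication by an $h\times h$ matrix. This needs care when that matrix is singular. However, the resulting coordinate maps $c_i$ are $\F_q$-linear maps $\F_q^k\to\F_{q^h}$ with the same kernel, namely the annihilator of $\pi_i$. Two such maps differ by an injective $\F_q$-linear map on the image, which extends to an automorphism of $\F_{q^h}$. So the two codes are equivalent, and projectively equivalent systems give codes related by a change of generator matrix.

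\textbf{Main obstacle.} The subtle step is the non-faithful case $\dim\pi_i<h-1$. There one must show that two codes with the same $\pi_i$ are equivalent via coordinatewise $\F_q$-automorphisms. I would argue it through the kernel description: for codes with the same generator parametrisation $a\mapsto c$, the $i$-th coordinate maps $\F_q^k\to\F_{q^h}$ have the same kernel. They therefore factor through the same quotient, which gives an $\F_q$-isomorphism between their images. Extending that isomorphism to all of $\F_{q^h}$ gives the required automorphism.

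Finally, the two constructions are mutually inverse on equivalence classes. Starting from a code, the constructed matrix reproduces $\widetilde{G}$. Starting from a system, the column spans recover the $\pi_i$.
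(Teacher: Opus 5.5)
Your proposal is correct and takes the same route as the paper's sketch. In one direction you use the weight identity $\wt(aG)=n-|\{i:\pi_i\subseteq\langle a\rangle^\perp\}|$, and in the other the construction $G=(G_1\boldsymbol{\alpha}\mid\cdots\mid G_n\boldsymbol{\alpha})$ with $\colsp(G_i)=\pi_i$. Your well-definedness checks (change of generator matrix, coordinatewise automorphisms, choice of spanning matrices) fill in what the paper only asserts. Your kernel argument for the non-faithful case works, but singular change-of-basis matrices are not actually an obstacle: two $k\times h$ matrices over $\F_q$ with the same column space always satisfy $G_i'=G_iM$ for some invertible $M\in\mathcal{M}_{h,h}(\F_q)$.
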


We briefly explain the correspondence. Let $\mathcal{C}$ be an $[n,k/h,d]_q^h$ additive code and $\chi(\mC)$ be defined as above. Then $\chi(\mC)$ is an $h$-$[n,k,d]_q$ projective system. Indeed, if $c \in \mathcal{C}\setminus\{0\}$ and $a \in \mathbb{F}_q^k\setminus\{0\}$ such that $c=aG$, the weight of $c$ can be obtained as
$$\wt(c)=\wt(aG)=n-|\{i\in[n] :  \pi_i \subseteq \langle a\rangle_{\F_q}^\perp\}|.$$
Therefore, the minimum distance of the code $\mathcal{C}$ is
$$d=n-\max_{\substack{H\subseteq \PG(k-1,q),\\
     H\textnormal{ hyperplane }}}|\{i\in [n]: \pi_i \subseteq H\}|.$$
In particular, every nonzero codeword $c=aG$ corresponds to the hyperplane $\langle a \rangle_{\F_q}^\perp \subseteq \PG(k-1,q)$. 

Vice versa, let $\{\pi_1,\ldots,\pi_n\}$ be an $h$-$[n,k,d]_q$ projective system.
For each~$\pi_i$, choose a $k\times h$ matrix~$G_i$ such that $\colsp(G_i)=\pi_i$. Construct $G =(G_1\boldsymbol{\alpha} \; | \; \cdots \; | G_n\boldsymbol{\alpha}  )\in\mathcal{M}_{k,n}(\F_{q^h})$, where $\boldsymbol{\alpha}$ is as in Notation \ref{not:initial}. Then $\mC=\rowsp_{\F_q}(G)$ is an $[n,k/h,d]_q^h$ additive code. The correspondence described above depends on the choices of generator matrices for $\mathcal{C}$ or bases for the $\pi_i$'s. However, it is well-defined at the level of equivalence classes: different choices yield equivalent codes and projective systems. Given a nondegenerate $[n,k/h,d]_q^h$ additive code $\mC$, we refer to $\chi(\mC)$ as \textbf{an $h$-projective system associated with $\mC$}.

%%%%%%%%%%%%%%%%%%%%%%%%%%%%%%%%%%%%%%%%%%%%%%%%%%%%55
\section{Bounds on the parameters of additive codes}\label{sec:bounds_on_additive}
In this section, we derive some elementary bounds on the parameters of additive codes. We first compute the total weight of a nondegenerate additive code in terms of the dimensions of the associated projective subspaces. We then obtain a refined Plotkin-type bound in the additive setting and
deduce its asymptotic form. As an application, we obtain conditions on the parameters of one-weight additive codes.

\subsection{Total weight bound}

Let $\mC$ be a nondegenerate $[n,k/h,d]_q^h$ additive code and let $\chi(\mC)=\{\pi_1,\ldots,\pi_n\}$ be an associated $h$-projective system in $\PG(k-1,q)$. For each $i\in[n]$, let $t_i$ denote the vectorial dimension of the subspace corresponding to $\pi_i$. Thus, the projective dimension of $\pi_i$ is $t_i-1$, and $1\le t_i\le \min\{h,k\}$.

 We first recall the definition of \emph{faithful} additive codes from \cite{ball2025griesmer}. This notion is particularly relevant and has been used for deriving bounds and constructions of additive codes; see e.g. \cite{kurz2024additive}.

 \begin{definition}
A nondegenerate additive code $\C$ is said to be \textbf{faithful} if $t_i=h$ for every $i\in[n]$, or equivalently, if every $\pi_i$ has projective dimension $h-1$.
\end{definition}

We obtain the following \emph{total weight bound} for a nondegenerate additive code.

\begin{theorem}\label{thm:tot_weight}
   Let $\mC$ be a nondegenerate $[n,k/h,d]_q^h$ additive code and let $\chi(\mC)=\{\pi_1,\ldots,\pi_n\}$ be an~$h$-projective system associated with $\mC$. Assume that $\pi_i$ has vectorial dimension $t_i$, for every $i\in[n]$. 
   Then, $$\sum_{c \in \mC} \wt(c)=n(q^k-1)-\sum_{i=1}^n (q^{k-t_i}-1).$$
\end{theorem}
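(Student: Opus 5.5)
We count the total weight by double counting pairs (codeword, coordinate) with nonzero entry, i.e.
$$\sum_{c\in\mC}\wt(c)=\sum_{i=1}^n \bigl|\{c\in\mC : c_i\neq 0\}\bigr|,$$
and we compute each summand separately. Fix a generator matrix $G$ of $\mC$. Since its rows are $\F_q$-linearly independent, the map $a\mapsto aG$ is an $\F_q$-linear bijection from $\F_q^k$ onto $\mC$, so we may count vectors $a\in\F_q^k$ instead of codewords.

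For a fixed coordinate $i$, write $g_i=G_i\boldsymbol{\alpha}$, where $G_i\in\M_{k,h}(\F_q)$ is the $i$-th block of $\widetilde{G}$ in \eqref{eq:Gtilde}. Then $(aG)_i=(aG_i)\boldsymbol{\alpha}$. Because $\alpha_1,\ldots,\alpha_h$ form an $\F_q$-basis of $\F_{q^h}$ and $aG_i\in\F_q^h$, we have $(aG)_i=0$ if and only if $aG_i=0$. This holds if and only if $a$ is orthogonal to every column of $G_i$, that is, $a\in \colsp(G_i)^\perp$. By definition, $\colsp(G_i)$ is the vector subspace underlying $\pi_i$, of dimension $t_i$. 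Hence the vanishing set $\{a : (aG)_i=0\}$ is an $\F_q$-subspace of dimension $k-t_i$ and has $q^{k-t_i}$ elements.

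It follows that exactly $q^k-q^{k-t_i}$ codewords are nonzero in coordinate $i$. Summing over $i$ gives
$$\sum_{c\in\mC}\wt(c)=\sum_{i=1}^n\bigl(q^k-q^{k-t_i}\bigr)=n(q^k-1)-\sum_{i=1}^n\bigl(q^{k-t_i}-1\bigr).$$
The only step needing care is the equivalence $(aG)_i=0\iff aG_i=0$, which relies on the $\F_q$-linear independence of $\mathcal{B}$. The result does not depend on the choice of $G$, since $t_i$ is an invariant of $\chi(\mC)$.
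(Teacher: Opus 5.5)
Your proof is correct and follows the paper's own route: double counting by coordinates, then showing $|\{c\in\mC : c_i=0\}|=q^{k-t_i}$ for each $i$. The only cosmetic difference is how that number is obtained. The paper counts the $\binom{k-t_i}{1}_q$ hyperplanes through $\pi_i$, multiplies by $q-1$ and adds the zero codeword, whereas you read it off directly as the size of the $(k-t_i)$-dimensional space $\colsp(G_i)^\perp$; you also make explicit the step $(aG)_i=0\iff aG_i=0$, which the paper takes from its background section.
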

\begin{proof}
Let $G$ be the generator matrix of $\C$ obtained by $\chi(\C)$. For every nonzero codeword $c\in \C$, there exists $u\in \F_q^k\setminus\{0\}$ such that $c=uG$. Moreover, the $i$-th coordinate of $c$ is zero if and only if $\pi_i\subseteq \langle u\rangle_{\Fq}^\perp$.
Therefore, the total weight of the code $\C$ verifies
     \begin{align*}
    \sum_{c \in \mC}\wt(c)=& \sum_{i=1}^n|\{c \in \mathcal{C} \; : \; c_i\neq 0\}| 
    =\sum_{i=1}^n(|\mathcal{C}|- |\{c \in \mathcal{C} \; : \; c_i=0\}|) 
    =&\sum_{i=1}^n \left( q^k- |\{c \in \mathcal{C} \; : \; c_i=0\}|\right).
\end{align*} 
We now compute $|\{c\in \C : c_i=0\}|$. Since the zero codeword has zero $i$-th coordinate, we have
$$ |\{c\in \C : c_i=0\}|=1+|\{u\in \F_q^k\setminus\{0\} \; : \; \pi_i\subseteq \langle u\rangle_{\Fq}^\perp\}|.
$$
The hyperplanes of $\PG(k-1,q)$ containing $\pi_i$ are exactly
$$
        \binom{k-t_i}{1}_q =\frac{q^{k-t_i}-1}{q-1}.
$$
Each such hyperplane is represented by $q-1$ nonzero vectors $u\in\F_q^k$, up to multiplication by elements of
$\F_q^*$. Hence,
$$
|\{u\in \F_q^k\setminus\{0\} \; : \;\pi_i\subseteq \langle u\rangle_{\Fq}^\perp\}|=(q-1)\binom{k-t_i}{1}_q =q^{k-t_i}-1.
$$
Thus, we have
$$
        |\{c\in \C : c_i=0\}|=q^{k-t_i},
$$
and, hence,
$$
        |\{c\in \C : c_i\neq 0\}|=q^k-q^{k-t_i}.
$$
Summing over $i=1,\ldots,n$, we obtain
$$
        \sum_{c\in \C}\wt(c) =\sum_{i=1}^n(q^k-q^{k-t_i})
        = n(q^k-1)-\sum_{i=1}^n(q^{k-t_i}-1).
$$
\end{proof}

As a consequence, we obtain the following bound on the parameters of an additive code.

\begin{corollary}
Let $\C$ be a nondegenerate $[n,k/h,d]^h_q$ additive code and let $\chi(\C)=\{\pi_1,\ldots,\pi_n\}$ be an $h$-projective system associated with $\mC$. Assume that each $\pi_i$ has vectorial dimension equal to $t_i$. Then
$$
        d\le n-\frac{1}{q^k-1}\sum_{i=1}^n(q^{k-t_i}-1).
$$
In particular, if $\C$ is faithful, then 
$$
        d\le \frac{nq^{k-h}(q^h-1)}{q^k-1}.
$$
\end{corollary}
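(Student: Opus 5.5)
This is an averaging argument built on Theorem~\ref{thm:tot_weight}. The code $\C$ has $q^k$ codewords. Since the generator matrix has $\F_q$-linearly independent rows, the map $u\mapsto uG$ is injective, so exactly $q^k-1$ of these codewords are nonzero, and each has weight at least $d$. This gives the lower bound
$$(q^k-1)\,d\le \sum_{c\in\C}\wt(c).$$
The left-hand side is controlled by the minimum distance and the right-hand side is computed exactly by the total weight formula. Comparing the two is the whole proof.

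Inserting the value of Theorem~\ref{thm:tot_weight} gives
$$(q^k-1)\,d\le n(q^k-1)-\sum_{i=1}^n(q^{k-t_i}-1).$$
Dividing by $q^k-1>0$ yields the first inequality of the statement. The zero codeword contributes nothing to the total weight, so it causes no problem in the averaging. Nondegeneracy is needed only to ensure that the $t_i$ are well defined with $t_i\ge 1$, and this is already assumed in the theorem we invoke.

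For the faithful case, substitute $t_i=h$ for all $i$. The bound becomes
$$d\le n-\frac{n(q^{k-h}-1)}{q^k-1}=\frac{n\bigl(q^k-1-q^{k-h}+1\bigr)}{q^k-1}=\frac{n\,q^{k-h}(q^h-1)}{q^k-1}.$$
This is the stated expression.

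The argument has no real obstacle, since the substance is already in Theorem~\ref{thm:tot_weight}. The only points to check are that exactly $q^k-1$ codewords are nonzero, which follows from the $\F_q$-dimension being $k$, and that the algebraic simplification in the faithful case is correct.
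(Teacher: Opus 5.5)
Your proposal is correct and follows the paper's own proof: bound the total weight from below by $(q^k-1)d$, substitute the exact value from Theorem~\ref{thm:tot_weight}, divide by $q^k-1$, and set $t_i=h$ in the faithful case. Your simplification in the faithful case is also correct.
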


\begin{proof}
Since every nonzero codeword of $\C$ has weight at least $d$, we have
$$
        (q^k-1)d \le \sum_{c\in \C}\wt(c).
$$
The result follows from Theorem~\ref{thm:tot_weight}. If $\C$ is faithful, then $t_i=h$ for every $i$, and therefore
$$
        d\le\frac{n(q^k-q^{k-h})}{q^k-1}=\frac{nq^{k-h}(q^h-1)}{q^k-1}.
$$
\end{proof}

\begin{remark}
Note that the total weight of a nondegenerate additive code depends only on $q$, the length $n$, the dimension $k$, and the dimensions of the subspaces $\pi_i$.
If $t_i=1$ for every $i$, then each $\pi_i$ is a point of
$\PG(k-1,q)$, and we recover the classical formula of the total weight for a nondegenerate linear $[n,k,d]_q$ code:
$$
        \sum_{c\in \C}\wt(c)=n(q^k-q^{k-1}).
$$
If $\C$ is faithful, then $t_i=h$ for every $i$. In this case the formula becomes
$$
        \sum_{c\in \C}\wt(c)=n(q^k-q^{k-h}).
$$
\end{remark}

\subsection{Plotkin bound}
We now derive a Plotkin-type bound for additive codes. Since the classical Plotkin bound holds for nonlinear codes, it 
can also be applied by treating an additive code as a subset of $\F_{q^h}$. In the additive setting, however, one can take into account the vectorial dimensions of the subspaces forming the $h$-projective system. This gives a sharper bound when the code is not faithful.

\begin{theorem}\label{thm:plotkin_ti}
Let $\C$ be a nondegenerate $[n,k/h,d]^h_q$ additive code, and let $\chi(\C)=\{\pi_1,\ldots,\pi_n\}$ be an associated $h$-projective system. Assume that $\pi_i$ has vectorial dimension $t_i$ for each $i\in[n]$. Then, if $d>\sum\limits_{i=1}^n\left(1-\frac{1}{q^{t_i}}\right)$, we have
$$
        q^k
        \le
        \frac{d}
        {d-\sum_{i=1}^n\left(1-\frac{1}{q^{t_i}}\right)}.
$$
\end{theorem}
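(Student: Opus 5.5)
Following the classical Plotkin argument, I would double count the sum of pairwise distances
$$S=\sum_{(u,v)\in\C\times\C}\dd(u,v),$$
bounding it from below by the minimum distance and from above coordinatewise. Every ordered pair $(u,v)$ with $u\neq v$ has distance at least $d$. There are $q^k(q^k-1)$ such pairs, so $S\ge q^k(q^k-1)d$.

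For the upper bound, I will use additivity. Since $\C$ is an $\F_q$-subspace, $\dd(u,v)=\wt(u-v)$, and for each fixed $u$ the map $v\mapsto u-v$ is a bijection of $\C$. Hence
$$S=q^k\sum_{c\in\C}\wt(c)=q^k\sum_{i=1}^n(q^k-q^{k-t_i})=q^{2k}\sum_{i=1}^n\left(1-\frac{1}{q^{t_i}}\right),$$
where the middle equality is exactly Theorem~\ref{thm:tot_weight}. Its proof showed that coordinate $i$ is nonzero on exactly $q^k-q^{k-t_i}$ codewords.

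Writing $T=\sum_{i=1}^n(1-q^{-t_i})$ and combining the two bounds gives $(q^k-1)d\le q^kT$, i.e. $q^k(d-T)\le d$. Under the hypothesis $d>T$ we may divide by the positive quantity $d-T$, which yields the claimed inequality $q^k\le d/(d-T)$.

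There is essentially no obstacle; the only care needed is in the second step. The coordinatewise count must use the sizes of the projections onto each coordinate. Since the projection of $\C$ onto coordinate $i$ is an $\F_q$-space of dimension $t_i$, the zero fibre has size $q^{k-t_i}$, and this is where $t_i$ replaces $h$ and sharpens the nonlinear Plotkin bound.
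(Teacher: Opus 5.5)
Your proposal is correct and follows the same route as the paper's proof: you double count the sum of pairwise distances, bound it below by $d\,q^k(q^k-1)$, and evaluate it exactly as $q^k\sum_{c\in\C}\wt(c)=q^{2k}\sum_{i=1}^n(1-q^{-t_i})$ using additivity and Theorem~\ref{thm:tot_weight}. Rearranging $q^k(d-T)\le d$ under $d>T$ then gives the bound, exactly as the paper does.
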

\begin{proof}
Let
$$
        X:=\sum_{\substack{x,y\in\C\\x\ne y}} d(x,y).
$$
Since the minimum distance of $\C$ is $d$, we have
\begin{equation}\label{eq:lower_plotkin}
    X\ge dq^k(q^k-1).
\end{equation}
On the other hand, since $\C$ is additive, for every $z\in\C$ there are exactly $q^k$ pairs $(x,y)\in\C^2$ such that $x-y=z$. Hence
$$
        \sum_{x,y\in\C}d(x,y)=q^k\sum_{z\in\C}\wt(z).
$$
The terms with $x=y$ contribute zero, so this sum is equal to $X$. By Theorem~\ref{thm:tot_weight}, we obtain
$$
        X=q^k\sum_{z\in\C}\wt(z)=q^k\sum_{i=1}^n(q^k-q^{k-t_i}).
$$
Therefore,
\begin{equation}\label{eq:2Plotkin}
     X = q^{2k}\sum_{i=1}^n\left(1-\frac{1}{q^{t_i}}\right)
\end{equation}
Combining Equations~\eqref{eq:lower_plotkin} and~\eqref{eq:2Plotkin}, we get
$$
        d q^k(q^k-1)\le q^{2k}\sum_{i=1}^n\left(1-\frac{1}{q^{t_i}}\right).
$$
After cancelling $q^k$, this gives
$$
        q^k\left(d-\sum_{i=1}^n\left(1-\frac{1}{q^{t_i}}\right)\right)\le d,
$$
and hence we get the statement.
\end{proof}

We also get the following corollary, which can also be obtained by the classical Plotkin bound applied to $\mC$ as a nonlinear code over $\F_q^h$.

\begin{corollary}\label{cor:plotkin_classic}
Let $\C$ be a nondegenerate $[n,k/h,d]^h_q$ additive code. If $d>n\left(1-\frac{1}{q^h}\right)$, then
$$q^k\leq \frac{q^hd}{q^hd-n(q^h-1)}.$$
\end{corollary}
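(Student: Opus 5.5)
The plan is to derive Corollary~\ref{cor:plotkin_classic} directly from Theorem~\ref{thm:plotkin_ti}, using the fact that every $t_i$ is at most $h$. Since $\C$ is nondegenerate, each $\pi_i$ is a nonzero subspace of vectorial dimension $t_i$ with $1\le t_i\le h$. Hence $1-q^{-t_i}\le 1-q^{-h}$ for every $i\in[n]$, and summing gives
$$
S:=\sum_{i=1}^n\left(1-\frac{1}{q^{t_i}}\right)\le n\left(1-\frac{1}{q^h}\right).
$$
Under the hypothesis $d>n(1-q^{-h})$ we then get $d>S$. So the hypothesis of Theorem~\ref{thm:plotkin_ti} holds, and it yields $q^k\le d/(d-S)$.

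The second step is monotonicity. The map $x\mapsto d/(d-x)$ is increasing on $[0,d)$. Since $0\le S\le n(1-q^{-h})<d$, we obtain
$$
q^k\le \frac{d}{d-S}\le \frac{d}{d-n(1-q^{-h})}.
$$
Multiplying numerator and denominator by $q^h$ turns the right-hand side into $q^hd/(q^hd-n(q^h-1))$, which is the claimed bound. The denominator is positive exactly because of the hypothesis.

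There is no real obstacle here. The only point needing care is to compare the denominators in the correct direction and to check that both are positive before applying monotonicity.

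Alternatively, one could note that $\C$ is a code over an alphabet of size $q^h$ with $q^k$ codewords and apply the classical Plotkin bound $M\le d/(d-\theta n)$ with $\theta=1-q^{-h}$. This is the remark made in the statement, but the route through Theorem~\ref{thm:plotkin_ti} is self-contained within the paper.
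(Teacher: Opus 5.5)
Your proof is correct and follows the same route as the paper. Like the paper, you bound each $1-q^{-t_i}$ by $1-q^{-h}$, deduce the hypothesis of Theorem~\ref{thm:plotkin_ti}, and apply it; you also make explicit the monotonicity of $x\mapsto d/(d-x)$ needed to pass from $d/(d-S)$ to the stated bound, a step the paper leaves implicit.
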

\begin{proof}
Since $t_i\le h$ for every $i$, we have
$$
        1-\frac{1}{q^{t_i}}\leq 1-\frac{1}{q^h}.
$$
Hence
$$
        \sum_{i=1}^n\left(1-\frac{1}{q^{t_i}}\right)\le n\left(1-\frac{1}{q^h}\right).
$$
Therefore, if $d>n\left(1-\frac{1}{q^h}\right)$, then we have $d>\sum\limits_{i=1}^n\left(1-\frac{1}{q^{t_i}}\right)$. Applying Theorem~\ref{thm:plotkin_ti}, we conclude.
\end{proof}

\begin{remark}
Note that if $\C$ is faithful, then $t_i=h$ for every $i\in[n]$. Therefore
$$
        \sum_{i=1}^n\left(1-\frac{1}{q^{t_i}}\right)
        =
        n\left(1-\frac{1}{q^h}\right).
$$
Hence the bound in Theorem~\ref{thm:plotkin_ti} coincides with the classical Plotkin bound stated in Corollary~\ref{cor:plotkin_classic}.
\end{remark}

Using Theorem~\ref{thm:plotkin_ti}, we can derive an asymptotic bound
which depends on the dimensions of the subspaces $\pi_i$. The proof follows the same idea as in the classical asymptotic Plotkin bound; more precisely, we puncture on a suitable set of coordinates, apply a Plotkin-type bound to the resulting shortened codes, and then count the possible choices on the remaining coordinates. 

We first fix some notation for puncturing and shortening. Let
$\mC\subseteq \F_{q^h}^n$ be an additive code and let $J\subseteq[n]$.
We denote by $\overline J:=[n]\setminus J$ the complement of $J$ in $[n]$.
The \textbf{puncturing of $\mC$ on $J$} is the code
$$
\mC|_J:=\{(c_i)_{i\in J}:c=(c_1,\ldots,c_n)\in\mC\}.
$$
For a vector $a\in\F_{q^h}^{\overline J}$, we define the \textbf{shortening of $\mC$ with respect to $J$ and $a$} as
$$
\mC(J,a):=\{(c_i)_{i\in J}:c\in\mC,\ (c_i)_{i\in\overline J}=a\}.
$$
Thus $\mC(J,a)$ is the slice of $\mC$ obtained by fixing the coordinates
outside $J$ to be equal to $a$, and then puncturing on $J$. Notice that
$\mC(J,a)$ need not be an additive code.

\begin{theorem}\label{thm:asymptotic_plotkin_ti}
Let $(\mC_m)_m$ be a sequence of nondegenerate $[n_m,k_m/h,d_m]^h_q$
additive codes, with $n_m\to+\infty$. For each $m$, let
$\chi(\mC_m)=\{\pi_{1,m},\ldots,\pi_{n_m,m}\}$, where $\pi_{i,m}$ has
vectorial dimension $t_{i,m}$. For every $j\in[h]$, set
$$
n_{j,m}:=|\{i\in[n_m]:t_{i,m}=j\}|.
$$
Assume that, for every $j\in[h]$, the limit
$$
\nu_j:=\lim_{m\to+\infty}\frac{n_{j,m}}{n_m}
$$
exists. Let $R_m:=k_m/n_m$ and $\delta_m:=d_m/n_m$ be the relative rate
and relative distance of $\mC_m$, respectively, and suppose that
$\delta_m\to\delta$. Finally, define
$$
\beta:=\sum_{j=1}^h \nu_j\left(1-\frac{1}{q^j}\right).
$$
Then the following hold.
\begin{enumerate}
    \item If $\delta>\beta$, then $R_m=o(1)$.
    \item If $\delta<\beta$, then
    $$
    \limsup_{m\to+\infty} R_m
    \le
    \sum_{j=1}^h j\nu_j-M_\nu(\delta),
    $$
    where
    $$
    M_\nu(\delta):=
    \sup\left\{
    \sum_{j=1}^h jy_j :
    0\le y_j\le \nu_j,\ 
    \sum_{j=1}^h y_j\left(1-\frac{1}{q^j}\right)\le \delta
    \right\}.
    $$
\end{enumerate}
\end{theorem}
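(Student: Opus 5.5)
Both parts follow the classical asymptotic Plotkin argument, with Theorem~\ref{thm:plotkin_ti} applied to a slice of the code.

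For part (1), assume $\delta>\beta$. By Theorem~\ref{thm:plotkin_ti} applied to $\mC_m$, as soon as $d_m>\sum_i(1-q^{-t_{i,m}})$ we get
\[
q^{k_m}\le \frac{d_m}{d_m-\sum_i(1-q^{-t_{i,m}})}.
\]
Dividing numerator and denominator by $n_m$, the right-hand side tends to $\delta/(\delta-\beta)$, because
\[
\frac1{n_m}\sum_i\bigl(1-q^{-t_{i,m}}\bigr)=\sum_j \frac{n_{j,m}}{n_m}\bigl(1-q^{-j}\bigr)\to\beta .
\]
So $k_m$ is bounded, and $R_m=k_m/n_m\to0$.

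For part (2), assume $\delta<\beta$. Fix $\varepsilon>0$ and choose numbers $x_j\in[0,\nu_j]$ such that
\[
\sum_j x_j\bigl(1-q^{-j}\bigr)=\delta-\varepsilon' < \delta ,
\]
with $\varepsilon'>0$ small. Such $x_j$ exist because $\beta>\delta$. For each $m$, let $J_m\subseteq[n_m]$ consist of about $x_jn_m$ coordinates of each type $j$, i.e.\ coordinates with $t_{i,m}=j$. This is possible for large $m$ since $n_{j,m}/n_m\to\nu_j\ge x_j$.

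Take the slices $\mC_m(J_m,a)$ for $a$ ranging over the projection $\mC_m|_{\overline{J_m}}$. The counting step is that this projection is an $\Fq$-space. Its dimension is at most the dimension of the span of the blocks indexed by $\overline{J_m}$, since it equals the rank of the corresponding columns of $\widetilde G$. That rank is at most $\sum_{i\in\overline{J_m}}t_{i,m}$, which is about $n_m\sum_j j(\nu_j-x_j)$. Hence some slice has size at least $q^{k_m-\sum_{i\notin J_m}t_{i,m}}$. In fact every slice is a coset of $\ker$, the shortened additive code, which has dimension
\[
k_m-\dim\bigl(\mC_m|_{\overline{J_m}}\bigr).
\]

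This shortened code $S$ lives on the coordinates $J_m$ and has minimum distance at least $d_m$. In its projective system each coordinate has vectorial dimension at most $t_{i,m}$: puncturing and restricting the generator matrix can only shrink these spans. After removing zero coordinates, which only helps, we have
\[
\sum_{i\in J_m}\bigl(1-q^{-t'_i}\bigr)\le \sum_{i\in J_m}\bigl(1-q^{-t_{i,m}}\bigr)\approx n_m(\delta-\varepsilon')<d_m .
\]
Theorem~\ref{thm:plotkin_ti} then gives $q^{\dim S}\le d_m/(d_m-\cdots)=O(1/\varepsilon')$. Therefore
\[
k_m\le \sum_{i\notin J_m}t_{i,m}+O(1),
\]
and dividing by $n_m$,
\[
\limsup_m R_m\le \sum_j j\nu_j-\sum_j jx_j .
\]

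Finally, take the supremum over admissible $x$ and let $\varepsilon'\to0$. The left side $\sum_j j\nu_j-\sum_j jx_j$ is continuous in $x$, and the feasible set with $\le\delta-\varepsilon'$ approximates the one with $\le\delta$ (by scaling $x$ slightly down). This yields the bound with $M_\nu(\delta)$.

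The main obstacle is the monotonicity claim for the $t$'s: that the vectorial dimensions of the shortened code's system satisfy $t'_i\le t_{i,m}$, which is what lets us apply Theorem~\ref{thm:plotkin_ti} to it. The shortened code is generated by $uG$ with $u$ in the annihilator of the outside blocks. Its $i$-th block span is the image of $\pi_i$ under the quotient map onto the dual of that annihilator, so its dimension cannot increase. The second delicate point is the rounding of the $x_jn_m$ and the limit interchange.
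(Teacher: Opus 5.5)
Your proof is correct and follows the paper's argument. It uses the same choice of $J_m$ with a prescribed proportion of coordinates of each vectorial dimension, a Plotkin bound on the slices over $J_m$, the count of at most $q^{\sum_{i\notin J_m}t_{i,m}}$ slices, and the final approximation of the constraint $\le\delta$ by strict inequalities. One small omission: when $\delta=0$ no $x$ with $\sum_j x_j(1-q^{-j})<\delta$ exists, so that case has to be handled separately by the trivial bound $k_m\le\sum_i t_{i,m}$, as the paper does.

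The only genuine difference is in the slice bound. You note that every slice is a coset of the shortened additive subcode, and you apply Theorem~\ref{thm:plotkin_ti} to that subcode after checking $t'_i\le t_{i,m}$. The paper instead bounds each slice, which is not additive, by appealing to ``the same argument'', using only that its $i$-th coordinate takes at most $q^{t_{i,m}}$ values. Your version has the advantage of using Theorem~\ref{thm:plotkin_ti} exactly as stated, rather than a nonlinear variant of its proof.
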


\begin{proof}
For every $m$, set
$$
B_m:=\sum_{i=1}^{n_m}\left(1-\frac{1}{q^{t_{i,m}}}\right).
$$
Grouping the coordinates according to the value of $t_{i,m}$ gives
$$
B_m=\sum_{j=1}^h n_{j,m}\left(1-\frac{1}{q^j}\right),
$$
and hence $B_m/n_m\to\beta$.

Assume first that $\delta>\beta$. Since $\delta_m\to\delta$ and
$B_m/n_m\to\beta$, for $m$ large enough we have $d_m>B_m$. By
Theorem~\ref{thm:plotkin_ti},
$$
q^{k_m}\le \frac{d_m}{d_m-B_m}
=
\frac{\delta_m}{\delta_m-\frac{B_m}{n_m}}.
$$
The right-hand side is bounded independently of $m$, because
$\delta_m-B_m/n_m\to\delta-\beta>0$. Hence $k_m=O(1)$. Since
$n_m\to+\infty$, we obtain $R_m=k_m/n_m=o(1)$.

Assume now that $\delta<\beta$. If $\delta=0$, then $M_\nu(0)=0$, and the
desired bound follows from the trivial inequality
$k_m\le\sum\limits_{i=1}^{n_m}t_{i,m}$, which gives
$\limsup_{m\to+\infty}R_m\le\sum_{j=1}^h j\nu_j$. Thus we may assume that
$\delta>0$.
Let $y_1,\ldots,y_h$ be real numbers such that $0\le y_j\le\nu_j$ for
every $j\in[h]$, and
$$
\sum_{j=1}^h y_j\left(1-\frac{1}{q^j}\right)<\delta.
$$
For $m$ large enough, choose a subset $J_m\subseteq[n_m]$ such that, for
every $j\in[h]$, the set $J_m$ contains $\lfloor y_jn_m\rfloor$
coordinates of vectorial dimension $j$. This is possible since
$n_{j,m}/n_m\to\nu_j$ and $y_j\le\nu_j$.
With this choice, set
$$
B(J_m):=\sum_{i\in J_m}\left(1-\frac{1}{q^{t_{i,m}}}\right).
$$
Then
$$
\frac{B(J_m)}{n_m}
\longrightarrow
\sum_{j=1}^h y_j\left(1-\frac{1}{q^j}\right)
<\delta.
$$
Hence, for $m$ large enough, $d_m>B(J_m)$.

Fix $a\in\F_{q^h}^{\overline{J_m}}$ and consider $\mC_m(J_m,a)$.
If  $\mC_m(J_m,a)\ne\emptyset$, then it has minimum distance at least~$d_m$, because any two codewords in it agree on the coordinates outside $J_m$.
Moreover, for each $i\in J_m$, the set of symbols appearing in the $i$-th coordinate of $\mC_m(J_m,a)$ has size at most $q^{t_{i,m}}$. Thus, the same  argument used in the proof of Theorem~\ref{thm:plotkin_ti} gives
$$
|\mC_m(J_m,a)|
\le
\frac{d_m}{d_m-B(J_m)}.
$$
Since $d_m/n_m\to\delta$ and $B(J_m)/n_m$ has limit strictly smaller than
$\delta$, the right-hand side is bounded independently of $m$. Therefore every nonempty set $\mC_m(J_m,a)$ has size $O(1)$.

It remains to count the possible choices of $a$. For each coordinate
$i$, the set of symbols appearing in the $i$-th coordinate of $\mC_m$ has
size at most $q^{t_{i,m}}$. Hence the number of possible words on
$\overline{J_m}$ is at most
$$
q^{\sum_{i\notin J_m}t_{i,m}}.
$$
Consequently,
$$
|\mC_m|
\le
q^{\sum_{i\notin J_m}t_{i,m}}O(1).
$$
Since $|\mC_m|=q^{k_m}$, taking logarithms base $q$ gives
$$
k_m\le \sum_{i\notin J_m}t_{i,m}+o(n_m).
$$
Dividing by $n_m$, we obtain
$$
R_m
\le
\frac{1}{n_m}\sum_{i=1}^{n_m}t_{i,m}
-
\frac{1}{n_m}\sum_{i\in J_m}t_{i,m}
+o(1).
$$
By the definition of the numbers $\nu_j$, we have
$$
\frac{1}{n_m}\sum_{i=1}^{n_m}t_{i,m}
\longrightarrow
\sum_{j=1}^h j\nu_j.
$$
Moreover, by the choice of $J_m$,
$$
\frac{1}{n_m}\sum_{i\in J_m}t_{i,m}
\longrightarrow
\sum_{j=1}^h jy_j.
$$
Therefore
$$
\limsup_{m\to+\infty} R_m
\le
\sum_{j=1}^h j\nu_j-\sum_{j=1}^h jy_j.
$$
This inequality holds for every $y_1,\ldots,y_h$ satisfying
$0\le y_j\le\nu_j$ and
$$
\sum_{j=1}^h y_j\left(1-\frac{1}{q^j}\right)<\delta.
$$
Taking the supremum over all admissible choices of $y_1,\ldots,y_h$, we obtain the same value if the strict inequality is replaced by
$$
\sum_{j=1}^h y_j\left(1-\frac{1}{q^j}\right)\le\delta.
$$
Indeed, any admissible vector satisfying equality can be approximated by admissible vectors satisfying the strict inequality. Hence
$$
\limsup_{m\to+\infty}R_m
\le
\sum_{j=1}^h j\nu_j-M_\nu(\delta),
$$
as claimed.
\end{proof}

\begin{remark}\label{rem:faithful_plotkin}
If the codes $\mC_m$ are faithful, then $t_{i,m}=h$ for every $i$ and
every $m$. Hence $\nu_h=1$
and $\nu_j=0$ for every $j<h$.
Therefore $\beta=1-\frac{1}{q^h}$. Moreover, for
$0\le \delta<1-\frac{1}{q^h}$, the quantity $M_\nu(\delta)$ becomes
$$
M_\nu(\delta)=\frac{h\delta}{1-\frac{1}{q^h}}.
$$
Thus Theorem~\ref{thm:asymptotic_plotkin_ti} gives
$$
\limsup_{m\to+\infty} R_m
\le
h-\frac{h\delta}{1-\frac{1}{q^h}},
$$
whenever $\delta<1-\frac{1}{q^h}$, while for
$\delta>1-\frac{1}{q^h}$ it gives $R_m=o(1)$. Hence, in the faithful
case, we recover the classical asymptotic Plotkin bound for codes over an
alphabet of size $q^h$, written in terms of the rate $R_m=k_m/n_m$.
\end{remark}

\subsection{One-weight additive codes}
In this subsection, we use Theorem~\ref{thm:tot_weight} and Theorem~\ref{thm:plotkin_ti} in order to derive some results on \emph{one-weight additive codes}. Recall that an additive code is called \textbf{one-weight} or \textbf{constant weight} if all its nonzero codewords have the same Hamming weight. 

\begin{proposition}
Let $\C$ be a nondegenerate $[n,k/h,w]^h_q$ one-weight  additive code, and let $\chi(\C)=\{\pi_1,\ldots,\pi_n\}$ be an $h$-projective system associated with $\mC$. Assume that $\pi_i$ has vectorial dimension~$t_i$. Then
$$
        w=n-\frac{1}{q^k-1}\sum_{i=1}^n(q^{k-t_i}-1).
$$
\end{proposition}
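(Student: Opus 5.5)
The plan is to count the total weight of $\C$ in two ways and compare. The first count uses the one-weight hypothesis. The second is Theorem~\ref{thm:tot_weight}, which gives the total weight in terms of the dimensions $t_i$. Since $\C$ is a $k$-dimensional $\Fq$-subspace of $\Fqh^n$, it has exactly $q^k-1$ nonzero codewords. By hypothesis each of them has weight $w$, while the zero codeword contributes nothing. Hence
$$\sum_{c\in\C}\wt(c)=(q^k-1)\,w.$$

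Next I would apply Theorem~\ref{thm:tot_weight}. Its hypotheses hold here: $\C$ is nondegenerate and $\chi(\C)=\{\pi_1,\ldots,\pi_n\}$ has vectorial dimensions $t_i$. It gives
$$\sum_{c\in\C}\wt(c)=n(q^k-1)-\sum_{i=1}^n(q^{k-t_i}-1).$$
Equating the two expressions and dividing by $q^k-1$ yields the stated formula for $w$.

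The division needs $q^k-1\neq 0$, which holds because $k\geq 1$. This is automatic: $w$ is the weight of some nonzero codeword, and in any case a nondegenerate code with $n\ge 1$ is nonzero. There is no real obstacle here. The only point needing care is that the sum over nonzero codewords counts exactly $q^k-1$ terms, which follows from the $\Fq$-dimension of $\C$. Equivalently, the result is the equality case of the corollary to Theorem~\ref{thm:tot_weight}, since for a one-weight code the inequality $(q^k-1)d\le\sum_{c}\wt(c)$ becomes an equality.
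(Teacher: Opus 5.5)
Your proposal is correct and is the same argument as the paper's proof. The one-weight hypothesis gives $\sum_{c\in\C}\wt(c)=(q^k-1)w$, and equating this with the total-weight formula of Theorem~\ref{thm:tot_weight} yields the stated expression for $w$.
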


\begin{proof}
Since $\C$ is one-weight, every nonzero codeword has weight $w$. Hence
$$
        \sum_{c\in\C}\wt(c)=(q^k-1)w.
$$
The result follows from Theorem~\ref{thm:tot_weight}.
\end{proof}

In the faithful case, the previous condition becomes especially simple. Indeed, the weight is completely determined by $n,k,h$ and $q$, and the length must satisfy a divisibility condition.

\begin{corollary}
Let $\C$ be an $[n,k/h,w]^h_q$ faithful nondegenerate one-weight additive code. Set $g=\gcd(k,h)$. Then
$$
        \frac{q^k-1}{q^g-1}\mid n.
$$
Moreover, if $n=m\frac{q^k-1}{q^g-1}$, then
$$
        w=m\frac{q^{k-h}(q^h-1)}{q^g-1}.
$$
\end{corollary}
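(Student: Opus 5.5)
From the preceding proposition with $t_i=h$ for every $i$, the weight is forced:
$$w=n-\frac{n(q^{k-h}-1)}{q^k-1}=\frac{n(q^k-q^{k-h})}{q^k-1}=\frac{n\,q^{k-h}(q^h-1)}{q^k-1}.$$
Since $w$ is an integer, $q^k-1$ divides $n\,q^{k-h}(q^h-1)$. The whole proof then reduces to number theory; no further geometry is needed.

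The first step is to dispose of the power of $q$. Because $\gcd(q^k-1,q)=1$, the factor $q^{k-h}$ is coprime to $q^k-1$, so $q^k-1\mid n(q^h-1)$. Here we assume $k\ge h$, which is implicit since otherwise $q^{k-h}$ is not an integer. If one wants to treat $k<h$ as well, one can instead multiply through by $q^{h-k}$ and argue the same way.

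The second step uses the standard identity $\gcd(q^k-1,q^h-1)=q^{\gcd(k,h)}-1=q^g-1$. Write $q^k-1=(q^g-1)A$ and $q^h-1=(q^g-1)B$ with $\gcd(A,B)=1$. The divisibility $(q^g-1)A\mid n(q^g-1)B$ becomes $A\mid nB$, hence $A\mid n$, that is,
$$\frac{q^k-1}{q^g-1}\ \Big|\ n.$$

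For the weight formula, write $n=mA$ and substitute into the expression for $w$:
$$w=\frac{mA\,q^{k-h}(q^h-1)}{(q^g-1)A}=m\,\frac{q^{k-h}(q^h-1)}{q^g-1}.$$

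The only nonroutine ingredient is the gcd identity $\gcd(q^a-1,q^b-1)=q^{\gcd(a,b)}-1$. I would cite it as standard, or prove it by running the Euclidean algorithm on the exponents using $q^a-1\equiv q^{a-b}-1 \pmod{q^b-1}$. I would also note that the case $k<h$ needs the adjustment mentioned in the first step.
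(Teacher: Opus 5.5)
Your proof is correct and is essentially the paper's argument. The paper uses integrality of $n-w=\frac{n(q^{k-h}-1)}{q^k-1}$ together with $\gcd(q^k-1,q^{k-h}-1)=q^{g}-1$; you use integrality of $w$, remove the factor $q^{k-h}$ (coprime to $q^k-1$), and apply $\gcd(q^k-1,q^h-1)=q^g-1$, which is equivalent. Your caveat about $k<h$ is unnecessary, because faithfulness makes each $\pi_i$ an $h$-dimensional subspace of $\F_q^k$, which forces $h\le k$.
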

\begin{proof}
Since $\C$ is faithful, Theorem~\ref{thm:tot_weight} gives
$(q^k-1)w=n(q^k-q^{k-h})$.
We can rewrite this as 
$$
        n-w=\frac{n(q^{k-h}-1)}{q^k-1}.
$$
Since $n-w$ is an integer, we have
$$
        q^k-1 \mid n(q^{k-h}-1).
$$
Moreover,
$$
        \gcd(q^k-1,q^{k-h}-1) =q^{\gcd(k,k-h)}-1=        q^{\gcd(k,h)}-1.
$$
Writing $g=\gcd(k,h)$, we obtain
$$
        \frac{q^k-1}{q^g-1}\mid n.
$$
Finally, if $n=m\frac{q^k-1}{q^g-1}$, then,
$$
        w=m\frac{q^{k-h}(q^h-1)}{q^g-1}.
$$
\end{proof}

We can easily translate the one-weight property into the language of $h$-projective systems. This gives a useful geometric characterization.

\begin{proposition}
Let $\C$ be a nondegenerate additive code with associated $h$-projective system $\chi(\C)=\{\pi_1,\ldots,\pi_n\}$. Then $\C$ is one-weight with
nonzero weight $w$ if and only if every hyperplane of $\PG(k-1,q)$
contains exactly $n-w$ elements of $\chi(\C)$, counted with multiplicity.
\end{proposition}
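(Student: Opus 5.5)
The plan is to reduce the statement directly to the weight formula recalled in Section~\ref{sec:background}. Fix a generator matrix $G$ of $\C$ realizing $\chi(\C)=\{\pi_1,\ldots,\pi_n\}$. The key facts are the bijection $a\mapsto aG$ between $\F_q^k\setminus\{0\}$ and $\C\setminus\{0\}$, which holds because the rows of $G$ are $\F_q$-linearly independent, and the identity
$$\wt(aG)=n-|\{i\in[n]:\pi_i\subseteq\langle a\rangle_{\Fq}^\perp\}|.$$
Here multiplicities of the $\pi_i$ are automatically counted, since the index set is $[n]$.

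For the forward direction, assume every nonzero codeword has weight $w$. Let $H$ be any hyperplane of $\PG(k-1,q)$. I choose $a\in\F_q^k\setminus\{0\}$ with $H=\langle a\rangle_{\Fq}^\perp$; such an $a$ exists, and is unique up to a nonzero scalar. Then $aG$ is a nonzero codeword, so $w=\wt(aG)=n-|\{i:\pi_i\subseteq H\}|$. Hence $H$ contains exactly $n-w$ elements of $\chi(\C)$.

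For the converse, assume every hyperplane contains exactly $n-w$ elements of $\chi(\C)$. Let $c\in\C$ be nonzero and write $c=aG$ with $a\neq 0$. Applying the identity to the hyperplane $\langle a\rangle_{\Fq}^\perp$ gives $\wt(c)=n-(n-w)=w$. So $\C$ is one-weight with weight $w$.

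I do not expect any real obstacle here. The only point needing care is that the weight identity holds: the $i$-th coordinate of $aG$ is $\sum_t (aG_i)_t\alpha_t$. Since $\mathcal B$ is a basis, this vanishes if and only if $aG_i=0$, that is, if and only if $a$ is orthogonal to $\colsp(G_i)=\pi_i$. The identity therefore does not depend on the choice of $G$ within its equivalence class, and the statement is independent of the chosen associated $h$-projective system.
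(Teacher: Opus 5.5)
Your proof is correct and follows essentially the same argument as the paper: both rest on the identity $\wt(uG)=n-|\{i\in[n]:\pi_i\subseteq\langle u\rangle_{\Fq}^\perp\}|$ and on the fact that the hyperplanes of $\PG(k-1,q)$ are exactly the spaces $\langle u\rangle_{\Fq}^\perp$ with $u\neq 0$. Your explicit check of that identity through the block expansion $g_i=G_i\boldsymbol{\alpha}$ is a harmless addition; the paper simply takes the identity from its background section.
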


\begin{proof}
Let $G$ be a generator matrix of $\C$. For every nonzero codeword
$c=uG$, with $u\in\F_q^k\setminus\{0\}$, we have $\wt(c)= n-
        |\{i\in[n]:\pi_i\subseteq \langle u\rangle_{\Fq}^\perp\}|$.
The hyperplanes of $\PG(k-1,q)$ are precisely the spaces
$\langle u_\rangle{\Fq}^\perp$, with $u\in\F_q^k\setminus\{0\}$. Hence all
nonzero codewords have weight $w$ if and only if every hyperplane
contains exactly $n-w$ elements of $\chi(\C)$, counted with multiplicity.
\end{proof}

One-weight additive codes also attain the refined Plotkin bound from Theorem~\ref{thm:plotkin_ti}.

\begin{proposition}
Let $\C$ be an $[n,k/h,w]^h_q$ nondegenerate one-weight  additive code, and let $\chi(\C)=\{\pi_1,\ldots,\pi_n\}$ be an $h$-projective system associated with $\mC$. Assume that $\pi_i$ has vectorial dimension $t_i$ for every $i\in[n]$. Then
$$w=\frac{q^k}{q^k-1}\sum_{i=1}^n\left(1-\frac{1}{q^{t_i}}\right).
$$
In particular, $\C$ meets the bound of Theorem~\ref{thm:plotkin_ti} with equality.
\end{proposition}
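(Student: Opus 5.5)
The plan is to combine the total weight formula of Theorem~\ref{thm:tot_weight} with the one-weight hypothesis, exactly as in the preceding proposition, and then rewrite the result in the Plotkin form. Since every nonzero codeword has weight $w$ and $|\C|=q^k$, we have
$$(q^k-1)w=\sum_{c\in\C}\wt(c)=\sum_{i=1}^n(q^k-q^{k-t_i}).$$
I will factor $q^k$ out of each summand, writing $q^k-q^{k-t_i}=q^k\bigl(1-q^{-t_i}\bigr)$. Dividing by $q^k-1$ then gives the claimed identity
$$w=\frac{q^k}{q^k-1}\sum_{i=1}^n\Bigl(1-\frac{1}{q^{t_i}}\Bigr).$$

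For the equality claim, set $B:=\sum_{i=1}^n(1-q^{-t_i})$, so that the identity reads $w=\frac{q^k}{q^k-1}B$. Because $q^k/(q^k-1)>1$ and $B>0$ (nondegeneracy gives $t_i\ge1$, and we may assume $n\geq 1$), we get $w>B$, so the hypothesis $d>B$ of Theorem~\ref{thm:plotkin_ti} holds with $d=w$. It remains to compute $w-B=\frac{q^k}{q^k-1}B-B=\frac{B}{q^k-1}$, and hence
$$\frac{w}{w-B}=\frac{q^kB/(q^k-1)}{B/(q^k-1)}=q^k.$$
This is exactly equality in Theorem~\ref{thm:plotkin_ti}.

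An alternative route to the equality is to observe that in the proof of Theorem~\ref{thm:plotkin_ti} the only inequality is~\eqref{eq:lower_plotkin}, and that inequality is tight precisely when all nonzero codewords have weight $d$. I do not expect any step to be a real obstacle. The only points needing care are checking $B>0$, so that the strict hypothesis of Theorem~\ref{thm:plotkin_ti} holds, and keeping the algebraic rearrangement correct.
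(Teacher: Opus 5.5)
Your proof is correct and follows essentially the same route as the paper: you equate $(q^k-1)w$ with the total weight from Theorem~\ref{thm:tot_weight}, factor out $q^k$, and then verify $w>B$ and $w/(w-B)=q^k$. Your explicit check that $B>0$, which is what makes the strict hypothesis of Theorem~\ref{thm:plotkin_ti} hold, is a small detail that the paper leaves implicit.
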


\begin{proof}
Since $\C$ is one-weight, every nonzero codeword has weight $w$. Hence
$$\sum_{c\in\C}\wt(c)=(q^k-1)w.$$
On the other hand, by Theorem~\ref{thm:tot_weight},
$$\sum_{c\in\C}\wt(c)=\sum_{i=1}^n(q^k-q^{k-t_i})=q^k\sum_{i=1}^n\left(1-\frac{1}{q^{t_i}}\right).
$$
Therefore,
$$
        (q^k-1)w=q^k\sum_{i=1}^n\left(1-\frac{1}{q^{t_i}}\right),
$$
and hence
$$
        w=\frac{q^k}{q^k-1}\sum_{i=1}^n\left(1-\frac{1}{q^{t_i}}\right).
$$
In particular, $w>\sum\limits_{i=1}^n\left(1-\frac{1}{q^{t_i}}\right)$ and
$$
        \frac{w}{w-\sum\limits_{i=1}^n\left(1-\frac{1}{q^{t_i}}\right)}=q^k.
$$
Since the minimum distance of $\C$ is $d=w$, this shows that the bound in Theorem~\ref{thm:plotkin_ti} is attained with equality.
\end{proof}

The previous characterization also gives explicit examples. We now show that there exist faithful one-weight additive codes which are not equivalent to any $\F_{q^h}$-linear code. These codes could be considered as an additive version of \emph{simplex codes}. 

\begin{theorem}\label{thm:simplex_add}
Let $2\le h<k$. Then there exists an $[n,k/h,w]^h_q$ faithful one-weight additive code which is not equivalent to any
$\F_{q^h}$-linear code with
$$
        n=\binom{k}{h}_q, \qquad 
        w=q^{k-h}\binom{k-1}{h-1}_q.
$$
\end{theorem}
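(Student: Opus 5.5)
We take as $h$-projective system the multiset $\chi$ of all $(h-1)$-dimensional projective subspaces of $\PG(k-1,q)$, each counted once, so $n=\binom{k}{h}_q$. We then use the correspondence of \cite[Theorem 5]{ball2025griesmer} recalled in the background section to obtain a nondegenerate additive code $\mC$, which is faithful by construction. Since $h<k$, these subspaces are not all contained in a common hyperplane: any two hyperplanes $H\ne H'$ give $H\cap H'$ of vector dimension $k-2$, and an $h$-space not inside $H$ exists. Hence $\chi$ is indeed an $h$-projective system of dimension $k$.

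Next we show that $\mC$ is one-weight, using the geometric characterization proved above: $\mC$ is one-weight if and only if every hyperplane contains the same number of elements of $\chi$. A hyperplane $H$ has vector dimension $k-1$, so it contains exactly $\binom{k-1}{h}_q$ of the $h$-subspaces, independently of $H$. Therefore $w=\binom{k}{h}_q-\binom{k-1}{h}_q$. The standard $q$-Pascal identity $\binom{k}{h}_q=\binom{k-1}{h}_q+q^{k-h}\binom{k-1}{h-1}_q$ gives $w=q^{k-h}\binom{k-1}{h-1}_q$, as claimed.

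The main obstacle is non-equivalence to an $\F_{q^h}$-linear code. The plan is to suppose that $\mC$ is equivalent to an $\F_{q^h}$-linear code $\mathcal{D}$. Then $h\mid k$, which disposes of the case $h\nmid k$ immediately. Write $k=hr$; then $\mathcal{D}$ is an $[n,r]_{q^h}$ linear code. Equivalence as defined (coordinate permutations and $\F_q$-linear automorphisms of $\F_{q^h}$ in each coordinate) preserves the weight distribution, so $\mathcal{D}$ is a nondegenerate one-weight $\F_{q^h}$-linear code. By Bonisoli's theorem, $\mathcal{D}$ is a replicated simplex code, possibly with zero coordinates, which are excluded by nondegeneracy. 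Hence $n=m\frac{q^{k}-1}{q^h-1}$ for some integer $m$, and $w=m q^{h(r-1)}$.

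To get a contradiction, compare $w=q^{k-h}\binom{k-1}{h-1}_q$ with $m q^{k-h}$. This forces $m=\binom{k-1}{h-1}_q$. Then $n=m\frac{q^k-1}{q^h-1}$ is consistent with $\binom{k}{h}_q=\frac{q^k-1}{q^h-1}\binom{k-1}{h-1}_q$, so counting parameters alone yields no contradiction. A structural invariant is therefore needed. In the linear case the associated $h$-projective system is the field reduction of the points of $\PG(r-1,q^h)$, repeated $m$ times, and every element has multiplicity $m\ge 2$ (since $h<k$ gives $m>1$). In $\chi$, by contrast, every subspace occurs exactly once. Since equivalence classes of codes correspond bijectively to equivalence classes of $h$-projective systems, and multiplicities are preserved under $\mathrm{PGL}_k(\F_q)$, this gives the contradiction. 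The step to verify carefully is that the linear code's $h$-system is indeed the repeated Desarguesian spread, including the $\F_q$-automorphism freedom per coordinate, which only changes the basis of each $\pi_i$ and not $\pi_i$ itself.
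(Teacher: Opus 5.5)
Your proof is correct. The first two steps are exactly the paper's: taking all $(h-1)$-spaces, and getting one-weightness from the constant hyperplane count $\binom{k-1}{h}_q$ together with the $q$-Pascal identity. The non-equivalence step takes a different route.

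The paper argues structurally. By Proposition~\ref{prop:linear_case_field_reduction}, the $h$-projective system of an $\F_{q^h}$-linear code lies in a Desarguesian $(h-1)$-spread. The system $\chi$ is the $\mathrm{PGL}_k(\F_q)$-invariant set of all $(h-1)$-spaces, and it is contained in no spread: for $2\le h<k$ it has distinct members meeting nontrivially, whereas spread elements are pairwise disjoint.

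You instead invoke Bonisoli's theorem. It identifies the linear competitor's system as a Desarguesian spread with every element repeated $m=\binom{k-1}{h-1}_q\ge 2$ times, and you then compare multiplicities. This is valid, and it shows explicitly why counting parameters alone cannot separate the two systems. However, the external classification is more than you need. Proposition~\ref{prop:linear_case_field_reduction} alone shows that the system of any $\F_{q^h}$-linear code has at most $\frac{q^k-1}{q^h-1}$ distinct members, and this number is preserved under $\mathrm{PGL}_k(\F_q)$. By contrast, $\chi$ has $\binom{k}{h}_q=\frac{q^k-1}{q^h-1}\binom{k-1}{h-1}_q$ distinct members, which is strictly more exactly when $2\le h<k$. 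That count closes the argument without using one-weightness at all, as the paper's route also does.

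Two minor remarks. First, $m>1$ requires $h\ge 2$ as well as $h<k$, since for $h=1$ one gets $m=1$; the hypothesis covers this, but your attribution to $h<k$ alone is incomplete. Second, your justification that $\chi$ is not contained in a hyperplane via $H\cap H'$ is beside the point. It suffices that, for each hyperplane $H$, some $(h-1)$-space passes through a point off $H$.
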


\begin{proof}
Let $\chi$ be the set of all $(h-1)$-spaces of $\PG(k-1,q)$. Then $|\chi|=\binom{k}{h}_q$.
Moreover, every hyperplane of $\PG(k-1,q)$ contains exactly $\binom{k-1}{h}_q$ elements of $\chi$. Hence, the additive code associated with $\chi$ is one-weight, with nonzero weight
$$
        w=\binom{k}{h}_q-\binom{k-1}{h}_q=q^{k-h}\binom{k-1}{h-1}_q.
$$
Since every element of $\chi$ has vectorial dimension $h$, the code is faithful.
It remains to show that this code is not equivalent to an $\F_{q^h}$-linear code. If $h\nmid k$, this is immediate, since an $\F_{q^h}$-linear code has $\F_q$-dimension divisible by $h$.
Assume now that $h\mid k$. If the code were equivalent to an
$\F_{q^h}$-linear code, then its associated projective system would be contained in a Desarguesian $(h-1)$-spread of $\PG(k-1,q)$. However, $\chi$ is the set of all $(h-1)$-spaces of $\PG(k-1,q)$, and since $h<k$, it is not contained in a single Desarguesian spread. Therefore, the associated additive code is not equivalent to an $\F_{q^h}$-linear code.
\end{proof}

\section{Minimal additive codes and additive strong blocking sets}\label{sec:min_codes}

In this section, we introduce the notion of \emph{minimal additive codes} and we describe  their geometric counterpart, by generalizing the concept of outer strong blocking sets and outer minimal codewords. Outer minimal codes are $\F_{q^h}$-linear subspaces of $\F_{q^h}^n$ whose concatenation with minimal codes is minimal. In \cite{alfarano2024outer}, it is shown that they are in one-to-one correspondence with \emph{outer strong blocking sets}, which are sets of points in $\PG(k/h-1,q^h)$ whose field reduction is a strong blocking set.

We start by recalling the classical notions for linear codes that motivate our definitions.

\begin{definition}
An $\F_q$-linear code $\C\subseteq \F_q^n$ is \textbf{minimal} if, for every
$c,c'\in\C\setminus\{0\}$,
$$\supp(c')\subseteq\supp(c)\quad\Leftrightarrow \quad c'=\lambda c \text{ for some }\lambda\in\F_q^*.$$
\end{definition}

\begin{definition}
A \textbf{strong blocking set} in $\PG(k-1,q)$ is a multiset of points $\M$, not
contained in a hyperplane, such that for all hyperplanes $H,H'$ of
$\PG(k-1,q)$,
$$\M\cap H\subseteq \M\cap H'\quad\Leftrightarrow\quad H=H'.
$$
\end{definition}
In \cite{alfarano2022geometric, tang2021full} it is shown that (monomial) equivalence classes of nondegenerate minimal linear codes are in one-to-one correspondence with projective equivalence classes of strong blocking sets. This correspondence has been exploited in the literature for providing bounds and constructions.

We now move to the additive setting.

\begin{definition}\label{def:min_codewords}
Let $\C$ be an additive code. We say that a codeword $c\in\C\setminus\{0\}$ is \textbf{minimal} if, for every
$c'\in\C\setminus\{0\}$ satisfying
$\supp(c')\subseteq\supp(c)$
and
$\frac{c_i'}{c_i}\in\F_q^\ast$ for every $i\in\supp(c')$,
there exists $\lambda\in\F_q^*$ such that $c'=\lambda c$.
\end{definition}

\begin{remark}
Definition~\ref{def:min_codewords} coincides with
\cite[Definition~23]{alfarano2024outer}, which introduce \emph{minimal codewords} of an $\F_{q^h}$-linear code in $\F_{q^h}^n$. In particular, minimal codewords
in additive codes are outer minimal in the sense of~\cite{alfarano2024outer}.
We simply use the word ``minimal'' in order not to distinguish the $\F_{q^h}$-linear case from the general additive case. We will discuss in detail this relation in Section~\ref{sec:Linear_case}.
\end{remark}

\begin{definition}
    We say that an additive code $\mathcal{C}$ is \textbf{minimal} if all its nonzero codewords are minimal. 
\end{definition}

\begin{example}
    Consider $\F_4=\F_2(\omega)$, with $\omega$ a root of $x^2+x+1\in\F_2[x]$.  Let $\C$ be the $[4,2/2,3]_2^2$ additive code defined by the generator matrix 
    $$G=\begin{pmatrix}
1 & 1 & \omega & \omega\\
1 & \omega & 0 & 1
\end{pmatrix}.$$
We have that $\C=\{(0,0,0,0),(1,1,\omega,\omega),(1,\omega,0,1),(0,\omega^2,\omega,\omega^2)\}$. The code $\C$ is minimal.
\end{example}

We provide a straightforward characterization of minimal codewords. 

\begin{proposition}\label{prop:charact_min_codewords}\label{prop:minimal_codeword_characterization} Let $\C$ be an $[n,k/h]_q^h$ additive code. A codeword $c \in \C\backslash\{0\}$ is minimal if and only if $\forall c'\in \mathcal{C}\backslash\{0\}$ such that $\supp(c')\subsetneq \supp(c)$, $\exists \phantom{i} i \in \supp(c')$ such that $\frac{c_i'}{c_i}\notin \F_q$.
\end{proposition}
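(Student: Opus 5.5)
We prove the two implications separately. Both rest on one observation about the case $\supp(c')=\supp(c)$. Suppose $c'\in\C\setminus\{0\}$ has the same support as $c$ and $\frac{c'_i}{c_i}\in\F_q^*$ for every $i\in\supp(c')$. We argue that then $c'=\lambda c$ for some $\lambda\in\F_q^*$, or else some smaller-support codeword with $\F_q^*$-ratios appears. Indeed, pick $j\in\supp(c)$, set $\lambda:=c'_j/c_j\in\F_q^*$, and form $c'':=c'-\lambda c$. This is a codeword because $\C$ is $\F_q$-linear. On $\supp(c)$ its entries are $c''_i=(\frac{c'_i}{c_i}-\lambda)c_i$, so $\frac{c''_i}{c_i}\in\F_q$, and this ratio lies in $\F_q^*$ exactly when $i\in\supp(c'')$. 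Moreover $\supp(c'')\subseteq\supp(c)\setminus\{j\}\subsetneq\supp(c)$.

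($\Rightarrow$) Let $c$ be minimal and $c'\in\C\setminus\{0\}$ with $\supp(c')\subsetneq\supp(c)$. Suppose, for contradiction, that $\frac{c'_i}{c_i}\in\F_q$ for all $i\in\supp(c')$. Since $c'_i\neq0$ there, these ratios lie in $\F_q^*$. Definition~\ref{def:min_codewords} then gives $c'=\lambda c$ with $\lambda\in\F_q^*$. But this forces $\supp(c')=\supp(c)$, contradicting strict inclusion. Hence some $i\in\supp(c')$ has $\frac{c'_i}{c_i}\notin\F_q$.

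($\Leftarrow$) Assume the stated condition holds, and let $c'\neq0$ satisfy $\supp(c')\subseteq\supp(c)$ with $\frac{c'_i}{c_i}\in\F_q^*$ on $\supp(c')$.
\begin{itemize}
\item If the inclusion is strict, the hypothesis produces an index with ratio outside $\F_q$, contradicting the assumption. So this case cannot occur.
\item Hence $\supp(c')=\supp(c)$. Apply the observation above and obtain $c''$. If $c''\neq0$, then $c''$ is a nonzero codeword with $\supp(c'')\subsetneq\supp(c)$ and all ratios $\frac{c''_i}{c_i}\in\F_q$, which contradicts the hypothesis. Therefore $c''=0$, that is, $c'=\lambda c$, and $c$ is minimal.
\end{itemize}

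The only step needing care is the construction of $c''$ by subtracting $\lambda c$ to strictly shrink the support. Its validity uses the $\F_q$-linearity of $\C$ together with $\lambda\in\F_q$, which is exactly why the ratios $c''_i/c_i$ stay in $\F_q$. Everything else is bookkeeping.
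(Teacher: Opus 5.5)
Your proof is correct and follows essentially the same route as the paper. The forward direction is the same contradiction with strict inclusion. The converse likewise rules out strict inclusion, then subtracts $\lambda c$ with $\lambda=c'_j/c_j\in\F_q^*$ to obtain a codeword of strictly smaller support whose ratios $c''_i/c_i$ all lie in $\F_q$; the hypothesis forces this codeword to be zero.
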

\begin{proof}
Suppose first that $c$ is minimal. Let $c'\in\C\backslash\{0\}$ be such that $\supp(c')\subsetneq \supp(c)$. Assume by contradiction that, for every $i\in\supp(c')$, we have $\frac{c_i'}{c_i}\in \F_q$. Since $i\in\supp(c')$, we have $c_i'\neq 0$, and hence $\frac{c_i'}{c_i}\in \F_q^\ast$ for every $i\in\supp(c')$. By minimality of $c$, there exists $\lambda\in\F_q^\ast$ such that $c'=\lambda c$. This contradicts the strict inclusion $\supp(c')\subsetneq \supp(c)$. Therefore, there exists $i\in\supp(c')$ such that $\frac{c_i'}{c_i}\notin \F_q$.

Conversely, suppose that $c\in\C\backslash\{0\}$ satisfies the stated condition. Let $c'\in\C\backslash\{0\}$ be such that $\supp(c')\subseteq \supp(c)$ and, for every $i\in\supp(c')$, $\frac{c_i'}{c_i}\in \F_q^\ast$. By the assumption, the inclusion $\supp(c')\subseteq\supp(c)$ cannot be strict. Hence, $\supp(c')=\supp(c)$. Pick $i\in\supp(c)$ and define $\lambda:=\frac{c_i'}{c_i}\in\F_q^\ast$ and $u:=c'-\lambda c\in\C$.
If $u\ne 0$, then $\supp(u)\subsetneq \supp(c)$, since the $i$-th coordinate of $u$ is zero. Therefore, by the assumption, there would exist $j\in\supp(u)$ such that $\frac{u_j}{c_j}\notin\F_q$. But
$$
\frac{u_j}{c_j}
=
\frac{c_j'-\lambda c_j}{c_j}
=
\frac{c_j'}{c_j}-\lambda
\in\F_q,
$$
which is a contradiction. Hence $u=0$, so $c'=\lambda c$, with $\lambda\in\F_q^\ast$. Therefore, the codeword $c$ is minimal.
\end{proof}

As a consequence, we obtain that one-weight additive codes are minimal. 

\begin{proposition}\label{prop:one_weight_additive_minimal}
Let $\C$ be a one-weight additive code. Then $\C$ is a minimal additive code.
\end{proposition}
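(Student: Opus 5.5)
The plan is to use the characterization of minimal codewords in Proposition~\ref{prop:minimal_codeword_characterization}. Its condition only concerns codewords $c'$ with $\supp(c')\subsetneq\supp(c)$. In a one-weight code no such nonzero codeword exists, so the condition holds vacuously.

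Concretely, let $\C$ be one-weight with common nonzero weight $w$, and fix $c\in\C\setminus\{0\}$. Suppose $c'\in\C\setminus\{0\}$ satisfies $\supp(c')\subsetneq\supp(c)$. Then
$$\wt(c')=|\supp(c')|<|\supp(c)|=\wt(c)=w,$$
which contradicts $\wt(c')=w$, since $c'\neq 0$. So the set of nonzero codewords whose support is strictly contained in $\supp(c)$ is empty. The criterion of Proposition~\ref{prop:minimal_codeword_characterization} is therefore satisfied vacuously, and $c$ is a minimal codeword.

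Since $c$ was an arbitrary nonzero codeword, every nonzero codeword of $\C$ is minimal, so $\C$ is a minimal additive code by definition. There is no real obstacle. The only point needing care is that Proposition~\ref{prop:minimal_codeword_characterization} handles the equal-support case (where $c'$ must be an $\F_q^*$-multiple of $c$), so I will cite it rather than argue from Definition~\ref{def:min_codewords} directly.
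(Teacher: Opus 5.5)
Your proof is correct and follows essentially the same route as the paper. The paper likewise invokes the characterization of minimal codewords and observes that a nonzero codeword with strictly smaller support would have strictly smaller weight, which is impossible in a one-weight code.
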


\begin{proof}
Let $c\in\C\setminus\{0\}$. We prove that $c$ is minimal. By
Proposition~\ref{prop:minimal_codeword_characterization}, it is enough to show that
there is no nonzero codeword $c'\in\C$ such that
$\supp(c')\subsetneq\supp(c)$. Indeed, if such a codeword existed, then
$\wt(c')<\wt(c)$, contradicting the fact that $\C$ is one-weight. Therefore every
nonzero codeword of $\C$ is minimal, and hence $\C$ is a minimal additive code.
\end{proof}

The following result is essential for describing a geometric framework for the study of minimal additive codes.

\begin{proposition}\label{prop:minimality_inv}
Let $\mC, \mC'$ be two $[n,k/h,d]_q^h$ equivalent additive codes. Then $\mC$ is minimal if and only if $\mC'$ is minimal.
\end{proposition}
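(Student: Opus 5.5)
Since the equivalence relation is generated by coordinate permutations and by applying an $\F_q$-linear automorphism of $\F_{q^h}$ separately in each coordinate, and since being equivalent is symmetric, I will show that each of these two kinds of maps sends minimal codes to minimal codes. Concretely, let $\sigma$ be a permutation of $[n]$, let $\phi_1,\ldots,\phi_n$ be $\F_q$-linear bijections of $\F_{q^h}$, and define
$$\Phi:\F_{q^h}^n\to\F_{q^h}^n,\qquad \Phi(c)=\bigl(\phi_1(c_{\sigma(1)}),\ldots,\phi_n(c_{\sigma(n)})\bigr).$$
Then $\mC'=\Phi(\mC)$, and $\Phi$ is an $\F_q$-linear bijection from $\mC$ onto $\mC'$; its inverse has the same form. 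It therefore suffices to prove one implication: if $\mC$ is minimal, then $\mC'$ is minimal.

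The first step records how $\Phi$ acts on supports. Because each $\phi_i$ is injective, we have $\supp(\Phi(c))=\sigma^{-1}(\supp(c))$. Hence $\Phi$ preserves inclusions and strict inclusions of supports.

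The key step, and the only delicate one, concerns the ratio condition in Definition~\ref{def:min_codewords}. A coordinate automorphism $\phi_i$ is only $\F_q$-linear, so it does not preserve ratios in $\F_{q^h}$ in general. What we actually need is weaker: for $x,y\in\F_{q^h}^*$, we need $y/x\in\F_q^*$ if and only if $\phi_i(y)/\phi_i(x)\in\F_q^*$. For the forward direction, if $y=\mu x$ with $\mu\in\F_q^*$, then $\F_q$-linearity gives $\phi_i(y)=\mu\phi_i(x)$. The converse follows by applying the same argument to $\phi_i^{-1}$. Thus the relation ``$c'_i$ is an $\F_q^*$-multiple of $c_i$'' is preserved coordinatewise (after reindexing by $\sigma$). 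Likewise, $c'=\lambda c$ with $\lambda\in\F_q^*$ holds if and only if $\Phi(c')=\lambda\Phi(c)$, again by $\F_q$-linearity and injectivity of $\Phi$.

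To conclude, take a nonzero $d\in\mC'$ and a nonzero $d'\in\mC'$ with $\supp(d')\subseteq\supp(d)$ and $d'_i/d_i\in\F_q^*$ for all $i\in\supp(d')$. Write $d=\Phi(c)$ and $d'=\Phi(c')$. The two steps above pull these conditions back to $c'$ and $c$. Minimality of $c$ in $\mC$ then gives $c'=\lambda c$ for some $\lambda\in\F_q^*$, and applying $\Phi$ yields $d'=\lambda d$. Hence every nonzero codeword of $\mC'$ is minimal. The only step needing care is the ratio-invariance argument, and it holds precisely because the automorphisms are $\F_q$-linear and the scalars in the definition lie in $\F_q$.
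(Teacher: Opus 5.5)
Your proof is correct and essentially the same as the paper's. Both reduce to one implication, use that each $\F_q$-linear coordinate map preserves supports and $\F_q^*$-proportionality, and transfer $c'=\lambda c$ through the injective equivalence map. The only cosmetic difference is direction: the paper pushes the hypotheses forward from $\mC$ to $\mC'$, so it needs only that $y=\mu x$ implies $\varphi_i(y)=\mu\varphi_i(x)$, whereas you pull them back from $\mC'$ to $\mC$, which is why you also invoke $\phi_i^{-1}$.
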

\begin{proof}
It is enough to prove one implication. Suppose that $\mC'$ is minimal.
Since $\mC$ and $\mC'$ are equivalent, there exist a permutation $\sigma \in \mathcal{S}_n$ and $\Fq$-linear automorphisms $\varphi_1,\ldots,\varphi_n$ of $\Fqh$ such that
$$\mC'=\{(\varphi_1(c_{\sigma(1)}),\ldots,\varphi_n(c_{\sigma(n)})) \; : \; c=(c_1,\ldots,c_n)\in \mC\}.$$
Let $c,c' \in \mC \backslash \{0\}$ be such that $\supp(c') \subseteq \supp(c)$ and, for every $i \in \supp(c')$, $\frac{c_i'}{c_i} \in \Fq^\ast$. We want to prove that there exists $\lambda \in \Fq^\ast$ such that $c'= \lambda c$.
Let
$x:=(\varphi_1(c_{\sigma(1)}),\ldots,\varphi_n(c_{\sigma(n)}))$
and $x':=(\varphi_1(c'_{\sigma(1)}),\ldots,\varphi_n(c'_{\sigma(n)}))$
be the corresponding nonzero codewords of $\mC'$. It is clear that $\supp(c')\subseteq \supp(c)$ implies $\supp(x')\subseteq \supp(x)$.
Let $i \in \supp(x')$. Then $\sigma(i)\in\supp(c')$, and there exists $\mu_i\in\Fq^\ast$ such that
$c'_{\sigma(i)}=\mu_i c_{\sigma(i)}$. Since $\varphi_i$ is $\Fq$-linear, we have
$$
x_i'=\varphi_i(c'_{\sigma(i)})
=\varphi_i(\mu_i c_{\sigma(i)})
=\mu_i\varphi_i(c_{\sigma(i)})
=\mu_i x_i.
$$
Therefore, $\frac{x_i'}{x_i}\in\Fq^\ast$ for every $i\in\supp(x')$. Hence, by the minimality of $\mC'$, there exists $\lambda \in \Fq^\ast$ such that $x'=\lambda x$. Since each $\varphi_i$ is injective, this implies that $c'=\lambda c$. Therefore, $\mC$ is minimal.
\end{proof}

\subsection{The geometry of minimal additive codes}
We now introduce the geometric object corresponding to minimal additive codes. It generalizes outer strong blocking sets; see
\cite[Definition~18]{alfarano2024outer}.

\begin{definition}
An $h$-$[n,k,d]_q$ projective system $\{\pi_1,\ldots,\pi_n\}$ is an \textbf{additive strong blocking set} in $\PG(k-1,q)$ if $\bigcup\limits_{i=1}^n \pi_i$ is a strong blocking set in $\PG(k-1,q)$.
\end{definition}

\begin{remark}
Additive strong blocking sets and outer strong blocking sets are related but distinct. Outer strong blocking sets arise from points in $\PG(k/h-1,q^h)$ whose field reduction gives rise to subspaces of a Desarguesian $(h-1)$-spread in $\PG(k-1,q)$, whereas additive strong blocking sets are defined more generally, requiring only that the union of their subspaces forms a strong blocking set in $\PG(k-1,q)$. In particular, every outer strong blocking set gives rise to an additive strong blocking set, but not every additive strong blocking set gives rise to an outer strong blocking set. We will clarify the distinction in Section~\ref{sec:Linear_case}.
\end{remark}

\begin{proposition}\label{prop:sbs_invariant}
Let $\chi$ and $\psi$ be two equivalent $h$-$[n,k,d]_q$ projective systems. Then $\chi$ is an additive strong blocking set if and only if $\psi$ is an additive strong blocking set.
\end{proposition}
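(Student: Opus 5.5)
By symmetry of equivalence it suffices to prove one implication. We assume $\chi$ is an additive strong blocking set and deduce the same for $\psi$. Fix $\varphi\in\mathrm{PGL}_k(\Fq)$ with $\varphi(\chi)=\psi$, and let $A\in\mathrm{GL}_k(\Fq)$ be a representative. Since $\varphi^{-1}$ is again a projectivity mapping $\psi$ to $\chi$, the reverse direction is identical.

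The first step is to check that $\varphi$ transports unions. Because $\varphi$ is a bijection on points of $\PG(k-1,q)$ and maps each $\pi_i$ onto some $\psi_{\tau(i)}$ (with $\tau$ a permutation of $[n]$ matching the multisets), we get $\varphi\bigl(\bigcup_i \pi_i\bigr)=\bigcup_i \varphi(\pi_i)=\bigcup_j \psi_j$. Let $\M:=\bigcup_i\pi_i$ and $\M':=\bigcup_j\psi_j=\varphi(\M)$.

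The second step is to verify that projectivities preserve strong blocking sets. The map $\varphi$ induces a bijection $H\mapsto\varphi(H)$ on hyperplanes; concretely, $\langle u\rangle_{\Fq}^\perp\mapsto\langle uA^{-1}\rangle_{\Fq}^\perp$ for column action $x\mapsto Ax$. Injectivity of $\varphi$ on points gives
$$\varphi(\M\cap H)=\varphi(\M)\cap\varphi(H)=\M'\cap\varphi(H).$$
Now let $K,K'$ be hyperplanes with $\M'\cap K\subseteq\M'\cap K'$. We write $K=\varphi(H)$ and $K'=\varphi(H')$, then apply $\varphi^{-1}$ to obtain $\M\cap H\subseteq\M\cap H'$. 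Since $\M$ is a strong blocking set, $H=H'$, and hence $K=K'$. The converse implication $K=K'\Rightarrow$ inclusion is trivial.

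It remains to check that $\M'$ is not contained in a hyperplane and that $\psi$ is still an $h$-projective system with the same parameters. If $\M'\subseteq K=\varphi(H)$, then $\M\subseteq H$, which is a contradiction. Moreover, $\varphi$ preserves the dimensions of subspaces and the containment counts $|\{i:\pi_i\subseteq H\}|$, so $\psi$ has the same parameters (this is in fact part of the hypothesis).

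No step is a real obstacle. The only point needing care is the bookkeeping that $\varphi$ sends the multiset of subspaces to the multiset, and hence the union to the union, together with the fact that the intersection with a hyperplane commutes with $\varphi$. Both follow from $\varphi$ being a bijection on points that maps hyperplanes to hyperplanes.
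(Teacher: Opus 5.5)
Your proof is correct and follows essentially the same route as the paper: take hyperplanes with the inclusion for the union of $\psi$, pull back through $\varphi^{-1}$ to get the inclusion for the union of $\chi$ with the hyperplanes $\varphi^{-1}(K),\varphi^{-1}(K')$, and conclude equality. Your extra checks (the union maps to the union, intersections commute with $\varphi$, and non-containment in a hyperplane) are bookkeeping the paper leaves implicit.
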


\begin{proof}
Let us denote $\chi=\{\pi_1,\ldots,\pi_n\}$ and $\psi=\{\mu_1,\ldots,\mu_n\}$. Suppose that $\chi$ is an additive strong blocking set. Since $\chi$ and $\psi$ are equivalent, there exists a projective transformation $\varphi$ mapping $\chi$ to $\psi$. Let $H,H'$ be two hyperplanes such that
$$
\left(\bigcup_i \mu_i\right)\cap H \subseteq \left(\bigcup_i \mu_i\right)\cap H'.
$$
Equivalently,
$$
\left(\bigcup_i \varphi(\pi_i)\right)\cap H \subseteq \left(\bigcup_i \varphi(\pi_i)\right)\cap H'.
$$
Applying $\varphi^{-1}$ to this inclusion, we obtain
$$
\left(\bigcup_i \pi_i\right)\cap \varphi^{-1}(H) \subseteq \left(\bigcup_i \pi_i\right)\cap \varphi^{-1}(H').
$$
Since $\varphi^{-1}(H)$ and $\varphi^{-1}(H')$ are hyperplanes and $\chi$ is an additive strong blocking set, we get $\varphi^{-1}(H)=\varphi^{-1}(H')$, and  hence $H=H'$. Therefore, $\psi$ is an additive strong blocking set.
The converse follows by applying the same argument to the inverse projective transformation.
\end{proof}

In the rest of this section, we show that equivalence classes of minimal additive codes are in one-to-one correspondence with equivalence classes of additive strong blocking sets. First of all, observe that, by Propositions~\ref{prop:minimality_inv} and~\ref{prop:sbs_invariant}, the properties of being minimal and of being an additive strong blocking set are invariant under equivalence. We introduce the following useful notation.

\begin{notation}
Let $\C$ be an $[n,k/h,d]_q^h$ nondegenerate additive code. We define an $\Fq$-linear code $\widehat{\C} \subseteq \Fq^{nq^h}$ as follows.
Let $G$ be a generator matrix of $\C$, and let $\widetilde{G}$ be the matrix defined in Eq.~\eqref{eq:Gtilde}. With a slight abuse of notation, we denote the columns of $\widetilde{G}$ by
$$
(P_1^1,\ldots,P_h^1 \mid \ldots \mid P_1^n,\ldots,P_h^n),
$$
where the columns are grouped into $n$ blocks of size $h$.  We denote by $\widetilde{\C}$ the $\Fq$-linear code generated by the rows of $\widetilde{G}$.
For every $i\in[n]$, let $U_i$ be the $\Fq$-subspace of $\Fq^k$ generated by $P_1^i,\ldots,P_h^i$. We extend the matrix $\widetilde{G}$ to a matrix
$$\widehat{G}=\left(
\widehat{G}_1 \mid \ldots \mid \widehat{G}_n\right),
$$
where each block $\widehat{G}_i$ has $q^h$ columns. More precisely, the columns of $\widehat{G}_i$ are the vectors of $U_i$, each appearing at least once, and repeated zero columns are added if necessary so that the block has exactly $q^h$ columns. We choose the ordering in such a way that the first $h$ columns of $\widehat G_i$
are $P_1^i,\ldots,P_h^i$. Finally, we define $\widehat{\C}$ as the $\Fq$-linear code generated by the rows of $\widehat{G}$.
This construction is well-defined up to equivalence of linear codes, since different choices of the ordering of the columns in each block, or of the repeated zero columns, only give equivalent codes.
\end{notation}

\begin{lemma}\label{lem:minimal_extended}
Let $\C$ be a $[n,k/h,d]_q^h$ minimal additive code. Then $\widehat{\C}\subseteq \F_q^{nq^h}$ is a $k$-dimensional minimal $\F_q$-linear code. 
\end{lemma}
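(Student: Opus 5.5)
The proof reduces minimality of $\widehat{\C}$ to minimality of $\C$ through a coordinatewise dictionary between codewords. First I would settle the dimension. The columns of $\widehat G$ include all columns of $\widetilde G$. The rows of $\widetilde G$ are $\F_q$-linearly independent, because $uG=0$ in $\F_{q^h}^n$ is equivalent to $u\widetilde G=0$ via the basis $\mathcal{B}$, and the rows of $G$ are independent over $\F_q$. Hence $\widehat G$ has rank $k$, and the map $u\mapsto u\widehat G$ is injective. Every codeword of $\widehat\C$ is then uniquely $\hat c=u\widehat G$, corresponding to $c=uG\in\C$.

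Next I would describe the support of $\hat c$ in terms of $c$. In block $i$, the entries of $\hat c$ are $u\cdot v$ for $v\in U_i$. Thus block $i$ of $\hat c$ is identically zero if and only if $u\in U_i^\perp$, i.e.\ $\pi_i\subseteq\langle u\rangle^\perp$, i.e.\ $c_i=0$. More precisely, the $i$-th block of $\hat c$ is the evaluation of the linear functional $f_{u,i}:=u\cdot(-)|_{U_i}$ on all vectors of $U_i$. Moreover, $c_i=\sum_t (u\cdot P^i_t)\alpha_t$ encodes exactly the values of $f_{u,i}$ on the spanning set $P^i_1,\dots,P^i_h$. Hence $c_i$ determines $f_{u,i}$ and conversely. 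The zero set of $f_{u,i}$ within the block is $\ker f_{u,i}$, which is a hyperplane of $U_i$ when $c_i\neq0$.

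Now take nonzero $\hat c=u\widehat G$ and $\hat c'=u'\widehat G$ with $\supp(\hat c')\subseteq\supp(\hat c)$. I want to show $\hat c'=\lambda\hat c$; the statement is an equivalence, and the reverse direction is trivial. Blockwise, $\supp(\hat c')\cap\text{block}_i\subseteq\supp(\hat c)\cap\text{block}_i$ means $\ker f_{u,i}\subseteq\ker f_{u',i}$ on $U_i$. This has two consequences. If $c_i=0$ then $f_{u,i}=0$, so $f_{u',i}=0$ and $c'_i=0$, giving $\supp(c')\subseteq\supp(c)$. If $c_i\ne0$, then $\ker f_{u,i}$ is a hyperplane of $U_i$ contained in $\ker f_{u',i}$. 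So either $f_{u',i}=0$, or the two functionals have the same kernel and hence $f_{u',i}=\mu_i f_{u,i}$ with $\mu_i\in\F_q^*$. Evaluating on the $P^i_t$ gives $c'_i=\mu_i c_i$.

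Therefore $c'$ satisfies the hypothesis of Definition~\ref{def:min_codewords} with respect to $c$. By minimality of $\C$, $c'=\lambda c$ for some $\lambda\in\F_q^*$, so $u'G=\lambda uG$. Injectivity then gives $u'=\lambda u$, hence $\hat c'=\lambda\hat c$, and $\widehat\C$ is minimal. The main obstacle I expect is stating cleanly that a block of $\hat c$ is exactly the value table of $f_{u,i}$ on $U_i$, together with the kernel-containment argument for linear functionals. The padded zero columns are harmless, since they lie outside every support.
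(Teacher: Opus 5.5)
Your proof is correct and follows essentially the same route as the paper: you pass from $\hat c,\hat c'$ to the corresponding $c,c'\in\C$, use that every vector of $U_i$ occurs as a column of block $i$ to get $\supp(c')\subseteq\supp(c)$ and blockwise $\F_q^\ast$-proportionality, and then invoke minimality of $\C$. Your kernel-containment argument for the functionals $f_{u,i}$ is the dual form of the paper's ``linear combination of the first $h$ coordinates vanishing on $x$ but not on $x'$'' step, and your use of injectivity of $u\mapsto u\widehat G$ replaces the paper's remark that the remaining block coordinates are linear combinations of the first $h$.
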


\begin{proof}
The code $\widehat{\C}$ has dimension $k$, since $\widehat{G}$ and $\widetilde{G}$ have the same row space over $\Fq$, and $\widetilde{G}$ has rank $k$. Let
$$
x:=(x_1^1,\ldots,x_{q^h}^1,\ldots,x_1^n,\ldots,x_{q^h}^n),
\quad
x':=({x_1^1}',\ldots,{x_{q^h}^1}',\ldots,{x_1^n}',\ldots,{x_{q^h}^n}')
$$
be two nonzero codewords of $\widehat{\C}$ such that $\supp(x')\subseteq\supp(x)$.
By the construction of $\widehat{G}$, the first $h$ coordinates in each block give codewords
$$
\tilde{c}:=(x_1^1,\ldots,x_h^1,\ldots,x_1^n,\ldots,x_h^n)\in\widetilde{\C}\backslash\{0\}
$$
and
$$
\tilde{c}':=({x_1^1}',\ldots,{x_h^1}',\ldots,{x_1^n}',\ldots,{x_h^n}')\in\widetilde{\C}\backslash\{0\}.
$$
Thus, with respect to the basis $\mathcal{B}=\{\alpha_1,\ldots,\alpha_h\}$ of $\F_{q^h}$ over $\Fq$, they correspond to the codewords
$$
c:=\left(\sum_{i=1}^h x_i^1\alpha_i,\ldots,\sum_{i=1}^h x_i^n\alpha_i\right)\in\C\backslash\{0\}
$$
and
$$
c':=\left(\sum_{i=1}^h {x_i^1}'\alpha_i,\ldots,\sum_{i=1}^h {x_i^n}'\alpha_i\right)\in\C\backslash\{0\}.
$$

We prove that there exists $\lambda \in \Fq^\ast$ such that $x'= \lambda x$. First, since $\supp(x') \subseteq \supp(x)$, we have $\supp(\tilde{c}') \subseteq \supp(\tilde{c})$. Therefore, $\supp(c') \subseteq \supp(c)$. Indeed, if $c_j=\sum\limits_{i=1}^h x_i^j \alpha_i=0$, then $x_i^j=0$ for every $i \in [h]$, since $\mathcal{B}$ is a basis of $\F_{q^h}$ over $\F_q$. Thus, ${x_i^j}'=0$ for every $i \in [h]$, precisely because $\supp(\tilde{c}') \subseteq \supp(\tilde{c})$, and then $c_j'=\sum\limits_{i=1}^h {x_i^j}' \alpha_i=0$.
Moreover, let $j \in \supp(c')$. We show that $\frac{c_j'}{c_j}\in \Fq^\ast$, that is, there exists $\lambda_j \in\Fq^\ast$ such that ${x_i^j}'=\lambda_j x_i^j$ for every $i \in [h]$. Suppose by contradiction that this is not the case. Then, since the two $h$-tuples $(x_1^j,\ldots,x_h^j)$ and $({x_1^j}',\ldots,{x_h^j}')$ are not proportional over $\Fq$, there exists an $\Fq$-linear combination of the first $h$ coordinates of the $j$-th block which is zero on $x$ and nonzero on $x'$. By construction, this linear combination appears as a coordinate in the $j$-th block of $\widehat{G}$. Hence, in the corresponding coordinate, the codeword $x$ is zero while the codeword $x'$ is nonzero, contradicting $\supp(x')\subseteq \supp(x)$. Therefore, there exists $\lambda_j \in\Fq^\ast$ such that ${x_i^j}'=\lambda_j x_i^j$ for every $i \in [h]$. Hence, $c_j'=\lambda_j c_j$, and then $\frac{c_j'}{c_j}\in\Fq^\ast$. Since $\C$ is a minimal additive code, there exists $\lambda \in \Fq^\ast$ such that $c'=\lambda c$. This means that, for every block, the first $h$ coordinates of $x'$ are equal to $\lambda$ times the first $h$ coordinates of $x$. Since the remaining coordinates in each block are $\Fq$-linear combinations of the first $h$ coordinates, we also have $x'=\lambda x$. Therefore, the linear code $\widehat{\C}$ is minimal. Notice that $\widehat{\C}$ is in general degenerate, since each block of $\widehat{G}$ contains the zero vector of $U_j$. This does not affect the argument above. When we use $\widehat{\C}$ to obtain a projective system, we simply delete the zero columns of $\widehat{G}$. The resulting punctured code is still minimal, and its associated projective system is precisely, as a set, $\bigcup\limits_{j=1}^n \pi_j.$
\end{proof}

The following is the main result of this section.

\begin{theorem}\label{thm:minimal_additive_sbs}  
Equivalence classes of $[n,k/h,d]_q^h$ nondegenerate minimal additive codes are in one-to-one correspondence with equivalence classes of $h$-$[n,k,d]_q$ projective systems that are additive strong blocking sets.
\end{theorem}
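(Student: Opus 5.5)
We will build on the correspondence of \cite[Theorem 5]{ball2025griesmer} between nondegenerate $[n,k/h,d]_q^h$ additive codes and $h$-$[n,k,d]_q$ projective systems. It is well defined on equivalence classes, and by Propositions~\ref{prop:minimality_inv} and~\ref{prop:sbs_invariant} both minimality and the additive strong blocking set property are class invariants. It therefore suffices to fix a code $\C$ with generator matrix $G$ and associated system $\chi(\C)=\{\pi_1,\ldots,\pi_n\}$, and to prove that $\C$ is minimal if and only if $\bigcup_i\pi_i$ is a strong blocking set. The link is the linear code $\widehat{\C}$: after deleting its zero columns, its projective system is exactly the set $\bigcup_i \pi_i$, as noted at the end of the proof of Lemma~\ref{lem:minimal_extended}. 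By the classical correspondence of \cite{alfarano2022geometric,tang2021full}, $\bigcup_i\pi_i$ is a strong blocking set if and only if this punctured $\widehat{\C}$ is minimal. Deleting zero columns does not change supports in any essential way, so this is equivalent to $\widehat{\C}$ being minimal.

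For minimal $\Rightarrow$ strong blocking set, we apply Lemma~\ref{lem:minimal_extended} directly. We also need $\bigcup_i\pi_i$ not to lie in a hyperplane, and this follows from nondegeneracy: the $\pi_i$ span $\PG(k-1,q)$, because $\widetilde G$ has rank $k$.

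For the converse we would argue directly, for clarity. Assume $\bigcup_i\pi_i$ is strong blocking, and take $c=uG$ and $c'=u'G$ nonzero with $\supp(c')\subseteq\supp(c)$ and $c'_i/c_i\in\Fq^*$ on $\supp(c')$. We aim to show that $H=\langle u\rangle^\perp$ and $H'=\langle u'\rangle^\perp$ satisfy $(\bigcup\pi_j)\cap H\subseteq(\bigcup\pi_j)\cap H'$. Take a point $P=\langle v\rangle\in\pi_i\cap H$ and write $v=\widetilde G_i w$ with $w\in\Fq^h$, where $\widetilde G_i$ is the $i$-th block. The $i$-th coordinate of $c$ expanded in the basis $\mathcal B$ is the vector $u\widetilde G_i$, and similarly $u'\widetilde G_i$ for $c'$. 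Then $u\cdot v=(u\widetilde G_i)w$.

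We split into two cases. If $i\notin\supp(c)$, then $i\notin\supp(c')$, so $\pi_i\subseteq H'$. If $i\notin\supp(c')$, we again get $\pi_i\subseteq H'$. Otherwise $u'\widetilde G_i=\lambda_i u\widetilde G_i$ with $\lambda_i\in\Fq^*$, because $c'_i=\lambda_i c_i$ and the expansion over $\mathcal B$ is $\Fq$-linear and injective. Hence $u'\cdot v=\lambda_i(u\cdot v)=0$. In all cases $P\in H'$. The strong blocking property then gives $H=H'$, so $u'=\lambda u$ and $c'=\lambda c$, which shows $c$ is minimal.

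Finally, the correspondence preserves $n$, $k$ and $d$ by \cite[Theorem 5]{ball2025griesmer}, and it is bijective on classes because the underlying correspondence is. The main obstacle is keeping the equivalences compatible. The argument must be insensitive to the choice of basis of each $\pi_i$ and to the $\Fq$-automorphisms in each coordinate, and this is exactly what the invariance propositions provide. The remaining care is in handling multisets and repeated or zero columns in the passage through $\widehat{\C}$; the direct argument for the converse avoids this issue.
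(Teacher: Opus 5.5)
Your proposal is correct and follows the paper's proof essentially step for step: the forward direction goes through Lemma~\ref{lem:minimal_extended} and the classical correspondence applied to $\widehat{\C}$ with its zero columns removed. The converse is the same direct check that $(\bigcup_j\pi_j)\cap H\subseteq(\bigcup_j\pi_j)\cap H'$, using $u'\widetilde G_i=\lambda_i u\widetilde G_i$ blockwise and then the strong blocking property to force $H=H'$; the only quibble is that the $\pi_i$ span $\PG(k-1,q)$ because $\widetilde G$ has rank $k$, not because of nondegeneracy (which only makes each $\pi_i$ nonempty), though the reason you actually give is the right one.
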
 

\begin{proof}
The statement of the theorem is well-defined because of Propositions~\ref{prop:minimality_inv} and~\ref{prop:sbs_invariant}. Let $\C$ be an $[n,k/h,d]_q^h$ additive code and let $\chi=\{\pi_1,\ldots,\pi_n\}$ be its associated $h$-projective system. Let
$$
G=\left(
\begin{array}{c|c|c}
g_1 & \cdots & g_n
\end{array}
\right)
$$
be a generator matrix of $\C$.
If $\C$ is minimal, then, by Lemma~\ref{lem:minimal_extended}, $\widehat{\C}$ is a minimal linear code. Deleting the zero coordinates of $\widehat{\C}$, we obtain a nondegenerate
minimal linear code. Hence, by the classical correspondence between minimal linear codes and strong blocking sets, its associated projective system $\mathcal{P}(\widehat{\C})$ is a
strong blocking set. By the construction of $\widehat{G}$, this projective system is, as a set,
$$\mathcal{P}(\widehat{\C})=\bigcup_{i=1}^n \pi_i,
$$
we have that $\chi$ is an additive strong blocking set.

Conversely, suppose that $\chi$ is an additive strong blocking set. Then, for all hyperplanes $H,H'$ of $\PG(k-1,q)$, we have
\begin{equation}\label{eq:sbs-hyperplanes}
\left(\bigcup_{i=1}^n \pi_i\right)\cap H
\subseteq
\left(\bigcup_{i=1}^n \pi_i\right)\cap H'
\quad \Longleftrightarrow \quad
H=H'.
\end{equation}
Let $c=uG$ and $c'=u'G$ be two nonzero codewords of $\C$ such that $\supp(c')\subseteq\supp(c)$ and, for every $i\in\supp(c')$, $\frac{c_i'}{c_i}\in\Fq^\ast$. We have to show that there exists $\lambda\in\Fq^\ast$ such that $c'=\lambda c$.

Let $H=\langle u\rangle_{\Fq}^\perp$ and $H'=\langle u'\rangle_{\Fq}^\perp$.
It is enough to prove that $H=H'$. Indeed, if $H=H'$, then $\langle u\rangle_{\Fq}=\langle u'\rangle_{\Fq}$, and so $u'=\lambda u$ for some $\lambda\in\Fq^\ast$. Hence $c'=u'G=\lambda uG=\lambda c$.
Since $\supp(c')\subseteq\supp(c)$, we have
$$
\{i\in[n] : \pi_i \subseteq H\}
\subseteq
\{i\in[n] : \pi_i \subseteq H'\}.
$$
Indeed, $\pi_i\subseteq H$ is equivalent to $c_i=0$, and $\pi_i\subseteq H'$ is equivalent to $c_i'=0$.

We now show that
$$\left(\bigcup_{i=1}^n \pi_i\right)\cap H \subseteq
\left(\bigcup_{i=1}^n\pi_i\right)\cap H'.$$
Let $P\in \left(\bigcup_{i=1}^n \pi_i\right)\cap H$. Then $P\in\pi_i$ for some $i$, and $P\in H$. If $\pi_i\subseteq H$, then, by the above inclusion, $\pi_i\subseteq H'$, and therefore $P\in H'$.

Suppose now that $\pi_i\not\subseteq H$. Then $c_i\neq 0$. Since $P\in H\cap\pi_i$, we need to prove that $P\in H'$. If $c_i'=0$, then $\pi_i\subseteq H'$, and there is nothing to prove. If $c_i'\neq 0$, then $i\in\supp(c')$, and by assumption there exists $\lambda_i\in\Fq^\ast$ such that $c_i'=\lambda_i c_i$. This means that the linear form associated with $u'-\lambda_i u$ vanishes on the whole subspace $\pi_i$, namely
$\pi_i\subseteq \langle u'-\lambda_i u\rangle_{\Fq}^\perp$.
Since $P\in H=\langle u\rangle_{\Fq}^\perp$, we get $P\in \langle u'\rangle_{\Fq}^\perp=H'$.
Therefore the desired inclusion holds.
By Eq.~\eqref{eq:sbs-hyperplanes}, we obtain $H=H'$. Hence $c'=\lambda c$ for some $\lambda\in\Fq^\ast$, and therefore $\C$ is a minimal additive code.
\end{proof}

\begin{example}
Consider $\F_4=\F_2(\omega)$, where $\omega$ is a root of $x^2+x+1\in\F_2[x]$.
Let $\mC$ be the $[3,3/2,2]_2^2$ additive code defined by the generator matrix
$$
G=
\begin{pmatrix}
1 & 1 & 1 \\
\omega & \omega^2 & \omega \\
0 & \omega & \omega
\end{pmatrix}.
$$
Choose $\{1,\omega\}$ as a basis of $\F_4$ over $\F_2$. The expanded generator matrix of $\mC$ is
$$
\widetilde{G}=
\left(
\begin{array}{cc|cc|cc}
1 & 0 & 1 & 0 & 1 & 0  \\
0 & 1 & 1 & 1 & 0 & 1  \\
0 & 0 & 0 & 1 & 0 & 1
\end{array}
\right).
$$
Thus, a $2$-$[3,3,2]_2$ projective system associated with $\mC$ is $\chi=\{\pi_1,\pi_2,\pi_3\}$, where
\begin{itemize}[itemsep=0.001em]
    \item $\pi_1=\{[1:0:0],[0:1:0],[1:1:0]\}$,
    \item $\pi_2=\{[1:1:0],[0:1:1],[1:0:1]\}$,
    \item $\pi_3=\{[1:0:0],[0:1:1],[1:1:1]\}$.
\end{itemize}
Hence, $\bigcup_{i=1}^3 \pi_i$ is the union of three distinct lines of $\PG(2,2)$ which form a strong blocking set; see, for instance,~\cite[Theorem 5.3]{alfarano2022geometric}. Therefore, $\chi$ is an additive strong blocking set in $\PG(2,2)$, and so the code $\mC$ is minimal. Note also that this example is not obtained by field reduction from an $\F_4$-linear code:
the three lines do not form a subset of a Desarguesian line spread of
$\PG(2,2)$, since such a spread does not exist in $\PG(2,2)$.
\end{example}

\subsection{The linear case and outer strong blocking sets}\label{sec:Linear_case}

In this section we compare additive strong blocking sets with the outer strong blocking
sets introduced in~\cite{alfarano2024outer}. This clarifies in which sense the additive
setting is more general.

Assume that $h\mid k$, and write $k=Kh$. An $\F_{q^h}$-linear $[n,K,d]_{q^h}$ code
can also be regarded as an additive $[n,k/h,d]^h_q$ code.
Geometrically, the usual projective system in $\PG(K-1,q^h)$ is sent by \emph{field
reduction} to a collection of $(h-1)$-spaces in $\PG(k-1,q)$ belonging to a
Desarguesian spread. Thus the outer framework is contained in the additive framework
as the special case in which all coordinate subspaces arise from field reduction.

We briefly recall the field reduction map in the notation used in this paper.
Let $\omega$ be a primitive element of $\F_{q^h}$ and fix the $\F_q$-basis $\mathcal B=\{1,\omega,\ldots,\omega^{h-1}\}$
of $\F_{q^h}$ over $\F_q$. Let
$$
\Phi:\F_{q^h}\longrightarrow \F_q^h
$$
be the $\F_q$-linear isomorphism defined by
$$
\Phi\left(\sum_{i=0}^{h-1}x_i\omega^i\right)=(x_0,\ldots,x_{h-1}).
$$
We extend it coordinatewise to an $\F_q$-linear isomorphism
$$
\bar\Phi:\F_{q^h}^K\longrightarrow \F_q^{Kh},
\qquad
(v_1,\ldots,v_K)\longmapsto(\Phi(v_1),\ldots,\Phi(v_K)).
$$

Let
$f=X^h+\sum\limits_{i=0}^{h-1}a_iX^i$
be the minimal polynomial of $\omega$. Since we use row vectors, we take
the companion matrix associated with multiplication by $\omega$ to be
$$
A=
\begin{pmatrix}
0 & 1 & 0 & \cdots & 0\\
0 & 0 & 1 & \cdots & 0\\
\vdots & \vdots & \vdots & \ddots & \vdots\\
0 & 0 & 0 & \cdots & 1\\
-a_0 & -a_1 & -a_2 & \cdots & -a_{h-1}
\end{pmatrix}
\in \M_h(\F_q).
$$
Thus $\Phi(x\omega)=\Phi(x)A$
for every $x\in\F_{q^h}$. For $y\in\F_{q^h}$, define
$$
A(y):=
\begin{cases}
A^r, & \text{if } 0\neq y=\omega^r,\\
0, & \text{if } y=0.
\end{cases}
$$
Then, for every $x,y\in\F_{q^h}$, we have
\begin{equation}\label{eq:propertyphi}
\Phi(xy)=\Phi(x)A(y).
\end{equation}
Let $P=\langle u\rangle_{\F_{q^h}}\in\PG(K-1,q^h)$, with
$u\in\F_{q^h}^K\setminus\{0\}$. The field reduction of $P$ is defined as
$$
\mathcal F(P):=
\PG\big(\bar\Phi(\langle u\rangle_{\F_{q^h}})\big)
=
\PG\big(\{\bar\Phi(\lambda u):\lambda\in\F_{q^h}\}\big)
\subseteq \PG(Kh-1,q).
$$
Since $\langle u\rangle_{\F_{q^h}}$ is $1$-dimensional over $\F_{q^h}$, the space
$\bar\Phi(\langle u\rangle_{\F_{q^h}})$ is $h$-dimensional over $\F_q$. Hence
$\mathcal F(P)$ is an $(h-1)$-dimensional projective subspace of $\PG(Kh-1,q)$.
The set $\{\mathcal{F}(P):P\in\PG(K-1,q^h)\}$ forms a Desarguesian
$(h-1)$-spread of $\PG(Kh-1,q)$; see, for instance,
\cite{Lavrauw2028Field,bartoli2023small,alfarano2024outer}.

\begin{proposition}\label{prop:linear_case_field_reduction}
Let $h\mid k$, and write $k=Kh$. Let $\mathcal D$ be a nondegenerate
$\F_{q^h}$-linear $[n,K,d]_{q^h}$ code, and let
$\mathcal P(\mathcal D)=\{P_1,\ldots,P_n\}$ be its associated projective system in
$\PG(K-1,q^h)$. Regard $\mathcal D$ as an additive $[n,k/h,d]^h_q$ code $\mC$.
Then $\{\mathcal F(P_1),\ldots,\mathcal F(P_n)\}\subseteq \PG(k-1,q)$ is an $h$-projective system associated with $\mC$.
In particular, $\mC$ is faithful, and its associated subspaces belong to a Desarguesian $(h-1)$-spread.
\end{proposition}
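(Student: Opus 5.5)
The plan is to fix a generator matrix of $\mathcal D$ over $\F_{q^h}$ and build from it an explicit $\F_q$-generator matrix of $\mC$. Then I will check that the column space of each block of the expanded matrix $\widetilde G$ equals $\bar\Phi$ of the corresponding $\F_{q^h}$-line, read in a suitable coordinate system. Since the paper notes that $\chi(\mC)$ does not depend on the choice of $G$ up to equivalence, any convenient choice is enough.

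Let $M\in\M_{K,n}(\F_{q^h})$ be a generator matrix of $\mathcal D$ with columns $m_1,\dots,m_n$, so that $P_i=\langle m_i\rangle_{\F_{q^h}}$ after the usual identification. As an $\F_q$-space, $\mathcal D=\{vM : v\in\F_{q^h}^K\}$. I will take the $k=Kh$ rows $\omega^s e_r M$, for $r\in[K]$ and $0\le s\le h-1$, as the $\F_q$-generator matrix $G$ of $\mC$. These rows are $\F_q$-independent because $\{\omega^s e_r\}$ is an $\F_q$-basis of $\F_{q^h}^K$ and $v\mapsto vM$ is injective, since $M$ has full rank $K$.

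The $\F_q$-column space of the $i$-th block of $\widetilde G$ equals the annihilator of the subspace $\{u\in\F_q^k : (uG)_i=0\}$, because the linear forms vanishing on that column space are exactly those $u$. Under the isomorphism $\F_q^k\cong\F_{q^h}^K$ given by $u\mapsto v=\sum u_{r,s}\omega^s e_r$, we have $(uG)_i=v\cdot m_i$. So the kernel is $\{v : v\cdot m_i=0\}$, which is an $\F_{q^h}$-hyperplane of $\F_{q^h}^K$. Hence the $i$-th block spans an $h$-dimensional $\F_q$-space, and $\mC$ is faithful.

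The main obstacle is to identify this $h$-space with $\bar\Phi(\langle m_i\rangle_{\F_{q^h}})$ up to one fixed element of $\mathrm{GL}_k(\F_q)$ that does not depend on $i$. I will use the trace form: the pairing $(v,w)\mapsto \mathrm{Tr}_{q^h/q}(v\cdot w)$ is nondegenerate and $\F_q$-bilinear, and it identifies $\F_{q^h}^K$ with its $\F_q$-dual. The $\F_q$-annihilator of $\{v : v\cdot m_i=0\}$ under this pairing is $\langle m_i\rangle_{\F_{q^h}}$. This holds because the trace pairing makes $\F_{q^h}$-orthogonality and $\F_q$-orthogonality of $\F_{q^h}$-subspaces coincide, as scaling by $\F_{q^h}$ is allowed inside the trace.

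The change from the trace pairing to the standard dot product on $\F_q^k$ in $\bar\Phi$-coordinates is a single fixed invertible matrix, namely the Gram matrix of the trace form on $\mathcal B$ placed block-diagonally. It is independent of $i$, so it is a projectivity $\varphi\in\mathrm{PGL}_k(\F_q)$ mapping each block space to $\mathcal F(P_i)$. Equivalently, one can verify directly with Eq.~\eqref{eq:propertyphi} that $(uG)_i=0$ if and only if $u$ annihilates $\bar\Phi(\lambda m_i)$ for all $\lambda$, after applying that fixed matrix.

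Therefore $\chi(\mC)$ is equivalent to $\{\mathcal F(P_1),\dots,\mathcal F(P_n)\}$. Since each $\mathcal F(P_i)$ lies in the Desarguesian spread and $\varphi$ maps a Desarguesian spread to a Desarguesian spread, the final claim follows. One could equally choose $G$ so that $\varphi$ is the identity.
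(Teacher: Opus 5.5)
Your argument is correct, and it differs from the paper's in the key step.

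\textbf{Common ground.} You and the paper use the same $\F_q$-generator matrix: the paper's $G$ has entry $\omega^s m_{r,j}$ in row $(r,s)$, i.e.\ rows $\omega^s e_r M$. Your identity $uG=vM$ is exactly the paper's check that $\{xG : x\in\F_q^k\}=\{yM : y\in\F_{q^h}^K\}$.

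\textbf{Where the routes differ.} To identify the block spans, the paper writes $\widetilde G$ as the block matrix $\bigl(A(m_{r,j})\bigr)$ of companion-matrix powers. It then asserts that the columns of $\widetilde G_i$ are $\bar\Phi(\omega^t c_i)$, and concludes $\colsp_{\F_q}(\widetilde G_i)=\bar\Phi(\langle c_i\rangle_{\F_{q^h}})$ on the nose. You instead realise the block span as the $\F_q$-annihilator of the $\F_{q^h}$-hyperplane $\{v : v\cdot m_i=0\}$, and compute that annihilator with the trace form. Your route is coordinate-free and gives faithfulness at once by a dimension count. It also exposes the fixed change of coordinates $D=\mathrm{diag}(\Gamma,\ldots,\Gamma)$, with $\Gamma_{s,t}=\mathrm{Tr}(\omega^{s+t})$.

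\textbf{Your route is the more accurate one.} The vector $\bar\Phi(\omega^t c_i)$ stacks the $t$-th \emph{rows} of the matrices $A(m_{r,i})$, whereas the columns of $\widetilde G_i$ stack their $t$-th \emph{columns}. So the paper's column-by-column identification needs every $A(m_{r,i})$ to be symmetric. This fails in general, for instance for $A$ itself when $h\ge 3$. What holds in general is $\colsp_{\F_q}(\widetilde G_i)=\bar\Phi(\langle c_i\rangle_{\F_{q^h}})\,D$, exactly as you find. This suffices, since associated $h$-projective systems are only defined up to $\mathrm{PGL}_k(\F_q)$.

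\textbf{A concrete version of your closing remark.} Take the rows $\beta_s e_r M$, where $\{\beta_s\}$ is the trace-dual basis of $\mathcal B$. With this generator matrix one gets $\chi_G(\mC)=\{\mathcal F(P_1),\ldots,\mathcal F(P_n)\}$ literally, with no projectivity needed.
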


\begin{proof}
Let $M=
\left(
\begin{array}{c|c|c|c}
\textcolor{red}{c_1} &
\textcolor{ForestGreen}{c_2} &
\cdots &
\textcolor{blue}{c_n}
\end{array}\right)=(m_{i,j}) \in \M_{K,n}(\F_{q^h})$
be a generator matrix of $\mathcal D$ over $\F_{q^h}$. Thus, considering
$\mC$ and $\mathcal{D}$ equal as sets, we have that
$\mC=\mathcal D=\{yM : y \in \F_{q^h}^K\}$. Moreover, $P_i$ is the point of
$\PG(K-1,q^h)$ represented by $\langle c_i\rangle_{\F_{q^h}}$.
For each $i\in[n]$, the field reduction of $P_i$ is
$$
\mathcal F(P_i)
=
\PG\big(\bar\Phi(\langle c_i\rangle_{\F_{q^h}})\big)
=
\PG\big(\{\bar\Phi(\lambda c_i):\lambda\in\F_{q^h}\}\big)
\subseteq \PG(k-1,q).
$$
This is an $(h-1)$-space, since $\langle c_i\rangle_{\F_{q^h}}$ is
$1$-dimensional over $\F_{q^h}$ and hence $h$-dimensional over $\F_q$
after applying $\bar\Phi$.
Now, expand the entries of $M$ with respect to the basis
$\mathcal B=\{1,\omega,\ldots,\omega^{h-1}\}$ of $\F_{q^h}$ over $\F_q$.
The expanded matrix has the form
$$
\widetilde G=
\left(
\begin{array}{c|c|c|c}
\textcolor{red}{\widetilde G_1} &
\textcolor{ForestGreen}{\widetilde G_2} &
\cdots &
\textcolor{blue}{\widetilde G_n}
\end{array}
\right)=\begin{pmatrix}
\textcolor{red}{b^{1}_{1,1}} & \cdots & \textcolor{red}{b^{h}_{1,1}} &
\textcolor{ForestGreen}{b^{1}_{1,2}} & \cdots & \textcolor{ForestGreen}{b^{h}_{1,2}} & \cdots &
\textcolor{blue}{b^{1}_{1,n}} & \cdots & \textcolor{blue}{b^{h}_{1,n}} \\
\textcolor{red}{b^{1}_{2,1}} & \cdots & \textcolor{red}{b^{h}_{2,1}} &
\textcolor{ForestGreen}{b^{1}_{2,2}} & \cdots & \textcolor{ForestGreen}{b^{h}_{2,2}} & \cdots &
\textcolor{blue}{b^{1}_{2,n}} & \cdots & \textcolor{blue}{b^{h}_{2,n}} \\
\vdots & & \vdots & \vdots & & \vdots & & \vdots & & \vdots \\
\textcolor{red}{b^{1}_{k,1}} & \cdots & \textcolor{red}{b^{h}_{k,1}} &
\textcolor{ForestGreen}{b^{1}_{k,2}} & \cdots & \textcolor{ForestGreen}{b^{h}_{k,2}} & \cdots &
\textcolor{blue}{b^{1}_{k,n}} & \cdots & \textcolor{blue}{b^{h}_{k,n}}
\end{pmatrix}
\in \M_{k,nh}(\F_q),
$$
where the block $\widetilde G_i$ consists of the $h$ columns obtained from the
$\F_q$-expansions of
$c_i,\omega c_i,\ldots,\omega^{h-1}c_i$. Equivalently, the columns of
$\widetilde G_i$ are
$\bar\Phi(c_i),\ \bar\Phi(\omega c_i),\ \ldots,\ \bar\Phi(\omega^{h-1}c_i)$.
Therefore,
$$
\widetilde G=
\left(
\begin{array}{cccc}
 \textcolor{red}{A(m_{1,1})} & \textcolor{ForestGreen}{A(m_{1,2})} &  \ldots & \textcolor{blue}{A(m_{1,n})} \\
 \textcolor{red}{A(m_{2,1})} &  \textcolor{ForestGreen}{A(m_{2,2})} & \ldots & \textcolor{blue}{A(m_{2,n})} \\
\vdots & \vdots &  & \vdots \\
\textcolor{red}{A(m_{K,1})}&  \textcolor{ForestGreen}{A(m_{K,2})}&  \ldots & \textcolor{blue}{A(m_{K,n})}
 \end{array}\right).
$$

Now, let us prove that the matrix $\widetilde G$ is an extended generator matrix for the code $\mC$. This is equivalent to show that the matrix
{\small
\[
G:=\begin{pmatrix}
\textcolor{red}{\Phi^{-1}(b^{1}_{1,1},\cdots,b^{h}_{1,1})}
& \textcolor{ForestGreen}{\Phi^{-1}(b^{1}_{1,2},\cdots,b^{h}_{1,2})}
& \cdots
& \textcolor{blue}{\Phi^{-1}(b^{1}_{1,n},\cdots,b^{h}_{1,n})}
\\
\textcolor{red}{\Phi^{-1}(b^{1}_{2,1},\cdots,b^{h}_{2,1})}
& \textcolor{ForestGreen}{\Phi^{-1}(b^{1}_{2,2},\cdots,b^{h}_{2,2})}
& \cdots
& \textcolor{blue}{\Phi^{-1}(b^{1}_{2,n},\cdots,b^{h}_{2,n})}
\\
\vdots & \vdots & & \vdots
\\
\textcolor{red}{\Phi^{-1}(b^{1}_{k,1},\cdots,b^{h}_{k,1})}
& \textcolor{ForestGreen}{\Phi^{-1}(b^{1}_{k,2},\cdots,b^{h}_{k,2})}
& \cdots
& \textcolor{blue}{\Phi^{-1}(b^{1}_{k,n},\cdots,b^{h}_{k,n})}
\end{pmatrix}
=
\left(
\begin{array}{cccc}
\textcolor{red}{g_{1,1}}
& \textcolor{ForestGreen}{g_{1,2}}
& \ldots
& \textcolor{blue}{g_{1,n}}
\\
\textcolor{red}{g_{2,1}}
& \textcolor{ForestGreen}{g_{2,2}}
& \ldots
& \textcolor{blue}{g_{2,n}}
\\
\vdots & \vdots & & \vdots
\\
\textcolor{red}{g_{K,1}}
& \textcolor{ForestGreen}{g_{K,2}}
& \ldots
& \textcolor{blue}{g_{K,n}}
\end{array}
\right)
\]
} where ${g_{i,j}^\top} \in \F_{q^h}^h$, is a generator matrix for the code $\C$ i.e. that $$\{xG : x\in \Fq^k\}=\{yM : y \in \Fqh^K\}.$$ 
Let $x=(x_1,\ldots x_K)\in \F_q^k$, where  $x_i\in \Fq^h$ for every $i\in[K]$. Then, we have
\begin{equation*} \begin{split}
xG &=(x_1,\ldots,x_K)\left(
\begin{array}{cccc}
{g_{1,1}}
& {g_{1,2}}
& \ldots
& {g_{1,n}}
\\
{g_{2,1}}
& {g_{2,2}}
& \ldots
& {g_{2,n}}
\\
\vdots & \vdots & & \vdots
\\
{g_{K,1}}
& {g_{K,2}}
& \ldots
& {g_{K,n}}
\end{array}\right) =\left(\sum_{i=1}^Kx_ig_{i,1},\ldots,\sum_{i=1}^Kx_ig_{i,n}\right).
\end{split}
\end{equation*}
Now, for each $j$-th coordinate, the following holds
\begin{equation*} \begin{split}
\sum_{i=1}^Kx_ig_{i,j}&= \sum_{i=1}^K \sum_{l=1}^hx_{i,l}\Phi^{-1}(b^{1}_{(i-1)h+l+1,j},\cdots,b^{h}_{(i-1)h+l+1,j})
\\&=\sum_{i=1}^K\Phi^{-1}(\sum_{l=1}^hx_{i,l}(b^{1}_{(i-1)h+l+1,j},\cdots,b^{h}_{(i-1)h+l+1,j}))
\\&=\sum_{i=1}^K\Phi^{-1}(x_iA(m_{i,j}))
\end{split}
\end{equation*}
Thanks to Equation~\ref{eq:propertyphi}, the following holds:  $$\forall x,y \in \Fqh, \hspace{0.5cm} \Phi^{-1}(xA(y))=\Phi^{-1}(x)y,$$ and when we apply it, we derive
\begin{equation*}
\sum_{i=1}^Kx_ig_{i,j}=\sum_{i=1}^K\Phi^{-1}(x_i)m_{i,j}
=(\Phi^{-1}(x_1),\ldots,\Phi^{-1}(x_K))c_j
=\bar{\Phi}^{-1}(x_1,\ldots,x_K)c_j.
\end{equation*}
Then, regrouping all the blocks, we obtain
\begin{equation*} \begin{split}
\{xG : x \in \Fq^k\}&=\{(\bar{\Phi}^{-1}(x_1,\ldots,x_K)c_1,\ldots,\bar{\Phi}^{-1}(x_1,\ldots,x_K)c_n) : x_1,\ldots,x_K\in \Fq^h\}
\\&=\{((y_1,\ldots,y_K)c_1,\ldots,(y_1,\ldots,y_K)c_n) : y_1,\ldots,y_K\in \Fqh\}
\\&=\{yM : y \in \Fqh^K\}.
\end{split}
\end{equation*}
Therefore, $\Tilde{G}$ is an extended generator matrix for the code $\C$ and it follows that the $i$-th coordinate subspace of this $h$-projective system associated with the code $\C$ is the projectivization of the
$\F_q$-span of the columns of $\widetilde G_i$.

We claim that this $\F_q$-span is precisely
$\bar\Phi(\langle c_i\rangle_{\F_{q^h}})$. Indeed, since
$\mathcal B=\{1,\omega,\ldots,\omega^{h-1}\}$ is an $\F_q$-basis of
$\F_{q^h}$, every $\lambda\in\F_{q^h}$ can be written uniquely as
$\lambda=\sum\limits_{r=0}^{h-1}\lambda_r\omega^r$, with $\lambda_r\in\F_q$.
Using the $\F_q$-linearity of $\bar\Phi$, we get
$$
\bar\Phi(\lambda c_i)
=
\bar\Phi\left(\sum_{r=0}^{h-1}\lambda_r\omega^r c_i\right)
=
\sum_{r=0}^{h-1}\lambda_r\bar\Phi(\omega^r c_i).
$$
Hence
$\bar\Phi(\langle c_i\rangle_{\F_{q^h}})
\subseteq
\colsp_{\F_q}(\widetilde G_i)$.
The reverse inclusion follows from the fact that each generator
$\bar\Phi(\omega^r c_i)$ belongs to
$\bar\Phi(\langle c_i\rangle_{\F_{q^h}})$. Therefore
$\colsp_{\F_q}(\widetilde G_i)
=
\bar\Phi(\langle c_i\rangle_{\F_{q^h}})$.
It follows that the $i$-th element of the $h$-projective system associated with
$\mC$ induced by the matrix $\widetilde G$ is
$$
\PG\big(\colsp_{\F_q}(\widetilde G_i)\big)
=
\PG\big(\bar\Phi(\langle c_i\rangle_{\F_{q^h}})\big)
=
\mathcal F(P_i).
$$
Therefore, an $h$-projective system associated with $\mC$ is
$\{\mathcal F(P_1),\ldots,\mathcal F(P_n)\}$.
Since each $\mathcal F(P_i)$ is an $(h-1)$-space belonging to the Desarguesian
spread obtained by field reduction, the additive code $\mC$ is faithful.
\end{proof}

We recall the following definition from \cite{alfarano2024outer}. Let
$\mathcal P=\{P_1,\ldots,P_n\}$ be a set of points in
$\PG(K-1,q^h)$. The set $\mathcal P$ is called an \textbf{outer strong blocking set} if $\bigcup_{i=1}^n \mathcal F(P_i)$
is a strong blocking set in $\PG(Kh-1,q)$.

\begin{proposition}\label{prop:outer_sbs_are_additive_sbs}
Let $h\mid k$, and write $k=Kh$. Let
$\mathcal P=\{P_1,\ldots,P_n\}\subseteq\PG(K-1,q^h)$ be an outer strong blocking set.
Then
$
\{\mathcal F(P_1),\ldots,\mathcal F(P_n)\}
$
is an additive strong blocking set in $\PG(k-1,q)$.
\end{proposition}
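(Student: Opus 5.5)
The plan is to check the two requirements in the definition of an additive strong blocking set directly. First, $\{\mathcal F(P_1),\ldots,\mathcal F(P_n)\}$ must be an $h$-projective system in $\PG(k-1,q)$. Second, the union of its elements must be a strong blocking set. The second requirement is exactly the definition of an outer strong blocking set: $\bigcup_{i=1}^n \mathcal F(P_i)$ is a strong blocking set in $\PG(Kh-1,q)=\PG(k-1,q)$. So essentially all the work lies in the first requirement.

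For the first requirement, each $\mathcal F(P_i)$ is an $(h-1)$-dimensional projective subspace of $\PG(k-1,q)$, as recalled before Proposition~\ref{prop:linear_case_field_reduction}. Its dimension is therefore at most $h-1$. It remains to show that the $\mathcal F(P_i)$ are not all contained in a common hyperplane. I would argue this directly from the strong blocking property. A strong blocking set is, by definition, not contained in a hyperplane. If all $\mathcal F(P_i)$ lay in a hyperplane $H$, then the union $\bigcup_i \mathcal F(P_i)$ would lie in $H$, which is a contradiction.

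Once both conditions hold, set $d:=n-\max_H|\{i:\mathcal F(P_i)\subseteq H\}|$. Then the multiset is an $h$-$[n,k,d]_q$ projective system, and it is an additive strong blocking set by definition.

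As an alternative and consistency check, one could instead pick the nondegenerate $\F_{q^h}$-linear code $\mathcal D$ with projective system $\mathcal P$. Then Proposition~\ref{prop:linear_case_field_reduction} identifies $\{\mathcal F(P_i)\}$ with an $h$-projective system of the additive code $\mC$. This route uses that $\mathcal P$ spans $\PG(K-1,q^h)$, which again follows from the spanning of the union. The direct route above avoids this.

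The only mildly delicate point is the bookkeeping of multisets versus sets. If $\mathcal P$ is a set, the $\mathcal F(P_i)$ are distinct spread elements, and the union is unaffected by multiplicities anyway. I expect no real obstacle.
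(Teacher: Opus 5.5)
Your proposal is correct and follows the same approach as the paper: the union condition is exactly the definition of an outer strong blocking set (using $\PG(Kh-1,q)=\PG(k-1,q)$), and each $\mathcal F(P_i)$ is an $(h-1)$-space. You additionally check that the $\mathcal F(P_i)$ do not all lie in one hyperplane, since a strong blocking set is by definition not contained in a hyperplane; the paper's proof leaves this point implicit.
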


\begin{proof}
By definition, $\mathcal P$ is an outer strong blocking set if the union of the subspaces obtained by field reduction, namely the spaces $\mathcal F(P_i)$, is a strong blocking set in $\PG(Kh-1,q)$.
Since $k=Kh$, this means that
$
\bigcup_{i=1}^n \mathcal F(P_i)
$
is a strong blocking set in $\PG(k-1,q)$. Moreover, each $\mathcal F(P_i)$ is an
$(h-1)$-dimensional subspace of $\PG(k-1,q)$. Hence
$\{\mathcal F(P_1),\ldots,\mathcal F(P_n)\}$ is an $h$-projective system whose union
is a strong blocking set. Therefore it is an additive strong blocking set.
\end{proof}

\begin{corollary}\label{cor:outer_sbs_gives_minimal_additive}
Let $\mathcal D$ be a nondegenerate $\F_{q^h}$-linear $[n,K,d]_{q^h}$ code, and assume
that an associated projective system $\mathcal P(\mathcal D)=\{P_1,\ldots,P_n\}\subseteq\PG(K-1,q^h)$ is an outer strong
blocking set. Regard $\mathcal D$ as an additive $[n,k/h,d]^h_q$ code $\mC$, where
$k=Kh$. Then $\mC$ is a minimal additive code.
\end{corollary}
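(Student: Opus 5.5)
The plan is to chain the two preceding results with the main correspondence, Theorem~\ref{thm:minimal_additive_sbs}. Since $\mathcal D$ is a nondegenerate $\F_{q^h}$-linear $[n,K,d]_{q^h}$ code, Proposition~\ref{prop:linear_case_field_reduction} shows that, when $\mathcal D$ is regarded as the additive $[n,k/h,d]^h_q$ code $\mC$ with $k=Kh$, the multiset $\chi:=\{\mathcal F(P_1),\ldots,\mathcal F(P_n)\}\subseteq\PG(k-1,q)$ is an $h$-projective system associated with $\mC$. We also need $\mC$ to be nondegenerate as an additive code. This is immediate: nondegeneracy only concerns coordinates that vanish identically, and $\mC$ and $\mathcal D$ are equal as sets.

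Next, the hypothesis says that $\mathcal P(\mathcal D)$ is an outer strong blocking set. By Proposition~\ref{prop:outer_sbs_are_additive_sbs}, the system $\chi$ is therefore an additive strong blocking set in $\PG(k-1,q)$, since the union $\bigcup_i\mathcal F(P_i)$ is a strong blocking set.

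Finally, we apply the direction ``additive strong blocking set $\Rightarrow$ minimal'' of Theorem~\ref{thm:minimal_additive_sbs}. The proof of that direction takes a generator matrix $G$ of $\mC$ whose associated projective system is the given $\chi$, and shows directly that every nonzero codeword is minimal. So we can apply it with the generator matrix $G$ built in the proof of Proposition~\ref{prop:linear_case_field_reduction}, whose expansion is $\widetilde G$. This gives that $\mC$ is a minimal additive code.

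The only point needing care is a bookkeeping one. The outer strong blocking set property is stated for one particular associated projective system $\mathcal P(\mathcal D)$, and we need it to translate into the additive strong blocking set property for an $h$-projective system actually associated with $\mC$. I would handle this by invoking Propositions~\ref{prop:minimality_inv} and~\ref{prop:sbs_invariant}: minimality of $\mC$ and the additive strong blocking set property are both invariant under equivalence, so the choice of representative does not matter. Apart from this, no computation is required.
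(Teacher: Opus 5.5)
Your proposal is correct and matches the paper's proof: Proposition~\ref{prop:outer_sbs_are_additive_sbs} shows that $\{\mathcal F(P_1),\ldots,\mathcal F(P_n)\}$ is an additive strong blocking set, Proposition~\ref{prop:linear_case_field_reduction} shows it is an $h$-projective system associated with $\mC$, and Theorem~\ref{thm:minimal_additive_sbs} then gives minimality. Your extra bookkeeping via equivalence invariance is harmless but not needed, because Proposition~\ref{prop:linear_case_field_reduction} already applies to whichever generator matrix of $\mathcal D$ produces the given $\mathcal P(\mathcal D)$.
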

\begin{proof}
By Proposition~\ref{prop:outer_sbs_are_additive_sbs}, the collection
$\{\mathcal F(P_1),\ldots,\mathcal F(P_n)\}$ is an additive strong blocking set in
$\PG(k-1,q)$. Moreover, by Proposition~\ref{prop:linear_case_field_reduction}, it is
an $h$-projective system associated with $\mC$. Hence the conclusion follows from
Theorem~\ref{thm:minimal_additive_sbs}.
\end{proof}

\begin{remark}
The converse of Proposition~\ref{prop:outer_sbs_are_additive_sbs} does not hold in
general. An additive strong blocking set is an $h$-projective system
$\{\pi_1,\ldots,\pi_n\}$ in $\PG(k-1,q)$ such that $\bigcup_i\pi_i$ is a strong
blocking set, with only the condition $\dim(\pi_i)\leq h-1$. The subspaces $\pi_i$
need not have dimension $h-1$, and even when they do, they need not belong to a common
Desarguesian spread. Thus, additive strong blocking sets strictly generalize outer
strong blocking sets.
\end{remark}

We also recall the corresponding coding-theoretic notion. Let $\mC$ be an $\F_{q^h}$-linear code. A nonzero codeword $c\in\mC$ is called
\textbf{outer minimal} if, whenever $c'\in\mC\setminus\{0\}$ satisfies
$\supp(c')\subseteq\supp(c)$ and
$c_i'/c_i\in\F_q^*$ for every $i\in\supp(c')$, then
$c'=\lambda c$ for some $\lambda\in\F_q^*$. The code $\mC$ is called
\textbf{outer minimal} if all its nonzero codewords are outer minimal.

\begin{proposition}\label{prop:outer_minimal_additive_minimal}
Let $\mC$ be an $\F_{q^h}$-linear code. Then $\mC$ is outer minimal if and only if
$\mC$, regarded as an additive code over $\F_{q^h}$, is minimal in the sense of
Definition~\ref{def:min_codewords}.
\end{proposition}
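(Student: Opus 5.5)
This is essentially a comparison of definitions. An $\F_{q^h}$-linear code $\mC\subseteq\F_{q^h}^n$ is in particular an $\F_q$-subspace, so it is an additive code; as a set it does not change when we regard it as additive. The plan is therefore to check that the two notions of minimality are stated by literally the same condition on the same set of codewords.

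First, I fix a nonzero codeword $c\in\mC$. I compare the outer minimality condition just recalled with Definition~\ref{def:min_codewords}. Both quantify over all $c'\in\mC\setminus\{0\}$. Both impose the hypotheses $\supp(c')\subseteq\supp(c)$ and $c_i'/c_i\in\F_q^\ast$ for every $i\in\supp(c')$. Both draw the conclusion $c'=\lambda c$ for some $\lambda\in\F_q^\ast$. Since $\mC\setminus\{0\}$ is the same set in both readings, $c$ is outer minimal if and only if it is minimal as a codeword of the additive code. Quantifying over all nonzero $c$ then gives that $\mC$ is outer minimal if and only if it is minimal as an additive code.

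The only point needing care is to make sure no hidden ambiguity changes the quantifier range. For instance, one might worry that outer minimality in \cite{alfarano2024outer} is stated with $\lambda\in\F_{q^h}^\ast$, or for codewords up to $\F_{q^h}$-multiples. With the definition as recalled in the paper there is no such difference, and I would simply remark this. If one wanted extra robustness, I could also give a geometric cross-check. Combine Proposition~\ref{prop:linear_case_field_reduction}, which identifies the $h$-projective system as $\{\mathcal F(P_i)\}$, with Theorem~\ref{thm:minimal_additive_sbs}. Then minimality of the additive code becomes equivalent to $\{P_i\}$ being an outer strong blocking set. This matches the outer-minimal/outer-strong-blocking-set correspondence of \cite{alfarano2024outer}, and it requires $\mC$ to be nondegenerate. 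The direct definitional argument is, however, the one I would write, since it needs no nondegeneracy assumption.
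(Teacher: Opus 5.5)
Your proposal is correct and takes the same approach as the paper: the recalled definition of outer minimality is word for word the condition of Definition~\ref{def:min_codewords}, applied to the same set of nonzero codewords. The geometric cross-check is not needed, and you are right that it would additionally require nondegeneracy, which the direct definitional argument avoids.
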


\begin{proof}
This follows directly from the definitions. Indeed, the notion of outer minimality
for $\F_{q^h}$-linear codes requires that, whenever $c,c'\in\mC\setminus\{0\}$ satisfy
$\supp(c')\subseteq\supp(c)$ and, for every $i\in\supp(c')$, one has
$c'_i/c_i\in\F_q^*$, then $c'=\lambda c$ for some $\lambda\in\F_q^*$. This is exactly
the condition in Definition~\ref{def:min_codewords}.
\end{proof}

Thus, the notions introduced in this paper generalize both the classical theory of strong blocking sets and minimal linear codes, and the theory of outer strong blocking sets and outer minimal codes. The case $h=1$ recovers the classical correspondence between minimal $\F_q$-linear codes and strong blocking sets. On the other hand, when $h\mid k$ and the associated
$(h-1)$-spaces lie in a Desarguesian spread, we recover the theory of outer strong blocking sets and outer minimal codes over $\F_{q^h}$. Hence additive strong blocking sets and minimal additive codes provide a common framework containing both settings.

%%%%%%%%%%%%%%%%%%%%%%%%%%%%%%%%%%%%%%%%%5

\section{Existence results and bounds for minimal additive codes}\label{sec:constructions_bounds} 

In this section, we study existence results and bounds on the parameters of minimal additive codes or, equivalently, on the size of additive strong blocking sets. We first give a counting argument which ensures the existence of minimal additive codes for suitable parameters. We then introduce the function $m(k,q,h)$, the smallest length of a nondegenerate  minimal additive
code of $\Fq$-dimension $k$ over $\Fqh$, and derive upper and lower bounds for this parameter. Some of the bounds obtained in this section are stated in a
more general form, for arbitrary minimal additive codes, and then specialized to $m(k,q,h)$.

\subsection{A combinatorial existence result}

We begin with a counting argument, in the spirit of the classical existence results for minimal codes and of the analogous result for outer minimal codes in \cite[Theorem 38]{alfarano2024outer}.

\begin{theorem}\label{thm:existence}
    If
    \[\binom{nh-2}{k-2}_{q}\cdot \sum_{i=1}^n \binom{n}{i}(q^h-1)^i\left(q^i-q\right)<\binom{nh}{k}_{q},\]
    then there exists an $[n,k/h]_q^h$ minimal  additive code. 
\end{theorem}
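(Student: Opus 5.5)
We identify an additive code $\C\subseteq\Fqh^n$ of $\Fq$-dimension $k$ with its image under the coordinatewise expansion $\Fqh^n\cong\Fq^{nh}$ given by the basis $\mathcal B$. Under this identification, $[n,k/h]_q^h$ additive codes are exactly the $k$-dimensional $\Fq$-subspaces of $\Fq^{nh}$. There are $\binom{nh}{k}_q$ of them. The plan is to count the subspaces that fail to be minimal and show that, under the hypothesis, they do not exhaust all $k$-subspaces.

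\textbf{Step 1: reduce non-minimality to a bad pair.} Suppose $\C$ is not minimal. By Proposition~\ref{prop:minimal_codeword_characterization}, there exist nonzero $c,c'\in\C$ with $\supp(c')\subsetneq\supp(c)$ and $c_i'/c_i\in\Fq^\ast$ for all $i\in\supp(c')$. The pair $c,c'$ is then $\Fq$-linearly independent, because their supports differ. So $\C$ contains the $2$-dimensional $\Fq$-space $\langle c,c'\rangle$.

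\textbf{Step 2: count the bad $2$-spaces.} Fix $c$ with support $S$ of size $i$; there are $\binom{n}{i}(q^h-1)^i$ such vectors. A valid $c'$ is determined by a subset $T\subsetneq S$, $T\neq\emptyset$, together with scalars $\lambda_j\in\Fq^\ast$ for $j\in T$, setting $c'_j=\lambda_j c_j$. This gives $\sum_{\emptyset\ne T\subsetneq S}(q-1)^{|T|}=q^i-1-(q-1)^i$ choices.

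To match the factor $q^i-q$ in the statement, I would instead count vectors $c'$ of the form $c'_j=\mu_j c_j$ with $\mu_j\in\Fq$ for all $j\in S$, where $\mu$ is not a constant vector. Such a $c'$ lies outside $\Fq c$, and there are $q^i-q$ of them. For each such $c'$ (with support possibly not strictly smaller), a suitable subtraction $c'-\mu_{j_0}c$ yields a strictly smaller support with the same ratio property. So every non-minimal code contains a pair $(c,c')$ of this "coordinatewise $\Fq$-proportional, non-scalar" kind, and conversely we may overcount. The number of ordered such pairs is at most
\[
\sum_{i=1}^n\binom{n}{i}(q^h-1)^i(q^i-q),
\]
and this bounds the number of bad $2$-dimensional subspaces from above. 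The overcount is harmless, since we only need an upper bound.

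\textbf{Step 3: union bound.} A fixed $2$-dimensional subspace of $\Fq^{nh}$ is contained in exactly $\binom{nh-2}{k-2}_q$ subspaces of dimension $k$. Hence the number of non-minimal $[n,k/h]_q^h$ codes is at most
\[
\binom{nh-2}{k-2}_q\sum_{i=1}^n\binom{n}{i}(q^h-1)^i(q^i-q).
\]
By hypothesis this is strictly less than $\binom{nh}{k}_q$, so some $k$-subspace is minimal.

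The main obstacle is Step 2: making the count exactly match $q^i-q$, and checking that every non-minimal code really contains a pair from the counted family. The subtraction trick handles the second point, and the counting gives the first. Nondegeneracy is not required by the statement, so no further argument is needed.
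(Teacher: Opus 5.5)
Your proposal is correct and matches the paper's proof. The paper also counts ordered pairs $(c,c')$ with $c'_j=\mu_j c_j$ on $S=\supp(c)$ and $(\mu_j)_{j\in S}\in\F_q^{S}$ non-constant, which gives exactly $q^i-q$ choices for each $c$ of weight $i$. It then applies the same union bound, using the $\binom{nh-2}{k-2}_q$ codes that contain each spanned $2$-space.

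One correction to your Step 2. The inclusion you actually need, that every non-minimal code contains such a pair, is immediate: your Step 1 pair already has this form, with $\mu_j=0$ on $S\setminus\supp(c')$ and $\mu_j\in\F_q^\ast$ on $\supp(c')$. The subtraction trick only proves the converse, which the union bound does not need. Your first count $q^i-1-(q-1)^i\le q^i-q$ would also suffice and give a slightly stronger statement.
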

\begin{proof}
An $[n,k/h]_q^h$ additive code $\mC$ is not minimal if it contains a codeword which is not minimal. A nonzero codeword $c\in \mC\subseteq  \F_{q^h}^n$ is not minimal if there exists $c'\in \mC$ such that it holds
\begin{equation}\label{eq:not-min}
\left\{
\begin{aligned}
& c'\neq \lambda c, \ \forall \lambda\in \F_q \ \ \wedge \\
& \sH(c')\subseteq \sH(c) \ \ \wedge \\
& \forall i\in \sH(c), \ \exists \lambda_i\in \F_q
\text{ s.t. } c'_i=\lambda_i c_i.
\end{aligned}
\right.
\end{equation}
Let 
\[B=\left\{(c,c')\in \F_{q^h}^n \times \F_{q^h}^n : c\neq 0 \text{ and }\eqref{eq:not-min} \text{ holds}\right\}.\]
The cardinality of $B$ is given by
    $$|B|=\sum_{i=1}^n \binom{n}{i}(q^h-1)^i\left(q^i-q\right).$$
An $[n,k/h]_q^h$ additive code is not minimal if it contains the $\F_q$-subspace generated by one of the pairs in $B$. Since the $2$-dimensional $\F_q$-subspace generated by each pair in $B$ is contained in exactly $\binom{nh-2}{k-2}_{q}$ additive codes of dimension~$k$ in $\F_{q^h}^n$ and the total number of $[n,k/h]_q^h$ codes is $\binom{nh}{k}_{q}$, we have that if $\binom{nh-2}{k-2}_{q}\cdot |B| < \binom{nh}{k}_{q}$, there exists an $[n,k/h]_q^h$ additive code $\mC$ which does not contain any $2$-dimensional $\F_q$ subspace generated by a pair in $B$. Therefore, such $\mC$ must be minimal.
\end{proof}

 \begin{corollary}\label{cor:existence}
     $[n,k/h]_q^h$  minimal additive codes exist whenever
    \[n\geq \left\lceil \frac{2k}{h\log_{q^h}\left(\frac{q^{2h}}{q^{h+1}-q+1}\right)}\right\rceil. \]
 \end{corollary}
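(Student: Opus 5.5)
We derive the corollary from Theorem~\ref{thm:existence} by bounding both sides of its hypothesis with elementary estimates and checking that the stated lower bound on $n$ forces the inequality.

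\textbf{Step 1: the ratio of Gaussian binomials.} We compute
\[
\frac{\binom{nh}{k}_q}{\binom{nh-2}{k-2}_q}=\frac{(q^{nh}-1)(q^{nh-1}-1)}{(q^k-1)(q^{k-1}-1)}.
\]
Since $k\le nh$, the map $x\mapsto (q^{x}-1)/q^{x}$ is increasing, so this ratio is at least $q^{2nh-2k}$. It therefore suffices to prove $|B|<q^{2nh-2k}$.

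\textbf{Step 2: the sum.} Dropping the term $-q$ gives
\[
|B|=\sum_{i=1}^n\binom{n}{i}(q^h-1)^i(q^i-q)<\sum_{i=0}^n\binom{n}{i}\bigl((q^h-1)q\bigr)^i=(q^{h+1}-q+1)^n,
\]
by the binomial theorem.

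\textbf{Step 3: conclusion.} It suffices that $(q^{h+1}-q+1)^n\le q^{2nh-2k}$. Taking $\log_q$, this reads $n\bigl(2h-\log_q(q^{h+1}-q+1)\bigr)\ge 2k$, that is, $n\log_q\!\bigl(q^{2h}/(q^{h+1}-q+1)\bigr)\ge 2k$. Since $\log_q=h\log_{q^h}$, this becomes
\[
n\ge \frac{2k}{h\log_{q^h}\!\left(\frac{q^{2h}}{q^{h+1}-q+1}\right)},
\]
which is exactly the hypothesis (the ceiling changes nothing because $n$ is an integer). For this division to be legitimate the logarithm must be positive, i.e.\ $q^{2h}>q^{h+1}-q+1$. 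For $h\ge1$ this holds because $q^{2h}\ge q^{h+1}$ and $q-1>0$.

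\textbf{Main obstacle.} Step 1 is the only delicate point, namely the bound $(q^a-1)/(q^b-1)\ge q^{a-b}$ for $a\ge b$. It follows from $q^a-1\ge q^{a-b}(q^b-1)$, which is equivalent to $q^{a-b}\ge1$. The strict inequality required in Theorem~\ref{thm:existence} comes from Step 2, where the dropped $i=0$ term of the binomial expansion makes the bound strict.
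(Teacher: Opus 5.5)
Your proof is correct and follows the paper's argument: bound $|B|$ above by $(q^{h+1}-q+1)^n$, bound $\binom{nh}{k}_q/\binom{nh-2}{k-2}_q$ below by $q^{2nh-2k}$, and take logarithms. The only difference is cosmetic: you keep the strictness in $|B|<(q^{h+1}-q+1)^n$, so you only need the non-strict $(q^{h+1}-q+1)^n\le q^{2nh-2k}$, and the ceiling condition then follows immediately (the fact $k<nh$ you use in Step 1 also follows from the hypothesis, since the logarithm is less than $2$).
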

\begin{proof}
    First of all observe that 
    \[|B|< (1+(q^h-1)q)^n\]
and
\[\frac{\binom{nh}{k}_{q}}{\binom{nh-2}{k-2}_{q}}=\frac{(q^{nh}-1)(q^{nh-1}-1)}{(q^{k}-1)(q^{k-1}-1)}\geq q^ {2nh-2k}.
\]
In order to ensure the existence of minimal additive codes, from Theorem~\ref{thm:existence}, it is enough to have 
\[(1+(q^h-1)q)^n < q^ {2nh-2k}.\]
Taking logarithms base $q^h$ yields
\[
n \log_{q^h}(q^{h+1}-q+1) < 2\left(n-\frac{k}{h}\right).
\]
Rearranging, we obtain
\[
n\bigl(2-\log_{q^h}(q^{h+1}-q+1)\bigr) > \frac{2k}{h}.
\]
Equivalently,
\[
n \ge
\left\lceil
\frac{2k}{h\,\log_{q^h}\!\left(\frac{q^{2h}}{q^{h+1}-q+1}\right)}
\right\rceil.
\]
\end{proof}

\begin{remark}
Theorem~\ref{thm:existence} is the additive version of
\cite[Theorem~38]{alfarano2024outer}. In the latter case one counts $K$-dimensional $\F_{q^h}$-linear subspaces of $\F_{q^h}^N$, and the corresponding
condition is
$$
\binom{N-2}{K-2}_{q^h}
\sum_{i=1}^N
\binom{N}{i}(q^h-1)^i(q^i-q)
<
\binom{N}{K}_{q^h}.
$$
In the additive setting we count instead $k$-dimensional $\F_q$-subspaces of $\F_{q^h}^n$.
The set of bad pairs is the same, and the difference lies only in the ambient family of codes being
counted. When $h\mid k$ and $k=Kh$, Corollary~\ref{cor:existence}
recovers the bound of \cite[Theorem~38]{alfarano2024outer}. Thus this existence result for minimal additive codes extends the
corresponding result for outer minimal codes.
\end{remark}

\subsection{Bounds on the length of minimal additive codes and first constructions}

In this subsection we investigate the smallest length that a minimal additive code can have for fixed field size and dimension. To this end, we define the parameter
$$m(k,q,h):= \min \{n \in \mathbb{N}_{\geq 1} : \text{ there exists an } [n,k/h]_q^h \text{ nondegenerate minimal additive code}\}.$$
By Corollary~\ref{cor:existence}, this number is well-defined for every choice of $k,q,h$. The aim of this subsection is to provide lower and upper bounds on $m(k,q,h)$.

We first observe that the existence of one minimal additive code of length $m(k,q,h)$ implies the existence of minimal additive codes of any larger length. 

\begin{proposition}\label{thm:monotonicity}
There exists a nondegenerate $[n,k/h]_q^h$ minimal additive code if and only if $n\geq m(k,q,h)$.
\end{proposition}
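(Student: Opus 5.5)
The direction "only if" is immediate from the definition of $m(k,q,h)$ as a minimum, so the content is the "if" direction: from a nondegenerate minimal additive code of length $n\ge m(k,q,h)$ I will produce one of length $n+1$, and conclude by induction on $n$. I will work geometrically via Theorem~\ref{thm:minimal_additive_sbs}. Let $\chi=\{\pi_1,\ldots,\pi_n\}$ be an additive strong blocking set associated with a nondegenerate minimal $[n,k/h]_q^h$ code. Define $\chi'=\{\pi_1,\ldots,\pi_n,\pi_{n+1}\}$, where $\pi_{n+1}$ is any subspace of $\PG(k-1,q)$ of projective dimension at most $h-1$. The simplest choice is $\pi_{n+1}=\pi_1$, since $\chi'$ is allowed to be a multiset.

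The key check is that $\chi'$ is again an additive strong blocking set. It is an $h$-projective system: every element has dimension at most $h-1$, and the elements are not all contained in a hyperplane, because already $\pi_1,\ldots,\pi_n$ are not. Its union is $\bigcup_{i=1}^n\pi_i$, which is unchanged, so it is a strong blocking set by hypothesis. Then Theorem~\ref{thm:minimal_additive_sbs} gives a nondegenerate minimal additive code of length $n+1$ and $\Fq$-dimension $k$.

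Alternatively, the step can be done directly at the code level. Repeat a coordinate: set $\C'=\{(c_1,\ldots,c_n,c_1):c\in\C\}$. This code is nondegenerate, and it is minimal because supports and coordinate ratios in the extra position duplicate those in position $1$. Hence the hypothesis of Definition~\ref{def:min_codewords} for $\C'$ implies the corresponding hypothesis for $\C$. Its $\Fq$-dimension is still $k$, since projection onto the first $n$ coordinates is injective.

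I do not expect any real obstacle. The only care point is that the union of the new system is unchanged (or contains the old union). A strong blocking set plus extra points is still a strong blocking set, so any added subspace works. The only facts needed are that multisets are allowed in $h$-projective systems and that minimum distance plays no role in minimality.
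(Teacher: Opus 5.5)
Your proof is correct and takes the same approach as the paper: starting from a minimal code of length $m(k,q,h)$, repeat some of the $\pi_i$, note that the union and hence the strong blocking property are unchanged, and apply Theorem~\ref{thm:minimal_additive_sbs}. The paper adds all $n-m(k,q,h)$ repetitions at once rather than inducting one step at a time, and your coordinate-repetition alternative is also valid, with the small bonus that it avoids the geometric correspondence entirely.
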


\begin{proof}
If $n<m(k,q,h)$, then no $[n,k/h]_q^h$ minimal additive code exists by definition of $m(k,q,h)$.

Conversely, suppose that $n\geq m(k,q,h)$, and let $\mC$ be a $[m(k,q,h),k/h]_q^h$ minimal additive code. Let $\chi(\mC)=\{\pi_1,\ldots,\pi_{m(k,q,h)}\}$ be an associated $h$-projective system. By Theorem~\ref{thm:minimal_additive_sbs}, $\chi(\mC)$ is an additive strong blocking set, that is, $\bigcup\limits_i\pi_i$ is a strong blocking set in $\PG(k-1,q)$.
We obtain an $h$-projective system of length $n$ by adding $n-m(k,q,h)$ repetitions of some of the subspaces $\pi_i$. Since repetitions do not change the union $\bigcup_i\pi_i$, the new projective system is still an additive strong blocking set. Hence, again by Theorem~\ref{thm:minimal_additive_sbs}, the corresponding additive code is minimal. Therefore, an $[n,k/h]_q^h$ minimal additive code exists for every $n\geq m(k,q,h)$.
\end{proof}

As a direct consequence of Corollary~\ref{cor:existence}, we obtain the following upper bound on $m(k,q,h)$.

\begin{theorem}\label{thm:upper-counting}
For every prime power $q$ and all positive integers $k,h$, we have
$$
m(k,q,h) \leq
\left\lceil
\frac{2k}{h\,\log_{q^h}\!\left(\frac{q^{2h}}{q^{h+1}-q+1}\right)}
\right\rceil.
$$
\end{theorem}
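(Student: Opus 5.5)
The plan is to derive the bound directly from Corollary~\ref{cor:existence}, with one extra check: the code produced there must be nondegenerate. Set
$$N:=\left\lceil \frac{2k}{h\log_{q^h}\!\left(\frac{q^{2h}}{q^{h+1}-q+1}\right)}\right\rceil .$$
The argument of the logarithm exceeds $1$, since $q^{2h}-q^{h+1}+q-1=(q^h-1)(q^h+1)-q(q^h-1)=(q^h-1)(q^h+1-q)>0$. Hence $N$ is a well-defined positive integer. By Corollary~\ref{cor:existence} applied with $n=N$, there exists an $[N,k/h]_q^h$ minimal additive code $\mC$.

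The only subtlety is that $m(k,q,h)$ is defined as a minimum over \emph{nondegenerate} minimal additive codes, while the corollary gives no nondegeneracy guarantee. I would handle this by puncturing. Let $Z\subseteq[N]$ be the set of coordinates on which every codeword of $\mC$ vanishes, and let $\mC'$ be the puncturing of $\mC$ on $[N]\setminus Z$. Deleting coordinates that are identically zero leaves everything else intact: it is an injective $\F_q$-linear map on $\mC$, so $\mC'$ still has $\F_q$-dimension $k$. It also preserves supports (up to reindexing) and coordinate ratios. So the conditions in Definition~\ref{def:min_codewords} transfer verbatim, and $\mC'$ is minimal. By construction $\mC'$ is nondegenerate, and its length is $N-|Z|\le N$. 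Since $k\ge 1$, we have $|Z|<N$, so the length is at least $1$.

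Therefore $m(k,q,h)\le N-|Z|\le N$, which is the claimed bound. Alternatively, once any nondegenerate minimal code of length at most $N$ exists, Proposition~\ref{thm:monotonicity} yields one of length exactly $N$, though this is not needed. I expect no real obstacle; the one point needing care is the nondegeneracy reduction above, specifically checking that puncturing on the zero coordinates preserves both dimension and minimality.
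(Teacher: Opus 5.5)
Your proof is correct and follows the paper's route: the paper obtains Theorem~\ref{thm:upper-counting} directly from Corollary~\ref{cor:existence}. Your puncturing of the identically-zero coordinates adds a step the paper leaves implicit, and it is needed: the corollary does not guarantee nondegeneracy, while $m(k,q,h)$ is defined as a minimum over nondegenerate codes.
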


We now explain how known geometric constructions of strong blocking sets produce minimal additive codes. Several constructions of strong blocking
sets in projective spaces are obtained as unions of subspaces, in particular union of lines; see \cite{heger2021short, alfarano2022three, alon2024strong, fancsali2014lines}.
The next result shows that any such construction immediately yields a
minimal additive code, provided that the subspaces have dimension at most
$h-1$.

\begin{proposition}\label{prop:upper-subspaces}
Let $\pi_1,\ldots,\pi_m$ be subspaces of $\PG(k-1,q)$ such that
$\dim(\pi_i)\leq h-1$ for every $i\in[m]$. If
$\bigcup\limits_{i=1}^m \pi_i$ is a strong blocking set in $\PG(k-1,q)$, then
$\{\pi_1,\ldots,\pi_m\}$ is an additive strong blocking set in $\PG(k-1,q)$. In particular, there exists a nondegenerate $[m,k/h]^h_q$ minimal additive code, and $m(k,q,h)\leq m$.
\end{proposition}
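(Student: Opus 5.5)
The plan is to check the definition of an additive strong blocking set directly, and then invoke Theorem~\ref{thm:minimal_additive_sbs} to obtain the code.

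\textbf{Step 1: $\chi=\{\pi_1,\ldots,\pi_m\}$ is an $h$-projective system.} Each $\pi_i$ has projective dimension at most $h-1$ by hypothesis. It remains to check that the $\pi_i$ are not all contained in a common hyperplane. This follows from the strong blocking set property. A strong blocking set is by definition not contained in a hyperplane. Indeed, if $\bigcup_i\pi_i\subseteq H$, then for any other hyperplane $H'\neq H$ we would have $(\bigcup_i\pi_i)\cap H'\subseteq(\bigcup_i\pi_i)\cap H$, which contradicts the defining condition. Hence no hyperplane contains all the $\pi_i$.

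The minimum distance parameter $d$ is then defined as in the definition of $h$-projective systems. It is positive precisely because no hyperplane contains all $m$ elements. So $\chi$ is an $h$-$[m,k,d]_q$ projective system.

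\textbf{Step 2: $\chi$ is an additive strong blocking set.} By hypothesis $\bigcup_i\pi_i$ is a strong blocking set in $\PG(k-1,q)$. This is exactly the definition of an additive strong blocking set.

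\textbf{Step 3: the code and the bound on $m(k,q,h)$.} We build the additive code explicitly, as in the background section. For each $i$, choose a $k\times h$ matrix $G_i$ with $\colsp(G_i)=\pi_i$, padding with repeated or zero columns when $\dim\pi_i<h-1$. Set $G=(G_1\boldsymbol{\alpha}\mid\cdots\mid G_m\boldsymbol{\alpha})$. Its $\Fq$-row space is a nondegenerate $[m,k/h,d]_q^h$ additive code with associated system $\chi$. We also need the rows of $G$ to be $\Fq$-independent, so that the dimension is $k$. This holds because the columns of the $G_i$ span $\Fq^k$, again since the $\pi_i$ are not all in a hyperplane.

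By Theorem~\ref{thm:minimal_additive_sbs}, this code is minimal. By the definition of $m(k,q,h)$, we conclude $m(k,q,h)\le m$.

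\textbf{Main obstacle.} The only subtle point is Step 1: the $\Fq$-dimension of the code must be exactly $k$, i.e. the spanning condition must hold. This is supplied by the non-containment in a hyperplane, which is part of the strong blocking set definition.
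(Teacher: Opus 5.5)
Your proposal is correct and follows essentially the same route as the paper: non-containment of $\bigcup_i\pi_i$ in a hyperplane makes $\{\pi_1,\ldots,\pi_m\}$ an $h$-projective system, the union hypothesis is exactly the definition of an additive strong blocking set, and Theorem~\ref{thm:minimal_additive_sbs} then yields the minimal code and $m(k,q,h)\le m$. Your explicit construction of $G$ and your check that its rows are $\Fq$-independent spell out what the paper leaves implicit in the background correspondence between codes and $h$-projective systems.
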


\begin{proof}
Since $\bigcup\limits_{i=1}^m \pi_i$ is a strong blocking set, it is not contained
in a hyperplane of $\PG(k-1,q)$. Hence $\{\pi_1,\ldots,\pi_m\}$ is an
$h$-projective system. Moreover, its union is a strong blocking set by
assumption, so it is an additive strong blocking set. Theorem~\ref{thm:minimal_additive_sbs}
then gives a nondegenerate $[m,k/h]^h_q$ minimal additive code. Therefore
$m(k,q,h)\leq m$.
\end{proof}

We obtain the following immediate consequence.

\begin{corollary}\label{cor:upper-linear}
For every $k,q,h$, we have $m(k,q,h)\leq m(k,q)$, where $m(k,q)$ denotes the smallest length of a $[n,k]_q$ minimal linear code.
\end{corollary}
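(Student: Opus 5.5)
The plan is to derive Corollary~\ref{cor:upper-linear} directly from Proposition~\ref{prop:upper-subspaces}, using points as subspaces of projective dimension $0\le h-1$.

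Let $n=m(k,q)$ and take a nondegenerate $[n,k]_q$ minimal linear code. By the classical correspondence of \cite{alfarano2022geometric,tang2021full}, its projective system is a multiset of $n$ points $P_1,\ldots,P_n$ of $\PG(k-1,q)$ whose underlying set is a strong blocking set. Nondegeneracy is no restriction here: deleting zero coordinates of a minimal code preserves minimality and can only shorten the length, so a minimal code of length $m(k,q)$ may be assumed nondegenerate.

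Each $P_i$ is a subspace of projective dimension $0\le h-1$, because $h\ge 1$. The union $\bigcup_i P_i$ is exactly the strong blocking set. Proposition~\ref{prop:upper-subspaces} then applies with $\pi_i=P_i$ and $m=n$. It shows that $\{P_1,\ldots,P_n\}$ is an additive strong blocking set and that $m(k,q,h)\le n=m(k,q)$.

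The one point needing care is multiplicities. A strong blocking set is defined as a multiset, and repeated points are harmless for Proposition~\ref{prop:upper-subspaces}, since the hypothesis only concerns the union. In fact a minimal code of smallest length has no repeated points, because removing a repeated point keeps the union unchanged. Either way the bound holds.

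One could also argue directly on codes: embed $\F_q\subseteq\F_{q^h}$, so the linear code is an additive code of $\F_q$-dimension $k$. For codewords of such a code, all ratios $c_i'/c_i$ lie in $\F_q^*$ automatically, so additive minimality reduces exactly to classical minimality. I expect no real obstacle in either route.
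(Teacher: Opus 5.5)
Your proposal is correct and follows the paper's proof: take a minimal linear code of length $m(k,q)$, use the classical correspondence to see that its points form a strong blocking set, and apply Proposition~\ref{prop:upper-subspaces} with points regarded as $0$-dimensional subspaces. Your extra remarks on nondegeneracy and multiplicities are correct, as is the alternative direct argument via $\F_q\subseteq\F_{q^h}$ (where every ratio $c_i'/c_i$ automatically lies in $\F_q^*$), but none of them is needed.
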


\begin{proof}
Let $n=m(k,q)$ and let $\mC$ be a minimal $[n,k]_q$ linear code. By the classical correspondence between minimal linear codes and strong blocking sets, the associated projective system $\mathcal{P}(\mC)$ is a strong blocking set in $\PG(k-1,q)$. Since points are subspaces of dimension $0\leq h-1$, Proposition~\ref{prop:upper-subspaces} gives the existence of an $[n,k/h]^h_q$ minimal additive code. Hence $m(k,q,h)\leq n=m(k,q)$.
\end{proof}

Combining Corollary~\ref{cor:upper-linear} with the best-known general upper bounds for $m(k,q)$ gives explicit upper bounds on $m(k,q,h)$. In particular, by
\cite[Remark 4.11]{alfarano2024outer}, we have that
$$
m(k,q,h)\le m(k,q)\le
\left\lceil
\frac{k}{\log_{q^2}\left(\frac{q^4}{q^3-q+1}\right)}
\right\rceil (q+1).
$$
We point out that the same upper bound for the minimal length of a minimal linear code has been also independently shown in \cite{heger2021short,bishnoi2024blocking}.

\begin{example}\label{ex:tetrahedron}
Let $P_1,P_2,P_3,P_4$ be four points in general position in $\PG(3,q)$ and,
for $1\leq i<j\leq 4$, let
$\ell_{ij}:=\langle P_i,P_j\rangle_{\Fq}$ be the line joining $P_i$ and
$P_j$. The union of the six lines $\ell_{ij}$ is a strong blocking set in
$\PG(3,q)$.
First, consider $\chi_2:=\{\ell_{ij}:1\leq i<j\leq 4\}$.
This is a $2$-projective system of length $6$ in $\PG(3,q)$, and its union
is a strong blocking set. Hence $\chi_2$ is an additive strong blocking set.
By Theorem~\ref{thm:minimal_additive_sbs}, it gives rise to an $[6,4/2,3]^2_q$ minimal additive code. 
The corresponding linear code is a $[6q-2,4,3q-2]_q$ minimal code. 

If $h=3$, we can further group the edges inside faces of the tetrahedron.
For instance, let
$$
\pi_1:=\langle P_2,P_3,P_4\rangle_{\Fq},\qquad
\pi_2:=\langle P_1,P_3,P_4\rangle_{\Fq},\qquad
\pi_3:=\langle P_1,P_2,P_4\rangle_{\Fq}.
$$
Then $\chi_3:=\{\pi_1,\pi_2,\pi_3\}$ is a $3$-projective system of length $3$ in $\PG(3,q)$. Moreover,
$\pi_1\cup\pi_2\cup\pi_3$ contains all six edges of the tetrahedron. Since
the union of these six edges is a strong blocking set, also
$\pi_1\cup\pi_2\cup\pi_3$ is a strong blocking set. Therefore $\chi_3$ is an
additive strong blocking set, and it gives rise to a $[3,4/3,2]^3_q$ minimal additive code. Here the minimum distance is $2$, since at
most one of the three planes $\pi_1,\pi_2,\pi_3$ can be contained in a
hyperplane of $\PG(3,q)$; see
Figure~\ref{fig:tetrahedron-additive-sbs} for an illustration of the two different constructions.
This illustrates the trade-off
offered by the additive framework,  by allowing larger alphabets, one can group several points of the same strong blocking set into a single subspace, thereby reducing the length of the code. The price to pay
is that the resulting code is no longer $\F_q$-linear, but only additive over $\F_{q^h}$.

\begin{figure}[ht]
\centering
\begin{tikzpicture}[scale=0.9, line cap=round, line join=round]

% First tetrahedron: six edges
\begin{scope}[xshift=0cm]
\coordinate (P1) at (0,0);
\coordinate (P2) at (3,0);
\coordinate (P3) at (1.15,1.15);
\coordinate (P4) at (1.45,3);

\draw[very thick, red] (P1)--(P2);
\draw[very thick, blue] (P1)--(P3);
\draw[very thick, orange] (P1)--(P4);
\draw[very thick, green!60!black] (P2)--(P3);
\draw[very thick, purple] (P2)--(P4);
\draw[very thick, cyan!70!black] (P3)--(P4);

\fill (P1) circle (2pt) node[below left] {$P_1$};
\fill (P2) circle (2pt) node[below right] {$P_2$};
\fill (P3) circle (2pt) node[right] {$P_3$};
\fill (P4) circle (2pt) node[above] {$P_4$};

\node[red] at (1.5,-0.25) {$\ell_{12}$};
\node[blue] at (0.42,0.75) {$\ell_{13}$};
\node[orange] at (0.52,1.65) {$\ell_{14}$};
\node[green!60!black] at (2.35,0.75) {$\ell_{23}$};
\node[purple] at (2.45,1.65) {$\ell_{24}$};
\node[cyan!70!black] at (1.05,2.1) {$\ell_{34}$};

\node at (1.5,-0.8) {$\chi_2$: six lines};
\end{scope}

% Second tetrahedron: three faces
\begin{scope}[xshift=6cm]
\coordinate (Q1) at (0,0);
\coordinate (Q2) at (3,0);
\coordinate (Q3) at (1.15,1.15);
\coordinate (Q4) at (1.45,3);

\fill[red!35, opacity=0.55] (Q2)--(Q3)--(Q4)--cycle;
\fill[blue!35, opacity=0.45] (Q1)--(Q3)--(Q4)--cycle;
\fill[green!35, opacity=0.45] (Q1)--(Q2)--(Q4)--cycle;

\draw[thick] (Q1)--(Q2);
\draw[thick] (Q1)--(Q3);
\draw[thick] (Q1)--(Q4);
\draw[thick] (Q2)--(Q3);
\draw[thick] (Q2)--(Q4);
\draw[thick] (Q3)--(Q4);

\fill (Q1) circle (2pt) node[below left] {$P_1$};
\fill (Q2) circle (2pt) node[below right] {$P_2$};
\fill (Q3) circle (2pt) node[right] {$P_3$};
\fill (Q4) circle (2pt) node[above] {$P_4$};

\node[red!70!black] at (2.15,1.45) {$\pi_1$};
\node[blue!70!black] at (0.55,1.55) {$\pi_2$};
\node[green!50!black] at (1.55,0.52) {$\pi_3$};

\node at (1.5,-0.8) {$\chi_3$: three planes};
\end{scope}

\end{tikzpicture}
\caption{Two additive strong blocking sets in $\PG(3,q)$ obtained from the tetrahedron. On the left, the six edges give a $2$-projective system of length $6$; on the right, three faces give a $3$-projective system of length $3$.}
\label{fig:tetrahedron-additive-sbs}
\end{figure}
\end{example}

We now derive some lower bounds on $m(k,q,h)$. We start with the trivial dimension bound. Since an additive code of length $n$ over $\Fqh$ has $\Fq$-dimension at most $nh$, we have
$$
m(k,q,h)\geq \left\lceil \frac{k}{h}\right\rceil.
$$

The next lower bound follows from the geometric characterization of minimal additive codes. We will use the Jamison--Brouwer--Schrijver bound \cite{jamison1977covering, brouwer1978blocking} for affine blocking sets, which has been exploited to show that every strong blocking set in $\PG(k-1,q)$ has size at least $(k-1)(q+1)$; see \cite{alfarano2022three, heger2021short}.

\begin{theorem}\label{thm:lower-geometric}
Let $\chi=\{\pi_1,\ldots,\pi_n\}$ be an additive strong blocking set in
$\PG(k-1,q)$. For every $i\in[n]$, let $t_i$ be the vectorial dimension of
$\pi_i$. Then
$$
\sum_{i=1}^n \frac{q^{t_i}-1}{q-1}\geq (k-1)(q+1).
$$
In particular, if $\chi$ has length $n$ and $\dim(\pi_i)\leq h-1$ for every
$i\in[n]$, then
$$
n\geq
\left\lceil
\frac{(q^2-1)(k-1)}{q^h-1}
\right\rceil .
$$
\end{theorem}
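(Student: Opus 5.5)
The plan is to reduce the statement to the known lower bound on the size of strong blocking sets and then count points subspace by subspace. Set $S:=\bigcup_{i=1}^n \pi_i$. By the definition of an additive strong blocking set, $S$ is a strong blocking set in $\PG(k-1,q)$. I would invoke the bound $|S|\geq (k-1)(q+1)$ for every strong blocking set in $\PG(k-1,q)$, quoted just before the statement and due to \cite{alfarano2022three, heger2021short}.

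For completeness I would recall why this bound holds, since it is where the Jamison--Brouwer--Schrijver theorem enters. Fix a point $P\notin S$; such a point exists, otherwise $|S|$ is the whole space and the bound is trivial. Choose a hyperplane $H_\infty$ through $P$ meeting $S$ in a set that spans $H_\infty$. Strong blocking sets meet every hyperplane in a spanning set, so such $H_\infty$ exists as long as one is careful with the choice. The affine part $S\setminus H_\infty$ then blocks every affine hyperplane of $\mathrm{AG}(k-1,q)$. Jamison--Brouwer--Schrijver gives
$$|S\setminus H_\infty|\geq (k-1)(q-1)+1.$$
Iterating or combining with the count inside $H_\infty$ yields $(k-1)(q+1)$. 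Since this is already established in the literature cited by the paper, I would simply quote it; getting the choice of hyperplane right is the only delicate point, if one reproves it.

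Next comes the counting step. The point set of $\pi_i$ has exactly $\frac{q^{t_i}-1}{q-1}$ points, because its vectorial dimension is $t_i$. The union bound gives
$$(k-1)(q+1)\leq |S|\leq \sum_{i=1}^n |\pi_i| = \sum_{i=1}^n \frac{q^{t_i}-1}{q-1},$$
which is the first inequality. Multiplicities in the multiset $\chi$ only increase the right-hand side, so they cause no harm.

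For the "in particular" part, use $t_i\leq h$ to bound each summand by $\frac{q^h-1}{q-1}$. This gives
$$n\,\frac{q^h-1}{q-1}\geq (k-1)(q+1),$$
that is, $n\geq \frac{(k-1)(q+1)(q-1)}{q^h-1}=\frac{(q^2-1)(k-1)}{q^h-1}$. Taking the ceiling follows because $n$ is an integer. The only real content is the strong blocking set bound, which I would cite rather than reprove.
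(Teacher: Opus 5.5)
Your proposal is correct and is essentially the paper's proof: apply the known bound $|S|\geq (k-1)(q+1)$ to the strong blocking set $S=\bigcup_{i}\pi_i$, bound $|S|\leq\sum_i (q^{t_i}-1)/(q-1)$, and then use $t_i\leq h$. One caveat: your optional ``for completeness'' sketch does not by itself reach $(k-1)(q+1)$, so it is right that you rely on the citation, as the paper does. The affine bound plus $|S\cap H_\infty|\geq k-1$ only gives $(k-1)q+1$, and $S\cap H_\infty$ need not be a strong blocking set in $H_\infty$, so iterating is not justified.
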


\begin{proof}
By Theorem~\ref{thm:minimal_additive_sbs}, $\chi$ is an additive strong blocking set. Hence
$S:=\bigcup\limits_{i=1}^n \pi_i$ is a strong blocking set in $\PG(k-1,q)$. By the Jamison--Brouwer--Schrijver bound \cite{jamison1977covering, brouwer1978blocking} and \cite[Theorem 2.14]{alfarano2022three}, we have $|S|\geq (k-1)(q+1)$.
On the other hand, since $\pi_i$ has vectorial dimension $t_i$, it contains exactly
$(q^{t_i}-1)/(q-1)$ points. Therefore,
$$
|S|\leq \sum_{i=1}^n |\pi_i|
=
\sum_{i=1}^n \frac{q^{t_i}-1}{q-1}.
$$
This implies that
$$
\sum_{i=1}^n \frac{q^{t_i}-1}{q-1}\geq (k-1)(q+1).
$$
Since $t_i\leq h$ for every $i$, we have
$$
\frac{q^{t_i}-1}{q-1}\leq \frac{q^h-1}{q-1}.
$$
Hence
$$
(k-1)(q+1)\leq n\frac{q^h-1}{q-1},
$$
and so
$$
n\geq \frac{(q^2-1)(k-1)}{q^h-1}.
$$
Taking $n=m(k,q,h)$ gives
$$
m(k,q,h)\geq
\left\lceil
\frac{(q^2-1)(k-1)}{q^h-1}
\right\rceil .
$$
\end{proof}

\begin{corollary}\label{cor:combined-lower-bound}
For every prime power $q$ and all positive integers $k,h$, we have
$$
m(k,q,h)\geq
\max\left\{
\left\lceil \frac{k}{h}\right\rceil,
\left\lceil
\frac{(q^2-1)(k-1)}{q^h-1}
\right\rceil
\right\}.
$$
\end{corollary}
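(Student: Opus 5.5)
The statement is the conjunction of two lower bounds already available in the paper, so the plan is to verify each separately and take the maximum. Fix $q,k,h$ and put $n=m(k,q,h)$. This is well defined by Corollary~\ref{cor:existence}, so there is a nondegenerate $[n,k/h]_q^h$ minimal additive code $\mC$. We show that $n$ is at least each quantity in the maximum.

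\emph{First bound.} The code $\mC$ is a $k$-dimensional $\Fq$-subspace of $\Fqh^n$, which has $\Fq$-dimension $nh$. Hence $k\le nh$, and since $n$ is an integer, $n\ge\lceil k/h\rceil$. Equivalently, in geometric terms, the expanded matrix $\widetilde G$ has rank $k$ and only $nh$ columns.

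\emph{Second bound.} Let $\chi(\mC)=\{\pi_1,\ldots,\pi_n\}$ be an associated $h$-projective system. By Theorem~\ref{thm:minimal_additive_sbs}, minimality of $\mC$ means $\chi(\mC)$ is an additive strong blocking set in $\PG(k-1,q)$. Each $\pi_i$ satisfies $\dim(\pi_i)\le h-1$, so the second part of Theorem~\ref{thm:lower-geometric} applies. It gives $n\ge (q^2-1)(k-1)/(q^h-1)$, and integrality of $n$ lets us take the ceiling. Combining the two inequalities yields the maximum.

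There is no real obstacle here. The only delicate point is the degenerate case $k=1$, where Theorem~\ref{thm:lower-geometric} relies on the bound $(k-1)(q+1)$ for strong blocking sets. That bound reads $|S|\ge 0$ and holds trivially, so the second quantity is $0$ and the first bound dominates. I will note this explicitly so the cited Jamison--Brouwer--Schrijver input is only used for $k\ge 2$.
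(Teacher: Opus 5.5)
Your proposal is correct and matches the paper's (implicit) argument: the corollary is the conjunction of the trivial dimension bound $m(k,q,h)\geq\lceil k/h\rceil$, stated just before Theorem~\ref{thm:lower-geometric}, and the second part of that theorem, applied to the additive strong blocking set that Theorem~\ref{thm:minimal_additive_sbs} attaches to a shortest nondegenerate minimal additive code. Your side remark on $k=1$ is harmless, since the second quantity is then $0$.
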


\begin{remark}
The first inequality in Theorem~\ref{thm:lower-geometric} is stronger than the final lower bound on $m(k,q,h)$, since it takes into account the actual dimensions of the subspaces $\pi_i$. In particular, if many of the $\pi_i$ have vectorial dimension strictly smaller than $h$, then
$$
\sum_{i=1}^n \frac{q^{t_i}-1}{q-1}\geq (k-1)(q+1)
$$
may force a larger value of $n$ than the bound
$$
n\geq
\left\lceil
\frac{(q^2-1)(k-1)}{q^h-1}
\right\rceil .
$$
For $h=1$, the bound becomes $n\geq (k-1)(q+1)$, which is the classical lower bound for minimal linear codes.
\end{remark}

\subsection{A lower bound on the minimum distance}

We now derive a lower bound on the minimum distance of a minimal additive code.
This is again obtained from the geometric characterization in terms of additive
strong blocking sets, together with the affine form Jamison--Brouwer--Schrijver bound.

\begin{theorem}\label{thm:distance-lower-bound}
Let $\mC$ be a $[n,k/h,d]^h_q$ nondegenerate minimal additive code, and let
$\chi(\mC)=\{\pi_1,\ldots,\pi_n\}$ be its associated $h$-projective system in
$\PG(k-1,q)$. For every $i\in[n]$, let $t_i$ be the vectorial
dimension of $\pi_i$. Then, for every hyperplane $H$ of $\PG(k-1,q)$, one has
$$
\sum_{\pi_i\not\subseteq H} q^{t_i-1}\geq (k-1)(q-1)+1.
$$
In particular,
$$
d\geq
\left\lceil
\frac{(k-1)(q-1)+1}{q^{h-1}}
\right\rceil .
$$
\end{theorem}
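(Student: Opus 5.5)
Fix a hyperplane $H$ and set $S:=\bigcup_{i=1}^n\pi_i$. By Theorem~\ref{thm:minimal_additive_sbs}, $\chi(\mC)$ is an additive strong blocking set, so $S$ is a strong blocking set in $\PG(k-1,q)$. The plan is to show that $S\setminus H$ is a blocking set of the affine space $\mathrm{AG}(k-1,q)=\PG(k-1,q)\setminus H$ with respect to affine hyperplanes. Then apply the Jamison--Brouwer--Schrijver bound, which says that such a set has at least $(k-1)(q-1)+1$ points. Finally, compare $|S\setminus H|$ with the sum $\sum_{\pi_i\not\subseteq H}q^{t_i-1}$.

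\textbf{Step 1: the affine blocking property.} An affine hyperplane of $\mathrm{AG}(k-1,q)$ has the form $H'\setminus H$ for a projective hyperplane $H'\neq H$. Suppose some $H'\neq H$ satisfies $(H'\setminus H)\cap S=\emptyset$. Then $S\cap H'\subseteq H\cap H'$, so $S\cap H'\subseteq S\cap H$. The strong blocking property gives $H'=H$, a contradiction. Hence $S\setminus H$ meets every affine hyperplane, and the Jamison--Brouwer--Schrijver theorem \cite{jamison1977covering, brouwer1978blocking} gives $|S\setminus H|\geq (k-1)(q-1)+1$. This is the same mechanism used in \cite{alfarano2022three} to derive the $(k-1)(q+1)$ bound, and I expect it to be the main step to check carefully.

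\textbf{Step 2: counting.} Every point of $S\setminus H$ lies in some $\pi_i$ with $\pi_i\not\subseteq H$. For such an $i$, the intersection $\pi_i\cap H$ is a hyperplane of $\pi_i$, of vectorial dimension $t_i-1$. So $\pi_i\setminus H$ has
\[
\frac{q^{t_i}-1}{q-1}-\frac{q^{t_i-1}-1}{q-1}=q^{t_i-1}
\]
points. A union bound then gives
\[
(k-1)(q-1)+1\le |S\setminus H|\le \sum_{\pi_i\not\subseteq H}q^{t_i-1},
\]
which is the first claim.

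\textbf{Step 3: the distance bound.} Choose $H$ attaining the maximum number $n-d$ of elements of $\chi(\mC)$ contained in a hyperplane. Then exactly $d$ indices satisfy $\pi_i\not\subseteq H$. Each of them has $t_i\le h$, so $q^{t_i-1}\le q^{h-1}$, and therefore
\[
d\,q^{h-1}\ge (k-1)(q-1)+1.
\]
Since $d$ is an integer, taking the ceiling yields the stated bound.
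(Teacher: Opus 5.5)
Your proposal is correct and follows the paper's proof essentially step by step. The paper likewise shows that $S\setminus H$ is an affine blocking set, applies the affine Jamison--Brouwer--Schrijver bound, uses a union bound with $|\pi_i\setminus H|=q^{t_i-1}$, and then takes a hyperplane attaining $n-d$. The only difference is that you derive the affine blocking property of $S\setminus H$ directly from the strong blocking condition, whereas the paper simply cites it.
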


\begin{proof}
By Theorem~\ref{thm:minimal_additive_sbs}, $\chi(\mC)$ is an additive strong
blocking set. Hence
$$
S:=\bigcup_{i=1}^n \pi_i
$$
is a strong blocking set in $\PG(k-1,q)$.
Let $H$ be a hyperplane of $\PG(k-1,q)$. Since $S$ is a strong blocking set,
the set $S\setminus H$ is a blocking set with respect to the affine hyperplanes
of $\PG(k-1,q)\setminus H\simeq \mathrm{AG}(k-1,q)$; see, for instance, \cite{alon2024strong}.
By the affine Jamison--Brouwer--Schrijver bound, we get
$$
|S\setminus H|\geq (k-1)(q-1)+1.
$$
On the other hand,
$$
S\setminus H\subseteq \bigcup_{\pi_i\not\subseteq H}(\pi_i\setminus H).
$$
If $\pi_i\not\subseteq H$ and $\pi_i$ has vectorial dimension $t_i$, then
$\pi_i\cap H$ is a hyperplane of $\pi_i$. Hence $\pi_i\setminus H$ has exactly
$q^{t_i-1}$ points. Therefore
$$
|S\setminus H|
\leq
\sum_{\pi_i\not\subseteq H} q^{t_i-1}.
$$
This proves
$$
\sum_{\pi_i\not\subseteq H} q^{t_i-1}\geq (k-1)(q-1)+1.
$$
Now choose a hyperplane $H$ such that
$$
d=n-|\{i:\pi_i\subseteq H\}|.
$$
Then the number of indices $i$ such that $\pi_i\not\subseteq H$ is exactly
$d$. Since $t_i\leq h$ for every $i$, we obtain
$$
(k-1)(q-1)+1
\leq
\sum_{\pi_i\not\subseteq H} q^{t_i-1}
\leq
dq^{h-1}.
$$
Hence
$$
d\geq
\left\lceil
\frac{(k-1)(q-1)+1}{q^{h-1}}
\right\rceil .
$$
\end{proof}

\subsection{Asymptotic behaviour}
We study the asymptotic behaviour of $m(k,q,h)$ as $k\to+\infty$.

For an $[n,k/h,d]_q^h$ additive code, we denote by $R:=\frac{k}{n}$ the information rate and by $\delta:=\frac{d}{n}$ the relative minimum distance.
\begin{lemma}\label{lem:minimality_constraint}
    Let $\C$ be an $[n,k/h,d]_q^h$ minimal additive code. Then $d \geq \frac{k}{h}$ or, equivalently, $R \leq \delta h$.
\end{lemma}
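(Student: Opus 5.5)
Fix a minimum-weight codeword and show, by a dimension count, that a minimal code cannot contain many codewords supported inside its support. Let $c\in\mC$ with $\wt(c)=d$ and $S:=\supp(c)$, so $|S|=d$. Consider the $\Fq$-subspace
$$
W:=\{c'\in\mC : \supp(c')\subseteq S\},
$$
which contains $c$. Take the $\Fq$-linear map $\rho: W\to \F_{q^h}^{S}$, $c'\mapsto (c'_i)_{i\in S}$. It is injective, since elements of $W$ vanish outside $S$. Its image lies in an $\Fq$-space of dimension $hd$. On the other hand, the kernel $\mC_0:=\{c'\in\mC : c'_i=0 \ \forall i\in S\}$ of the full restriction map $\mC\to\F_{q^h}^S$ has complement of dimension at most $hd$. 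So the plan is to bound $\dim_{\Fq}\mC_0$ using minimality and then conclude $k\le hd+\dim\mC_0$.

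The first attempt is to show $\mC_0=0$ directly. That is false in general, because codewords can have disjoint supports. So I will instead run the classical trick for minimal linear codes: a minimal code cannot have a \emph{proper} sub-support codeword that is coordinatewise $\Fq$-proportional. Replace the single coordinate restriction by the following. Fix $i_0\in S$ and consider the map $\psi:\mC\to \F_{q^h}^{S}$ given by restriction to $S$. For a nonzero $x\in\ker\psi=\mC_0$, the codeword $c+x$ has $\supp(c+x)\supseteq S$ and $(c+x)_i=c_i$ on $S$. Then $c$ satisfies $\supp(c)\subseteq\supp(c+x)$ with all ratios equal to $1\in\Fq^*$. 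Minimality of the codeword $c+x$ (Definition~\ref{def:min_codewords}, applied with $c'=c$ inside $c+x$) forces $c=\lambda(c+x)$. Comparing coordinate $i_0$ gives $\lambda=1$, so $x=0$. Hence $\psi$ is injective on $\mC$, and $k=\dim_{\Fq}\mC\le\dim_{\Fq}\F_{q^h}^{S}=hd$. This gives $d\ge k/h$, and dividing by $n$ gives $R\le h\delta$.

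The key step, and the one to double-check, is applying minimality in the right direction. We use that $c+x$ is minimal and that $c$ sits under it with $\Fq^*$-ratios, rather than trying to use the minimality of $c$ itself. I also need $c+x\neq 0$, which holds since it agrees with $c\neq0$ on $S$. No other earlier results are needed beyond Definition~\ref{def:min_codewords}; alternatively, one could phrase the same argument via Proposition~\ref{prop:minimal_codeword_characterization}.
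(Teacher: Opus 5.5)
Your proof is correct and is essentially the paper's argument: in both, a nonzero codeword $x$ vanishing on $\supp(c)$ would place $c$ under $c+x$ with all ratios equal to $1$. That contradicts the minimality of $c+x$, so restriction to $\supp(c)$ is injective and $k\le hd$; the paper phrases this last step via the Singleton bound on the punctured code, which your direct count shows is unnecessary.

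One correction to the exposition: your remark that showing $\mC_0=0$ is ``false in general'' is wrong for minimal codes. Your own argument proves $\ker\psi=\mC_0=0$, that is, nonzero codewords of a minimal code never have disjoint supports, so that aside and the unused map $\rho$ on $W$ should be deleted.
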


\begin{proof}
    Take $c\in \C$ of minimal weight, so that $\wt(c)=d$. Puncture $\mC$ on the $n-d$ coordinates in which $c$ is zero, or equivalently, keep only the coordinates in $\supp(c)$. Denote the punctured code by $\mC'$. We claim that $\mC'$ is an $[d,k/h,d']_q^h$ additive code. Suppose, by contradiction, that $\dim_{\F_q}(\mC')<k$. Then there exists a nonzero codeword $x\in \mC$ which is zero on all coordinates of $\supp(c)$. Hence $
\supp(c)\subsetneq \supp(c+x)$ and  $\C$ is not  minimal. Now, by the Singleton bound applied on $\C'$, we obtain $$\frac{k}{h}\leq d-d'+1.$$ Since $d'\geq 1$, we get $d \geq \frac{k}{h}$.
\end{proof}

We now combine this with the classical asymptotic Plotkin bound. We apply it by viewing
an additive code $\mC\subseteq \F_{q^h}^n$ simply as a possibly nonlinear code over the
alphabet $\F_{q^h}$. Thus the alphabet size is $q^h$, and the size of the code is
$|\mC|=q^k$. This is the same bound that we get in Remark~\ref{rem:faithful_plotkin}.

\begin{theorem}\label{thm:asympt_rate}
We have
$$
\liminf_{k\to+\infty}\frac{m(k,q,h)}{k}
\geq
\frac{2q^h-1}{h(q^h-1)}.
$$
\end{theorem}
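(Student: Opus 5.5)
The plan is to combine the minimality constraint of Lemma~\ref{lem:minimality_constraint} with the classical asymptotic Plotkin bound for codes over an alphabet of size $q^h$, in the form recalled in Remark~\ref{rem:faithful_plotkin}. Write $\theta:=1-\frac{1}{q^h}=\frac{q^h-1}{q^h}$. For a general (not necessarily faithful) additive code $\mC\subseteq\F_{q^h}^n$ with $|\mC|=q^k$, I treat $\mC$ as a possibly nonlinear code over $\F_{q^h}$. The classical asymptotic Plotkin bound, written for the rate $R=k/n$, then gives the following for any sequence with $n_m\to+\infty$ and $\delta_m\to\delta$:
$$
\limsup_{m\to+\infty} R_m\le h\max\left\{0,\,1-\frac{\delta}{\theta}\right\}.
$$
This is the same inequality obtained from Theorem~\ref{thm:asymptotic_plotkin_ti} in the faithful case. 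The boundary case $\delta=\theta$ is covered by the usual puncturing argument, or alternatively by a continuity/monotonicity remark.

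First I choose a sequence $k_m\to+\infty$ realizing the $\liminf$ of $m(k,q,h)/k$. For each $m$ I let $\mC_m$ be a nondegenerate minimal $[n_m,k_m/h,d_m]_q^h$ additive code with $n_m=m(k_m,q,h)$. Since $n_m\ge k_m/h$, we have $n_m\to+\infty$. The rates satisfy $R_m\le h$ and the relative distances satisfy $\delta_m\le 1$, so after passing to a subsequence I may assume $R_m\to R$ and $\delta_m\to\delta$. It then suffices to show
$$
R\le \frac{h(q^h-1)}{2q^h-1}.
$$
Indeed, $m(k_m,q,h)/k_m=1/R_m\to 1/R$, with the convention $1/0=+\infty$, which makes the statement trivial when $R=0$.

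Next I apply the two inequalities. By Lemma~\ref{lem:minimality_constraint}, $R_m\le h\delta_m$ for every $m$, hence $\delta\ge R/h$. By the asymptotic Plotkin bound, $R\le h\max\{0,1-\delta/\theta\}$. If $R=0$ we are done. Otherwise $1-\delta/\theta>0$, and substituting $\delta\ge R/h$ gives $R\le h-R/\theta$. Rearranging,
$$
R\left(1+\frac{1}{\theta}\right)\le h,
\qquad
R\le\frac{h\theta}{1+\theta}=\frac{h(q^h-1)}{2q^h-1}.
$$
Taking reciprocals yields the claimed lower bound on the $\liminf$.

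The main obstacle is to justify the asymptotic Plotkin bound for arbitrary (possibly non-faithful) minimal additive codes, including the boundary case $\delta=\theta$. I would handle this by quoting the classical nonlinear asymptotic Plotkin bound over an alphabet of size $q^h$, which holds for every code with $|\mC|=q^k$ regardless of the dimensions $t_i$. Alternatively, Theorem~\ref{thm:asymptotic_plotkin_ti} together with a further subsequence making the $\nu_j$ converge also gives this bound, since having $t_i\le h$ can only decrease $\sum_j j\nu_j-M_\nu(\delta)$ relative to the faithful value.
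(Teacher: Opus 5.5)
Your proposal is correct and follows essentially the same route as the paper: you combine the minimality constraint $R\le h\delta$ from Lemma~\ref{lem:minimality_constraint} with the classical asymptotic Plotkin bound for codes over an alphabet of size $q^h$, and read off the best rate at the crossing point $\delta=\frac{q^h-1}{2q^h-1}$. You are more careful than the paper in passing explicitly to convergent subsequences and in handling the cases $R=0$ and $\delta=1-\frac{1}{q^h}$, which the paper leaves implicit; your side remark also holds, since $j/(1-q^{-j})$ is increasing in $j$ and $\beta\le 1-q^{-h}$, so the bound of Theorem~\ref{thm:asymptotic_plotkin_ti} is never weaker than the faithful one.
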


\begin{proof}
Let $(\mC_m)_m$ be a sequence of minimal additive codes with parameters
$[n_m,k_m/h,d_m]^h_q$, where $k_m\to+\infty$. Set $R_m:=k_m/n_m$ and
$\delta_m:=d_m/n_m$. By the previous lemma, minimality implies $R_m\leq h\delta_m$.
We now regard $\mC_m$ simply as a possibly nonlinear code in $\F_{q^h}^{n_m}$. Its
alphabet has size $q^h$ and its cardinality is $|\mC_m|=q^{k_m}$. Hence the classical
asymptotic Plotkin bound gives
$$
R_m\leq h-\frac{h\delta_m}{1-\frac{1}{q^h}}+o(1)
$$
whenever $0\leq \delta_m<1-\frac{1}{q^h}$, while $R_m=o(1)$ if
$\delta_m>1-\frac{1}{q^h}$.

Thus, asymptotically, the rate of a minimal additive code is bounded above by the
minimum of the two functions $h\delta$ and
$h-\frac{h\delta}{1-\frac{1}{q^h}}$. This is illustrated in
Figure~\ref{fig:asymptotic_bounds_minimal_additive}.

The largest possible value of this minimum is attained when the two functions are equal.
Thus we solve
$$
h\delta=
h-\frac{h\delta}{1-\frac{1}{q^h}},
$$
which gives
$\delta=\frac{1-\frac{1}{q^h}}{2-\frac{1}{q^h}}
=\frac{q^h-1}{2q^h-1}$. Hence the corresponding value of the rate is
$R=h\delta=h\frac{q^h-1}{2q^h-1}$. Therefore every asymptotic family of minimal
additive codes satisfies
$$
\limsup_{m\to+\infty}R_m
\leq
h\frac{q^h-1}{2q^h-1}.
$$
Equivalently,
$$
\liminf_{m\to+\infty}\frac{n_m}{k_m}
\geq
\frac{2q^h-1}{h(q^h-1)}.
$$
Taking $n_m=m(k_m,q,h)$ gives the desired lower bound on
$\liminf_{k\to+\infty}m(k,q,h)/k$.

\begin{figure}[ht]
\centering
\begin{tikzpicture}
\begin{axis}[
    width=11.5cm,
    height=6.7cm,
    axis lines=left,
    xmin=0, xmax=1.02,
    ymin=0, ymax=2.1,
    xtick={0.8889,1},
    xticklabels={$1-\frac{1}{q^h}$,$1$},
    ytick={0,2},
    yticklabels={$0$,$h$},
    xlabel={$\delta$},
    ylabel={$R$},
    legend style={at={(0.7,0.5)},anchor=west},
]

% Intersection
\def\xint{0.4706}

% Zone remplie (gauche)
\addplot[fill=red!20, draw=none, domain=0:\xint, forget plot] {2*x} \closedcycle;

% Zone remplie (droite)
\addplot[fill=red!20, draw=none, domain=\xint:1, forget plot] {2 - 2.25*x} \closedcycle;

% Minimality constraint
\addplot[blue!60, thick] {2*x};
\addlegendentry{Minimality constraint Lemma~\ref{lem:minimality_constraint}}

% Plotkin-type bound
\addplot[green!60!black, thick] {2 - 2.25*x};
\addlegendentry{Plotkin bound Remark~\ref{rem:faithful_plotkin}}

% Coordonnées intersection
\def\yint{0.9412}

% Ligne horizontale pointillée
\addplot[dashed, thick, forget plot] coordinates {(0,\yint) (\xint,\yint)};

\end{axis}
\end{tikzpicture}
\caption{Comparison of the minimality constraint with the Plotkin-type bound. The green line denotes the classical Plotkin bound whereas the blue line corresponds to the bound arising from the minimality condition. The shaded red region represents the allowed $(\delta, R)$-region determined by the intersection of these two constraints.}
\label{fig:asymptotic_bounds_minimal_additive}
\end{figure}

\end{proof}

\begin{remark}
The lower bound from Theorem~\ref{thm:lower-geometric}
is not asymptotically tight for $h\geq 2$. Indeed, Theorem~\ref{thm:asympt_rate}
gives
$$
\liminf_{k\to+\infty}\frac{m(k,q,h)}{k}
\geq
\frac{2q^h-1}{h(q^h-1)}.
$$
On the other hand, Theorem~\ref{thm:lower-geometric} gives
$$
m(k,q,h)\geq
\left\lceil
\frac{(q^2-1)(k-1)}{q^h-1}
\right\rceil,
$$
whose asymptotic coefficient is
$$
\frac{q^2-1}{q^h-1}.
$$
For every $h\geq 2$, one has
$$
\frac{2q^h-1}{h(q^h-1)}
>
\frac{q^2-1}{q^h-1}.
$$
Indeed, this inequality is equivalent to
$$
2q^h-1>h(q^2-1).
$$
For $h=2$, this becomes $2q^2-1>2q^2-2$, which is clear. If $h\geq 3$, then
$q\geq 2$ gives $2q^{h-2}\geq 2^{h-1}\geq h$, and hence $2q^h\geq hq^2$.
Therefore,
$$
2q^h-1-h(q^2-1)=2q^h-hq^2+h-1\geq h-1>0.
$$
Thus the asymptotic lower bound in Theorem~\ref{thm:asympt_rate}
is strictly stronger than the bound in Theorem~\ref{thm:lower-geometric} for
every $h\geq 2$. In particular, the bound in Theorem~\ref{thm:lower-geometric}
cannot be asymptotically tight in the additive setting when $h\geq 2$.
\end{remark}

\begin{remark}
The Plotkin bound used in the proof of Theorem~\ref{thm:asympt_rate} is the classical Plotkin bound
for possibly nonlinear codes over the alphabet $\F_{q^h}$, which we have obtained in Remark~\ref{rem:faithful_plotkin} for faithful codes. Indeed, in that case
$\nu_h=1$ and $\nu_j=0$ for every $j<h$, so $\beta=1-1/q^h$ and, for
$0\leq\delta<\beta$, one has
$M_\nu(\delta)=h\delta/(1-1/q^h)$. Hence Theorem~\ref{thm:asymptotic_plotkin_ti}
gives
$$
R\leq h-M_\nu(\delta)=h-\frac{h\delta}{1-\frac{1}{q^h}},
$$
which is precisely the asymptotic Plotkin bound used above.
For non-faithful additive codes, Theorem~\ref{thm:asymptotic_plotkin_ti} may give a
sharper estimate. 
Indeed, if a sequence of minimal additive codes has  dimension distribution $\nu=(\nu_1,\ldots,\nu_h)$, then the refined Plotkin bound gives
$R\leq \sum_{j=1}^h j\nu_j-M_\nu(\delta)$ whenever $\delta<\beta$. Combining this with
the constraint $R\leq h\delta$ gives an asymptotic rate bound depending on $\nu$. In particular, for families with many coordinates of dimension strictly smaller
than $h$, this may improve the lower bound on the asymptotic length. We illustrate this in the following example.
\end{remark}

\begin{example}
Let $q=2$ and $h=2$, and consider a sequence of additive codes whose dimension
distribution is given by $\nu_1=\nu_2=1/2$. Then
$\beta=\frac12(1-\frac12)+\frac12(1-\frac14)=5/8$. From the definition of
$M_\nu(\delta)$, one obtains
$$
M_\nu(\delta)=
\begin{cases}
\dfrac{8}{3}\delta, & 0\leq \delta\leq \dfrac38,\\[4pt]
2\delta+\dfrac14, & \dfrac38\leq \delta\leq \dfrac58.
\end{cases}
$$
Indeed, $M_\nu(\delta)$ is obtained by maximizing $y_1+2y_2$ under the constraints
$0\leq y_1,y_2\leq 1/2$ and $\frac12y_1+\frac34y_2\leq\delta$. For small $\delta$, the
maximum is achieved by taking $y_2$ as large as possible, giving
$M_\nu(\delta)=8\delta/3$ until $y_2=1/2$, namely until $\delta=3/8$. After that, one has
$y_2=1/2$, and the remaining contribution comes from increasing $y_1$, giving
$M_\nu(\delta)=2\delta+1/4$.
Since $\sum_{j=1}^2 j\nu_j=3/2$, Theorem~\ref{thm:asymptotic_plotkin_ti} gives
$$
R\leq
\begin{cases}
\dfrac32-\dfrac{8}{3}\delta, & 0\leq \delta\leq \dfrac38,\\[4pt]
\dfrac54-2\delta, & \dfrac38\leq \delta\leq \dfrac58.
\end{cases}
$$
Together with the minimality constraint $R\leq 2\delta$, the intersection is given by
$2\delta=3/2-8\delta/3$, hence $\delta=9/28$ and $R=9/14$. Thus, for such a
non-faithful family, one obtains the asymptotic constraint $n/k\geq14/9$, which is
stronger than the constraint obtained from the classical Plotkin bound over $\F_4$. This
comparison is illustrated in Figure~\ref{fig:nonfaithful_refined_plotkin}.

\begin{figure}[ht]
\centering
\begin{tikzpicture}
\begin{axis}[
    width=11.5cm,
    height=6.7cm,
    xmin=0, xmax=0.85,
    ymin=0, ymax=2.15,
    axis lines=left,
    xlabel={$\delta$},
    ylabel={$R$},
    xtick={0,0.3214285714,0.375,0.625,0.75},
    xticklabels={$0$,$\frac{9}{28}$,$\frac38$,$\beta=\frac58$,$\frac34$},
    ytick={0,0.6428571428,1.5,2},
    yticklabels={$0$,$\frac{9}{14}$,$\frac32$,$2$},
    legend style={at={(0.98,0.82)},anchor=north east},
    samples=200,
]

% Minimality bound R = 2 delta
\addplot[very thick, blue, domain=0:0.85] {2*x};
\addlegendentry{Minimality constraint Lemma~\ref{lem:minimality_constraint}}

% Classical Plotkin bound over F_4
\addplot[very thick, dashed, black, domain=0:0.75] {2-(8/3)*x};
\addlegendentry{Plotkin bound Remark~\ref{rem:faithful_plotkin} over $\F_4$}

% Refined Plotkin bound for nu_1=nu_2=1/2
\addplot[very thick, red, domain=0:0.375] {1.5-(8/3)*x};
\addlegendentry{Plotkin-type bound Theorem~\ref{thm:asymptotic_plotkin_ti}}

\addplot[very thick, red, domain=0.375:0.625] {1.25-2*x};

% Zero part after beta
\addplot[very thick, red, domain=0.625:0.85] {0};

% Admissible region below both minimality and refined Plotkin
\addplot[
    fill=red,
    fill opacity=0.18,
    draw=none
]
coordinates {
    (0,0)
    (0.3214285714,0.6428571428)
    (0.375,0.5)
    (0.625,0)
    (0,0)
};

% Intersection of minimality and refined Plotkin
\addplot[dotted, thick, black] coordinates {(0.3214285714,0) (0.3214285714,0.6428571428)};
\addplot[dotted, thick, black] coordinates {(0,0.6428571428) (0.3214285714,0.6428571428)};
\node at (axis cs:0.335,0.70) [anchor=west]
{$\left(\frac{9}{28},\frac{9}{14}\right)$};

\end{axis}
\end{tikzpicture}
\caption{Comparison between the minimality constraint and the Plotkin-type bounds for $q=2$, $h=2$. The dashed black line is the classical Plotkin bound over $\F_4$, while the red line is the refined Plotkin bound for the non-faithful distribution $\nu_1=\nu_2=1/2$. The shaded red region is the range allowed simultaneously by minimality and the refined Plotkin bound.}
\label{fig:nonfaithful_refined_plotkin}
\end{figure}
\end{example}

\section{Conclusions and open questions}\label{sec:conclusion}

In this paper, we initiated the study of minimal additive codes from a finite-geometric point of view. The main idea is that, while linear codes are
described by projective systems of points, additive codes are described by projective systems of subspaces. This naturally leads to the notion of an additive strong blocking set, namely an $h$-projective system $\{\pi_1,\ldots,\pi_n\}$ whose union is a strong blocking set. We proved that this is the correct geometric object for minimal additive codes. More precisely, we established a one-to-one correspondence between equivalence classes of nondegenerate minimal additive codes and equivalence classes of additive strong blocking sets. This extends the classical
correspondence between minimal linear codes and strong blocking sets in the case in which $h=1$, and it also contains the theory of outer minimal codes as the special case in which the associated subspaces arise from field reduction.
We also studied several aspects of additive codes and minimal additive codes. We derived a total-weight formula depending on the vectorial dimensions of the coordinate subspaces, and obtained a refined Plotkin-type bound together with an asymptotic version. We applied these results to one-weight additive codes, obtaining faithful one-weight examples which are not equivalent to $\Fqh$-linear codes.
For minimal additive codes, we introduced the parameter $m(k,q,h)$ and derived existence results, geometric constructions, lower bounds, upper bounds, a minimum-distance bound, and asymptotic estimates.

Several natural questions remain open. We identify but a few.

\begin{enumerate}
    \item Determine sharper bounds for $m(k,q,h)$. Even for small values of $h$, it would be interesting to understand whether the shortest minimal additive codes arise from known constructions of strong
    blocking sets, or from new additive configurations.
    \item Construct small additive strong blocking sets which do not arise from
    field reduction.  Outer strong blocking sets give additive strong blocking sets whose subspaces belong to a Desarguesian spread. However, in the additive setting, the subspaces may have dimension smaller than $h-1$, or may have dimension $h-1$ without lying in a common Desarguesian     spread. Explicit families exploiting this flexibility could lead to shorter     minimal additive codes than those coming from the $\Fqh$-linear setting.
    \item Study one-weight additive codes and their role as minimal additive
    codes. Every one-weight additive code is minimal, and the examples constructed in
    this paper show that there exist faithful one-weight additive codes which
    are not equivalent to any $\Fqh$-linear code. It would be interesting to
    classify such examples, or at least to identify further geometric
    constructions of one-weight additive codes outside the linear setting.
    \item Investigate possible applications of minimal additive codes.
    Minimal linear codes have applications in secret sharing schemes, while
    additive codes naturally appear in quantum coding theory. It would be
    interesting to understand whether minimal additive codes, or particular
    families of additive strong blocking sets, have applications in these or
    related contexts.
\end{enumerate}

\bigskip

\bigskip

\bibliographystyle{alpha}
\bibliography{references.bib} 

@article{ball2020additive,
  title={On additive {MDS} codes over small fields},
  author={Ball, Simeon and Gamboa, Guillermo and Lavrauw, Michel},
  journal={Advances in Mathematics of Communications},
  volume={17},
  number={4},
  pages={828--844},
  year={2023},
  publisher={American Institute of Mathematical Sciences}
}

@article{jamison1977covering,
  title={Covering finite fields with cosets of subspaces},
  author={Jamison, Robert E},
  journal={Journal of Combinatorial Theory, Series A},
  volume={22},
  number={3},
  pages={253--266},
  year={1977},
  publisher={Elsevier}
}

@book{tsfasman2007algebraic,
  title={Algebraic Geometric Codes: Basic Notions},
  author={Tsfasman, Michael A. and Vl{\u{a}}duț, Serge G. and Nogin, Dmitry},
  volume={1},
  year={2007},
  publisher={American Mathematical Soc.}
}

@article{adriaensen2023additive,
  title={On additive {MDS} codes with linear projections},
  author={Adriaensen, Sam and Ball, Simeon},
  journal={Finite Fields and Their Applications},
  volume={91},
  pages={102255},
  year={2023},
  publisher={Elsevier}
}

@article{tang2021full,
  title={Full characterization of minimal linear codes as cutting blocking sets},
  author={Tang, Chunming and Qiu, Yan and Liao, Qunying and Zhou, Zhengchun},
  journal={IEEE Transactions on Information Theory},
  volume={67},
  number={6},
  pages={3690--3700},
  year={2021},
  publisher={IEEE}
}

@article{heger2021short,
  title={Short minimal codes and covering codes via strong blocking sets in projective spaces},
  author={H{\'e}ger, Tam{\'a}s and Nagy, Zolt{\'a}n L{\'o}r{\'a}nt},
  journal={IEEE Transactions on Information Theory},
  volume={68},
  number={2},
  pages={881--890},
  year={2021},
  publisher={IEEE}
}

@article{alfarano2022three,
  title={Three combinatorial perspectives on minimal codes},
  author={Alfarano, Gianira N. and Borello, Martino and Neri, Alessandro and Ravagnani, Alberto},
  journal={SIAM Journal on Discrete Mathematics},
  volume={36},
  number={1},
  pages={461--489},
  year={2022},
  publisher={SIAM}
}

@article{alfarano2022geometric,
  title={A geometric characterization of minimal codes and their asymptotic performance},
  author={Alfarano, Gianira N. and Borello, Martino and Neri, Alessandro},
  journal={Advances in Mathematics of Communications},
  volume={16},
  number={1},
  pages={115--133},
  year={2022},
  publisher={Advances in Mathematics of Communications}
}

@article{ball2025additive,
  title={Additive codes from linear codes},
  author={Ball, Simeon and Popatia, Tabriz},
  journal={IEEE Transactions on Information Theory},
  volume={72},
  number={1},
  pages={342--345},
  year={2025},
  publisher={IEEE}
}

@article{bartoli2025long,
  title={Long {QMDS} additive code},
  author={Bartoli, Daniele and Giannoni, Alessandro and Marino, Giuseppe and Zhou, Yue},
  journal={arXiv preprint arXiv:2509.03186},
  year={2025}
}

@article{bartoli2023small,
  title={Small strong blocking sets by concatenation},
  author={Bartoli, Daniele and Borello, Martino},
  journal={SIAM Journal on Discrete Mathematics},
  volume={37},
  number={1},
  pages={65--82},
  year={2023},
  publisher={SIAM}
}

@article{kurz2024additive,
  title={Additive codes attaining the {G}riesmer bound},
  author={Kurz, Sascha},
  journal={arXiv preprint arXiv:2412.14615},
  year={2024}
}

@article{alfarano2024outer,
  title={Outer strong blocking sets},
  author={Alfarano, Gianira N. and Borello, Martino and Neri, Alessandro},
  journal={The Electronic Journal of Combinatorics},
  year={2024}
}

@article{ball2025griesmer,
  title={Griesmer type bounds for additive codes over finite fields, integral and fractional {MDS} codes},
  author={Ball, Simeon and Lavrauw, Michel and Popatia, Tabriz},
  journal={Designs, Codes and Cryptography},
  volume={93},
  number={1},
  pages={175--196},
  year={2025},
  publisher={Springer US New York}
}

@article{alon2024strong,
  title={Strong blocking sets and minimal codes from expander graphs},
  author={Alon, Noga and Bishnoi, Anurag and Das, Shagnik and Neri, Alessandro},
  journal={Transactions of the American Mathematical Society},
  volume={377},
  number={08},
  pages={5389--5410},
  year={2024}
}

@article{dastbasteh2025polynomial,
  title={Polynomial representation of additive cyclic codes and new quantum codes},
  author={Dastbasteh, Reza and Shivji, Khalil},
  journal={Advances in Mathematics of Communications},
  volume={19},
  number={1},
  pages={49--68},
  year={2025},
  publisher={Advances in Mathematics of Communications}
}

@article{dastbasteh2024new,
  title={New quantum codes from self-dual codes over $\mathbb{F}_4$},
  author={Dastbasteh, Reza and Lison{\v{e}}k, Petr},
  journal={Designs, Codes and Cryptography},
  volume={92},
  number={3},
  pages={787--801},
  year={2024},
  publisher={Springer}
}

@article{d2026generalized,
  title={Generalized {H}amming weights of additive codes and geometric counterparts},
  author={D’haeseleer, Jozefien and Kurz, Sascha},
  journal={Designs, Codes and Cryptography},
  volume={94},
  number={7},
  pages={142},
  year={2026},
  publisher={Springer}
}

@article{brouwer1978blocking,
  title={The blocking number of an affine space},
  author={Brouwer, Andries and Schrijver, Alexander},
  journal={Journal of Combinatorial Theory Series A},
  volume={24},
  number={2},
  pages={251--253},
  year={1978},
  publisher={Elsevier}
}

@article{bishnoi2024blocking,
  title={Blocking sets, minimal codes and trifferent codes},
  author={Bishnoi, Anurag and D'haeseleer, Jozefien and Gijswijt, Dion and Potukuchi, Aditya},
  journal={Journal of the London Mathematical Society},
  volume={109},
  number={6},
  pages={e12938},
  year={2024},
  publisher={Wiley Online Library}
}

@article{grassl2021algebraic,
  title={Algebraic quantum codes: linking quantum mechanics and discrete mathematics},
  author={Grassl, Markus},
  journal={International Journal of Computer Mathematics: Computer Systems Theory},
  volume={6},
  number={4},
  pages={243--259},
  year={2021},
  publisher={Taylor \& Francis}
}

@inproceedings{massey1993minimal,
  title={Minimal codewords and secret sharing},
  author={Massey, James L.},
  booktitle={Proceedings of the 6th joint Swedish-Russian international workshop on information theory},
  pages={276--279},
  year={1993}
}

@article{Lavrauw2028Field,
  title={Field reduction and linear sets in finite geometry},
  author={Lavrauw, Michel and Van de Voorde, Geertrui},
  journal={Contemporary Mathematics},
  year={2015}
}

@article{fancsali2014lines,
  title={Lines in higgledy-piggledy arrangement},
  author={Fancsali, Szabolcs and Sziklai, P{\'e}ter},
  journal={The Electronic Journal of Combinatorics},
  pages={P2--56},
  year={2014}
}

\end{document}